\def\ps@pprintTitle{%
 \let\@oddhead\@empty
 \let\@evenhead\@empty
 \def\@oddfoot{}%
 \let\@evenfoot\@oddfoot}
\newtheorem{theorem}{Theorem}[section]
\newtheorem{corollary}[theorem]{Corollary}
\newtheorem{lemma}[theorem]{Lemma}
\newtheorem{proposition}[theorem]{Proposition}
\newenvironment{proof}[1][Proof]{\textbf{#1.} }{\ \rule{0.5em}{0.5em}}
\newcommand{\R}{\mathbb{R}}
\def\FISHER#1{\mathbf{F}({#1})}
\begin{document}
\begin{frontmatter}
\title{Scalable Information Inequalities for Uncertainty Quantification} 
\author[UMASS]{Markos A. Katsoulakis}\ead{markos@math.umass.edu}
\author[UMASS]{Luc Rey-Bellet}\ead{luc@math.umass.edu}
\author[UMASS]{Jie Wang}\ead{wang@math.umass.edu}
\address[UMASS]{Department of Mathematics and Statistics, University of Massachusetts Amherst, Amherst, MA 01003, USA}  
\date{\today}
\begin{abstract}
In this paper we demonstrate the only available scalable information bounds for quantities of interest of  high dimensional probabilistic  models. 
Scalability of inequalities   allows us to (a)  obtain uncertainty quantification  bounds for quantities of interest  in the large degree of freedom  limit and/or at long time regimes; (b)  assess the impact of large model perturbations as in nonlinear response regimes  in statistical mechanics; (c) address model-form uncertainty, i.e. compare different extended models and corresponding quantities of interest. We demonstrate some of these properties by  deriving robust uncertainty quantification bounds for phase diagrams in statistical mechanics models. 
\end{abstract}
\bigskip
\begin{keyword}
Kullback Leibler divergence, information metrics, uncertainty quantification, statistical mechanics,
 high dimensional systems, nonlinear response, phase diagrams
\end{keyword}
\end{frontmatter}
%\maketitle
%\tableofcontents 

%%%%%%%%%%%%%%%%%%%%%%%%%%%%%%%%%%%%%%%%%%%%%%%%%%%%%%%%%%%%%%%
\section{Introduction}
Information Theory  provides  both mathematical  methods and practical 
computational tools to  construct probabilistic models in a principled 
manner, as well as the means  to assess their validity, \cite{cover}. One of the key 
mathematical  objects of  information theory is the concept of  
information metrics between probabilistic models. Such concepts of 
distance between models  are not always metrics in the strict 
mathematical sense, in which case they are called divergences, and  
include  the relative entropy, also known as the Kullback-Leibler 
divergence, the total variation and the  Hellinger metrics, the 
$\chi^2$ divergence, the F-divergence, and  the R$\acute{e}$nyi  divergence,
\cite{tsybakov2008introduction}. For example, the relative entropy 
between two probability distributions 
$P=P(x)$ and $Q=Q(x)$  on  $\R^N$ is defined as 
\begin{equation}
\label{relent:intro}
R(Q \mid \mid P)
=\int_{\R^N} \log\left(\frac{{Q(x)}}{{P(x)}}\right)\, Q(x) dx\, ,
\end{equation}
when the integral exists. The relative entropy is not a metric but it is a divergence, that is  
it satisfies the properties 
(i) $R(Q\mid\mid P)\geq0$,  (ii) $R(Q\mid\mid P)=0$ if and only  if
$P=Q$ a.e. 

We may for example think of the  model $Q$ as an approximation, or a 
surrogate model for another complicated and possibly inaccessible 
model $P$; alternatively we may consider the model $Q$ as a misspecification  of the true model $P$.  When measuring  model discrepancy  between the two models 
$P$ and $Q$, tractability depends critically on the type  of distance 
used between models.  In that respect, the relative entropy has very 
convenient 
analytic and computational properties, in particular regarding  
to the scaling properties of the system size $N$ which could represent 
space and/or time.  Obtaining bounds which are valid for high dimensional ($N \gg 1$) or spatially 
extended systems and/or long time regimes is the main topic of the 
paper and we will discuss  these properties in depth in the 
upcoming sections. 

Information metrics provides systematic and practical tools for 
building  approximate statistical models of reduced complexity through  
variational inference methods \cite{MacKay:2003,bishop,stuartKL}  for  
machine learning \cite{Wainwright:2008,Blei:2013,bishop} and  
coarse-graining  of complex  systems \cite{Shell2008,shell2010,NoidReview:2011,EspanolZuniga2011,Bilionis:2012,zabaras2013,PK2013,Foley2015,KP2016}. 
Variational inference relies on optimization problems such as 
\begin{equation}
\label{variational:intro}
\min_{Q\in \mathcal{Q}} R(P\mid\mid Q)\,  \quad\mbox{or}\quad   \min_{Q \in \mathcal{Q}} R(Q \mid\mid P)
\end{equation}
where $\mathcal{Q}$ is a class of simpler, computationally more tractable 
probability models than $P$. Subsequently, the optimal solution $Q^*$ 
of \eqref{variational:intro} replaces $P$ for estimation, simulation 
and  prediction purposes. The choice of order in  $P$ and $Q$ in 
\eqref{variational:intro} can be significant and depends on 
implementation methods, availability of data and the specifics of each 
application, e.g. \cite{MacKay:2003,bishop,PK2013,stuartKL}. In 
the case of coarse-graining the class of coarse-grained 
models $\mathcal{Q}$ will  also have fewer degrees of freedom than the 
model $P$, and an additional projection operator is  needed  in the 
variational principle \eqref{variational:intro}, see for instance 
\cite{Shell2008, KP2016}. In addition, information metrics provide  
fidelity measures in  model reduction, \cite{KV2003,majdaabramov2005, KT2006, KPRT1, majdageshgfidelity,KKP2013,majdatong:info_barriers}, sensitivity metrics  for   uncertainty quantification,  \cite{liu2006sens,ludtke2008sens,Komorowski,majdageshgsens,pantazis2013relative,lam:robustsens} and   discrimination criteria in model selection 
\cite{burnhammodelselec,kitagawa2008modelselec}. 
For instance, for the sensitivity analysis of parametrized probabilistic models $P^\theta=P^\theta(X)$, $\theta \in \Theta$ 
the relative entropy $R(P^\theta \mid \mid P^{\theta+\epsilon})$ 
%\begin{equation}
%\label{sensitivity:intro}
%\RELENT{P^\theta}{P^{\theta+\epsilon}}
%\end{equation}
measures the  loss of information due to an error in parameters in 
the direction of the vector $\epsilon \in \Theta$.  Different 
directions in parameter space provide a ranking of the sensitivities. 
Furthermore, when $\epsilon \ll 1$ we can also consider the quadratic  
approximation $R(P^\theta \mid \mid P^{\theta+\epsilon})=\epsilon 
\FISHER{P^\theta} \epsilon^\top+ O(|\epsilon|^3)$ where 
$\FISHER{P^\theta}$ is the Fisher Information matrix, 
\cite{majdageshgsens,Komorowski,pantazis2013relative}.

%of the relative entropy by a quadratic form involving the Fisher Information Matrix, $\RELENT{P^\theta}{P^{\theta+\epsilon}}=\epsilon \FISHER{P^\theta} \epsilon^\top+ O(|\epsilon|^3)$ 

%$\FISHER{P^\theta}$%=\int_{\R^N}\nabla \log P^\theta {\nabla \log P^\theta}^T P^\theta(X)dX$
%, \cite{majdageshgsens,Komorowski,pantazis2013relative} ({\sf maybe we do not need (3)}) 
%\begin{equation}
%\label{sensitivity:fim:intro}
%\RELENT{P^\theta}{P^{\theta+\epsilon}}=\epsilon \FISHER{P^\theta} \epsilon^\top+ O(|\epsilon|^3)
%\end{equation}

%\bigskip

It is natural and useful to approximate, perform model selection and/or
sensitivity analysis in terms of information theoretical metrics between 
probability distributions.  However, one is often interested in 
assessing model approximation, fidelity  or sensitivity  on  
concrete quantities of interest and/or  statistical estimators. 
More specifically, suppose $P$ and $Q$ are  two probability measures 
and let $f=f(X)$ be some quantity of interest  or statistical 
estimator.  In variational inference one takes $Q=Q^*$ to be 
the solution of the optimization problem \eqref{variational:intro}, 
while in the context of sensitivity analysis we set $P=P^\theta$ and 
$Q=P^{\theta+\epsilon}$.
%
%principle \eqref{variational:intro}, we can consider $Q=Q^*$ as the solution of the corresponding optimization, while in the context of sensitivity analysis \eqref{sensitivity:intro} we may set $P=P^\theta$ and $Q=P^{\theta+\epsilon}$. 
We then measure the discrepancy  between  models $P$ and $Q$ 
with respect to  the Quantity of Interest (QoI) $f$ by considering
\begin{equation}
\label{observables:intro}
 E_Q(f) -  E_P(f)\, .
\end{equation}
Our main mathematical goal is to understand how to transfer 
quantitative results on information metrics into bounds 
for quantities of interest in  \eqref{observables:intro}. 
In a statistics context, $f$ could be an unbiased  statistical 
estimator for model $P$  and thus \eqref{observables:intro} is the 
estimator bias when using model $Q$ instead of $P$. 
%where the bias may 
%arise from the approximate inference in \eqref{variational:intro}. 
%This bias is for instance the result of the approximate inference procedure in \eqref{variational:intro} and we seek to quantify it in terms of the relative entropy $\RELENT{Q}{P}$ or any other information metric.

In this direction, information inequalities can provide a method to relate quantities of 
interest \eqref{observables:intro} and information metrics \eqref{relent:intro}, a classical example being the Csiszar-Kullback-Pinsker (CKP) inequality, \cite{tsybakov2008introduction}:
\begin{equation}
\label{CKP:intro}
| E_Q(f) -  E_P(f)|
\leq ||f||_\infty \sqrt{2 R(Q\mid\mid P)}
\end{equation}
where $||f||_\infty=\sup_{X \in \R^N}|f(X)|$. In other words relative entropy  controls how 
large the model discrepancy \eqref{observables:intro} can become  for 
the quantity of interest $f$. More such inequalities involving other probability metrics such as 
Hellinger distance, $\chi^2$ and R$\acute{e}$nyi  divergences are discussed in the 
subsequent sections. 

In view of \eqref{CKP:intro} and other such inequalities, a natural 
question is whether these are sufficient to assess the fidelity 
of complex systems models.  In particular complex  systems such as 
molecular  or multi-scale models are typically high dimensional in  the  
degrees of freedom and/or  often require controlled fidelity (in 
approximation, uncertainty quantification, etc) at long time regimes;  
for instance, in building coarse-grained models for efficient  and 
reliable molecular simulation.  
Such an example arises when we are comparing two statistical mechanics systems 
determined by  Hamiltonians $H_N$ and $\bar{H}_N$ describing say $N$ particles with positions $X=(x_1,...,x_N)$. 
The associated canonical Gibbs measures are given by  
\begin{equation}
\label{gibbs:intro}
P_N(X)dX=Z_N^{-1}e^{-H_N(X)}dX\, \quad 
\mbox{and} \quad Q_N(X)dX=\bar{Z}_N^{-1}e^{-\bar{H}_N(X)}dX\, .
\end{equation}
where $Z_N$ and ${\bar Z}_N$ are normalizations (known as partition 
functions) that ensure the measures \eqref{gibbs:intro} are 
probabilities. Example \eqref{gibbs:intro} is a ubiquitous one, given 
the importance  of Gibbs measures  in disparate fields ranging from 
statistical mechanics and molecular simulation,  pattern recognition 
and image analysis, to machine and statistical learning, 
\cite{simon2014statistical,MacKay:2003,bishop}. In the case of 
\eqref{gibbs:intro}, the relative entropy \eqref{relent:intro} 
readily yields, 
\begin{equation}
\label{scaling:intro}
R(Q_N\mid\mid P_N)= E_{Q_N}(H_N-\bar{H}_N) +\log Z_N-\log\bar{Z}_N\, .
\end{equation}
It is a well known result in classical statistical mechanics 
\cite{simon2014statistical}, that under very general assumptions on 
$H_N$, both terms in the right hand side of \eqref{scaling:intro} scale 
like $O(N)$ for $N\gg 1$, therefore we have that
\begin{equation}
\label{scaling:intro:2}
R(Q_N\mid\mid P_N)=O(N)\, .
\end{equation}
Comparing to  \eqref{CKP:intro}, we immediately realize that the upper 
bound grows  with the system size $N$,  at least for 
nontrivial quantities of interest $f$ and therefore the CKP inequality 
\eqref{CKP:intro} yields no information on  model discrepancy for  
quantities of interest in \eqref{observables:intro}. In Section 2 we 
show that  other known information inequalities 
% similar to 
% \eqref{CKP:intro} but 
involving other divergences are also  
inappropriate for large systems in the sense that they do not provide 
useful information for quantities of interest: they either blow up like 
\eqref{CKP:intro} or lose their selectivity, in the $N \gg 1$ limit.  
Furthermore, in Section 2 we also show that similar issues arise for 
time dependent stochastic Markovian models at long time regimes, $T\gg 
1$. 

In our  main result we address  these issues  by using the recent information inequalities of 
\cite{dupuis2015path} which in turn relied on earlier upper bounds in 
\cite{dupuis2011uq}. In these  inequalities,
the discrepancy in quantities of interest 
\eqref{observables:intro}
%evaluated on  probabilistic models $P$ and $Q$  
is bounded as follows: 
\begin{equation}\label{UQII:intro}
\Xi_{-}(Q \mid \mid P; f)
\leq E_{Q}(f) - E_{P}(f) \leq
\Xi_+(Q \mid \mid P; f)\, .
\end{equation}
where 
\begin{equation}
\label{eq:goaldivplus:intro}
\Xi_+(Q \mid \mid P; f)= \inf_{c>0}\left\{\frac{1}{c} \log E_P\left( e^{c (f - E_P(f))}\right) + \frac{1}{c} R(Q \mid \mid P)\right\}\, .
\end{equation}
with a similar formula for $\Xi_-(Q \mid \mid P; f)$. 
The roles of $P$ and $Q$ in \eqref{UQII:intro} can be reversed as in \eqref{variational:intro}, depending on the context and the challenges  of the specific problem, as well as on how easy it is to compute or bound  the terms involved in \eqref{eq:goaldivplus:intro}; we discuss specific examples in Section 6.

The quantities  $\Xi_{\pm}(Q \mid \mid P; f)$ are referred to as a ``goal-oriented divergence'', \cite{dupuis2015path}, 
because they have  the properties 
of a divergence both in  probabilities $P$ and $Q$  and the quantity of interest  $f$. More precisely, 
$\Xi_{+}(Q\mid \mid P; f)\ge 0$, (resp. $\Xi_{-}(Q \mid \mid P; f) \le 0$) and  $\Xi_{\pm}(Q\mid \mid P; f)=0$ if and only if $P=Q$ a.s. 
or $f$ is constant $P$-a.s. 
%A variational representation of $\Xi_{\pm}(P, Q; f)$ is given in Section 3, while a complete derivation and properties can be found in \cite{dupuis2015path}.

The bounds \eqref{UQII:intro} turn out to be  robust, i.e. the bounds 
are attained when considering  the set of all models $Q$ with  a 
specified uncertainty threshold $\eta$ within the model $P$ given by  
the distance $R(Q \mid \mid P) \le \eta$;  we refer to   
\cite{dupuis2011uq}, while related robustness results  can  be also 
found in \cite{glasserman2014}.
The parameter $c$ in the variational representation \eqref{eq:goaldivplus:intro} 
controls the degree of robustness with respect to the model uncertainty captured by $R(Q\mid \mid P)$. In a control or optimization context these bounds are also related to  H$^\infty$ control, \cite{dupuis2000}.
Finally,  $\Xi_{\pm}(Q\|P; f)$  admits an asymptotic expansion in  
relative entropy, \cite{dupuis2015path}:
\begin{equation}
\label{linearized:intro}
\Xi_{\pm}(Q\mid\mid P; f)=\pm\sqrt{{Var}_P[f]}\sqrt{2 R(Q\mid\mid P)} + O(R(Q\mid\mid P))\, ,
\end{equation}
which  captures the  aforementioned  divergence properties, at least to 
leading order.

In  this paper we  demonstrate  that the bounds 
\eqref{UQII:intro}   
scale correctly with the system size $N$ and provide ``scalable" uncertainty quantification bounds for large classes of QoIs. 
We can get a first indication that this is the case by considering 
the  leading term in the expansion \eqref{linearized:intro}. 
%of the bound \eqref{UQII:intro}. 
On one hand, typically  for high dimensional systems we have  $R(Q\mid\mid P)=O(N)$, see for instance \eqref{scaling:intro:2}; but on the other hand for common 
quantities of interest, e.g. in molecular systems non-extensive QoIs such as density, 
average velocity, magnetization or 
%specific statistical quantities (extensive properties normalized by the size of the system) such as 
specific energy,   we expect to have 
%furthermore, some of the typical  quantities of interest are expected to be independent of system size, e.g. non-extensive (bulk) quantities  such as density, average velocity, magnetization, etc. or specific statistical quantities (extensive properties normalized by the size of the system) such as specific energy, enthalpy, etc.  For such quantities of interest,  we often  have that 
\begin{equation}
\label{var:intro}
Var_P(f)=O(1/N)\, .
\end{equation}
Such QoIs also include many statistical estimators e.g. those 
with asymptotically normal behavior such as sample means or  maximum 
likelihood estimators, \cite{kitagawa2008modelselec}.
Combining estimates \eqref{scaling:intro:2} and 
\eqref{var:intro}, we see that, at least to leading order, the bounds in \eqref{UQII:intro} scale as 
$$\Xi_{\pm}(Q \mid \mid P; f) \approx  O(1)\, ,$$  in sharp contrast
to the CKP inequality \eqref{CKP:intro}. Using tools from 
statistical mechanics we show that this scaling 
holds not only  for the leading-order term but for the goal oriented divergences  $\Xi_{\pm}(Q\mid\mid P; f)$ themselves,  
for extended systems such as Ising-type model in the  thermodynamic 
limit. These results are presented in Sections 3 and 4. 
%we obtain, at least to  leading order in \eqref{UQII:intro}, that the bounds  will remain bounded in $N$, in contrast to the CKP inequality \eqref{CKP:intro}. A careful analysis of these scaling laws, that does not rely on linearization of the bounds \eqref{UQII:intro} and provides computable and discriminating bounds,  is presented in Sections 3 and 4 below. 
Furthermore, in \cite{dupuis2015path}
it is also shown that such information inequalities  can be used to 
address model error  for time dependent problems at long time regimes. 
In particular our results extend to path-space observables, e.g.,
ergodic averages, correlations, etc, where the role of relative entropy 
is played by the  relative entropy rate (RER) defined as the  relative 
entropy per unit time. We revisit the latter point here and connect it 
to nonlinear response calculations for stochastic dynamics  in 
statistical mechanics.

Overall, the scalability of \eqref{UQII:intro} allows us to address 
three challenges which are not readily covered by standard numerical 
(error) analysis, statistics  or statistical mechanics calculations: 
(a) obtain uncertainty quantification  (UQ) bounds for quantities of 
interest  in the large degree of freedom  limit $N\gg 1$  and/or at 
long time regimes $T\gg 1$, (b) estimate the impact of  large model 
perturbations, going beyond  error expansion methods and 
providing  nonlinear response bounds   in the the  statistical 
mechanics sense,  and (c) address model-form uncertainty, i.e. 
comparing different extended models and corresponding  quantities of 
interest (QoIs).

We demonstrate all three capabilities in  deriving robust uncertainty 
quantification bounds for phase diagrams in statistical mechanics 
models. Phase diagrams are calculations of QoIs as functions of continuously varying model parameters, e.g. temperature, external forcing, etc. 
Here we consider a given model $P$ and desire to calculate   uncertainty bounds for its phase diagram, when the model $P$ is replaced by a different  model $Q$. 
We note that phase diagrams are typically computed    in the the thermodynamic limit $N \to \infty$ and in 
the case of steady states in the long time regime $T \to \infty$; thus, in order to obtain uncertainty bounds for the phase diagram of the model $P$,  we necessarily will require  scalable bounds such as  \eqref{UQII:intro}; similarly,  we need such scalable bounds to  address  any related  UQ or sensitivity analysis question for molecular or any other high-dimensional probabilistic  model. 
To illustrate the potential  of our methods,  we consider fairly large  
parameter discrepancies between models $P$ and $Q$  of the order of 
$50\%$ or more, see for instance Figure \ref{Fig:Figure_intro}(a). 
%{\sf (Ising-2-Ising)}. 
We  also compare phase diagrams corresponding not just to different 
parameter choices but to entirely different Gibbs models 
\eqref{gibbs:intro}, where $Q$  is a true microscopic model and $P$ is 
for instance some type of mean field approximation, see Figure \ref{Fig:Figure_intro} (b).  These and several  other test-bed examples are discussed in detail  in Section 5.  
It should be noted that the bounds in Figure \ref{Fig:Figure_intro}
are very tight  once we compare to  the real approximated 
microscopic model, see the figures in Section 5.  

\begin{figure}[H]
	\centering
	\begin{subfigure}[b]{0.45\textwidth}
		\centering
		\includegraphics[width=\linewidth]{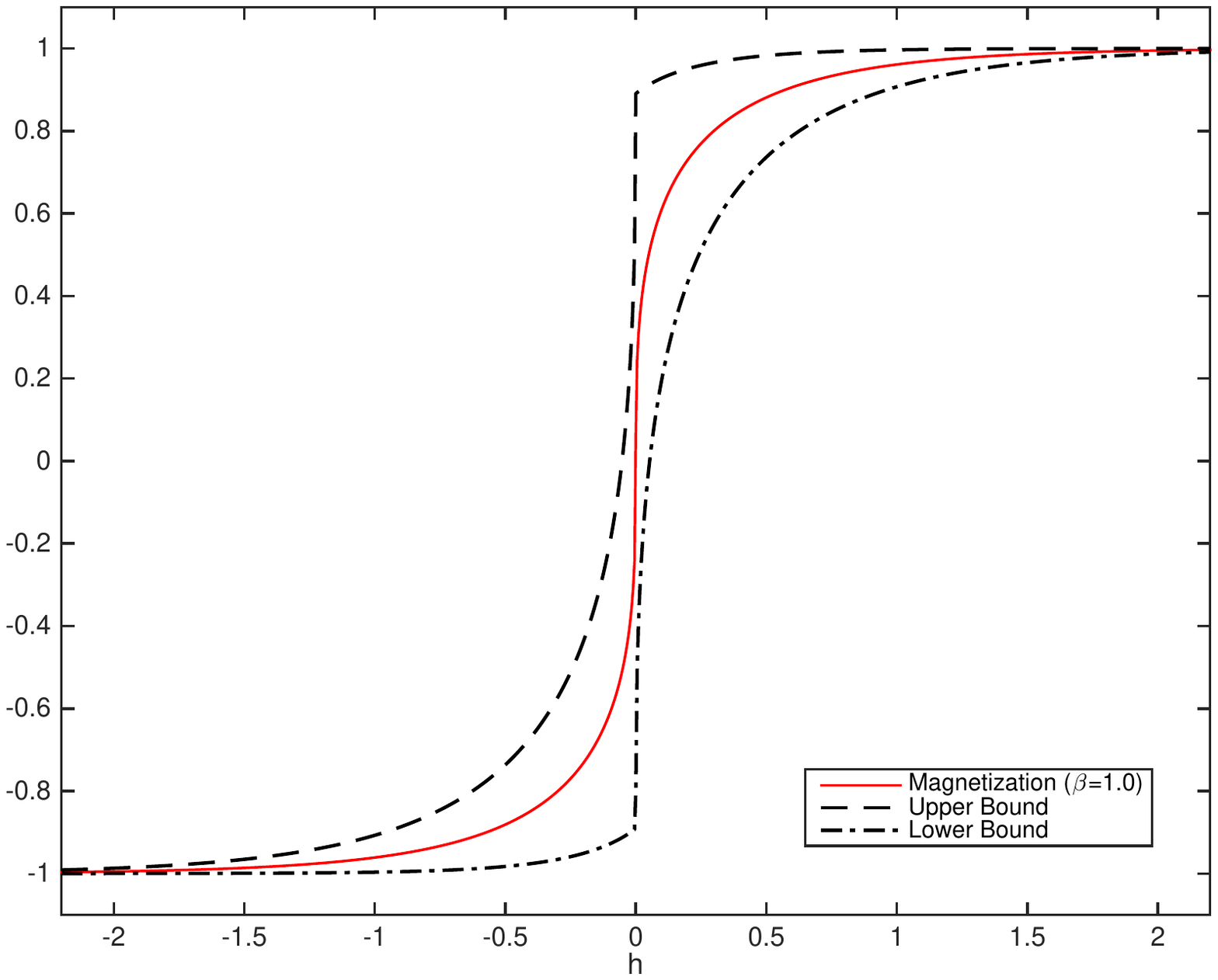}
		\caption{}\label{Fig:meanfield_intro}
	\end{subfigure}%
	\begin{subfigure}[b]{0.45\textwidth}
		\centering
		\includegraphics[width=\linewidth]{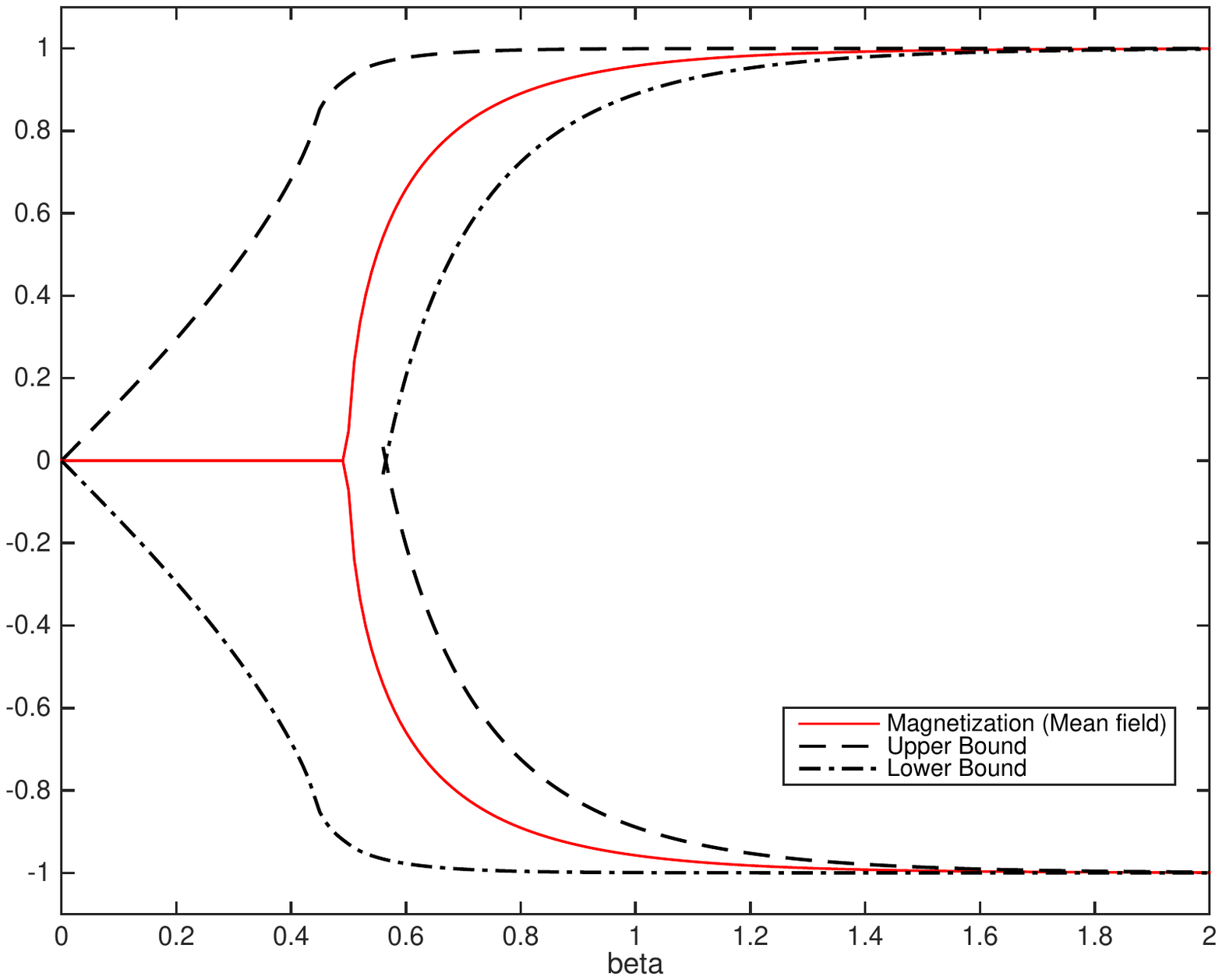}
		\caption{}
		\label{Fig:2dIsing_intro}
	\end{subfigure}
	\caption{\subref{Fig:meanfield_intro} The red solid line is the magnetization of 1-d mean field with $\beta=1$, which is the baseline; 
The black dashed/dash-dot lines is the goal-oriented divergence upper/lower bound of magnetization of the mean field with $\beta=1.6$.  \subref{Fig:2dIsing_intro} The red solid line is the spontaneous magnetization of 2-d mean field Ising model for $\beta=1$, which is the baseline; 
The black dashed/dash-dot lines is the goal-oriented divergence upper/lower bound of the  magnetization of the nearest neighbor Ising model with $\beta=1.6$.}
	\label{Fig:Figure_intro}
\end{figure}
This paper is organized as follows. In Section 2 we discuss classical information inequalities for QoIs and demonstrate that they do not scale with system size or with long time dynamics. We show these results  by  considering counterexamples such as sequences of independent identically distributed random variables and Markov chains. In Section 3 we revisit the concept of goal oriented divergence introduced earlier in \cite{dupuis2015path} and  show  that it provides  scalable and discriminating information bounds for QoIs. 
In Section 4 we discuss how these  results extend to path-space observables, e.g.,
ergodic averages, autocorrelations, etc, where the role of relative entropy 
is now played by the  relative entropy rate (RER)  and connect  
to nonlinear response calculations for stochastic dynamics  in 
statistical mechanics.
In Section 5 we show how these new information inequalities   transfer to Gibbs measures, implying nonlinear response UQ bounds,  and how they relate to classical results for thermodynamic limits in statistical mechanics.
Finally in Section 6 we apply our methods and the scalability of the UQ bounds to assess  model and parametric uncertainty of phase diagrams in molecular systems. We demonstrate the methods for Ising models, although the perspective is generally applicable.

\section{Poor scaling properties of the classical inequalities for 
probability metrics}

In this section we discuss several classical information inequalities
and demonstrate they scale  poorly with the size of the system especially when applying the 
inequalities to ergodic averages. We make these points by considering  simple examples such as independent, identically distributed (IID)
random variables, as well as  Markov sequences  and the corresponding statistical estimators.

Suppose $P$ and $Q$ be two probability measures on some measure space 
$(\mathcal{X},\mathcal{A})$ and let $f : \mathcal{X} \to \mathbb{R}$ be 
some quantity of interest (QoI).  
% Our goal is 
% \[
% |E_Q(f) -E_P(f)|
% \]
% in terms of distances or information divergences between the 
% probability measures $P$ and $Q$.  
Our goal is   to consider the discrepancy  between  models $P$ and $Q$ with respect to  the quantity of interest $f$,
\begin{equation}
\label{observables:sec2}
 E_Q(f) - E_P(f)\, .
\end{equation}
Our primary mathematical challenge here is to understand what  results on  information metrics between probability measures $P$ and $Q$ imply for quantities of interest in  \eqref{observables:sec2}. We first discuss several concepts of information metrics, including divergences and probability distances.

\subsection{Information distances and divergences} 
%We will use the following quantities.  
To keep the notation as simple as possible we will  assume 
henceforth that $P$ and $Q$ are mutually absolutely continuous and this will cover 
all the examples considered here. (Much of what we discuss would extend to general 
measures by considering a measure dominating $P$ and $Q$, e.g. $\frac{1}{2}(P+Q)$.
For the same reasons of simplicity in presentation, we assume that all integrals below exist and are finite.

\medskip 
\noindent 
{\bf Total Variation} \cite{tsybakov2008introduction}: The total variation distance 
between $P$ and $Q$ is defined  by
\begin{eqnarray}
TV(Q,P)= \sup_{A\in \mathcal{A}} \left| Q(A)-P(A) \right| = \frac{1}{2} \int \left|1 - \frac{dP}{dQ}\right| dQ .
\end{eqnarray}
Bounds on $TV(Q,P)$ provide bounds on $|E_Q(f) -E_P(f)|$ since we have 
\begin{equation}
TV(Q,P)\,=\, \frac{1}{2} \sup_{\|f\|_\infty=1} 
\left| E_Q(f) - E_P(f)\right| .
\end{equation}

\smallskip
\noindent
{\bf Relative entropy} \cite{tsybakov2008introduction}:
The Kullback-Leibler divergence, or relative entropy, of $P$ with respect to $Q$ is defined   by
\begin{equation}
R(Q\mid\mid P)= \int \log\left(\frac{dQ}{dP}\right)dQ .
\end{equation}

\smallskip
\noindent
{\bf Relative R$\acute{e}$nyi  entropy} \cite{liese2007convex}: For $\alpha >0$, $\alpha \not=1$,
the relative R$\acute{e}$nyi  entropy (or divergence) of order $\alpha$ of $P$ with 
respect to $Q$ is defined  by
\begin{equation}
D_\alpha(Q\mid\mid P)= \frac{1}{\alpha-1} \int \left(\frac{dP}{dQ}\right)^{1-\alpha} dQ = \frac{1}{\alpha-1} \int \left(\frac{dQ}{dP}\right)^{\alpha} dP .
\end{equation}

\smallskip
\noindent
{\bf $\chi^2$ divergence} \cite{tsybakov2008introduction}: The $\chi ^2$ divergence between $P$ and $Q$ 
is defined by:
\begin{equation}
\chi^2(Q\mid\mid P)= \int \left(\frac{dQ}{dP}-1\right)^2 dP
\end{equation}

\smallskip
\noindent
{\bf Hellinger distance} \cite{tsybakov2008introduction}: The Hellinger distance between $P$ and $Q$ 
is defined by:
\begin{equation}
H(Q,P)= \left( \int\left(1-\sqrt{\frac{dP}{dQ}}\right)^2 dQ\right)^{1/2}
\end{equation}

The total variation and Hellinger distances define proper distances while 
all the other quantities are merely divergences (i.e., they are non-negative and 
vanish if and only if $P=Q$).  The R$\acute{e}$nyi  divergence of order $1/2$ 
is symmetric in $P$ and $Q$ and is related to the Hellinger distance by 
\[D_{1/2}(Q \mid \mid P) = - 2 \log\left( 1 - \frac{1}{2} H^2(Q,P)\right)\,.\] 
Similarly  the R$\acute{e}$nyi  divergence of order $2$ is related to the $\chi ^2$ divergence
by \[D_{2}(Q \mid \mid P) = \log\left( 1+   \chi^2(Q\mid\mid P)\right)\,.\]
In addition the R$\acute{e}$nyi  divergence of order $\alpha$ is an nondecreasing 
function $\alpha$ \cite{van2014renyi} and we have 
\[
\lim_{\alpha \to 1}D_\alpha(Q\mid\mid P) = R(Q\mid\mid P).
\]
and thus it thus natural to set $D_1(Q\mid\mid P) = R(Q\mid\mid P)$. 
Using the inequality $\log(t)\le t-1$ we then obtain the chain of inequalities, \cite{tsybakov2008introduction}
\begin{equation}\label{eq:ineqs}
H^2(Q,P) \le D_{1/2}(Q \mid \mid P) \le R(Q\mid\mid P) \le D_{2}(Q \mid \mid P) \le \chi^2( Q \mid \mid P) \,.
\end{equation}

%\subsection{Probability metrics}
%Firstly, we will introduce some probability metrics here. Let $P$ and $Q$ be two probability measures on $(\mathcal{X},\mathcal{A})$, where  $(\mathcal{X},\mathcal{A})$ be a measurable space . Assume that $\nu$ is a $\sigma-$finite measure on $(\mathcal{X},\mathcal{A})$ satisfying $P \ll \nu$ and $Q \ll \nu$.  Define $p=dP/d\nu$, $q=dQ/d\nu$.

%\begin{definition}
%the total variation distance between $P$ and $Q$ is defined by:
%\begin{eqnarray}
%TV(Q,P)=sup_{A\in \mathcal{A}}\mid P(A)-Q(A) \mid
%\end{eqnarray}

%\end{definition}

%\begin{definition}
%the Kullback divergence (relative entropy) between $P$ and $Q$ is defined by:
%\begin{eqnarray}
%R(Q\mid\mid P)=\{\begin{array}{c}%
%               \int \log \frac{dP}{dQ}dP, if P \ll Q ;\nonumber \\
%               +\infty,otherwise.
%             \end{array}
%\end{eqnarray}

%\end{definition}

%\begin{definition}
%the $\chi ^2$ divergence between $P$ and $Q$ is defined by:
%\begin{eqnarray}
%\chi^2(Q\mid\mid P)=\{\begin{array}{c}
%               \int (\frac{dP}{dQ}-1)^2 dQ, if P \ll Q,\nonumber \\
%               +\infty, otherwise.
%             \end{array}
%\end{eqnarray}
%
%\end{definition}
%
%
%
%

%\begin{definition}
%the Hellinger distance between $P$ and $Q$ is defined as follows:
%\begin{eqnarray}
%H(Q,P)=(\int (\sqrt{dP}-\sqrt{dQ})^2)^{1/2}=(\int (\sqrt{p}-\sqrt{q})^2 %d\nu)^{1/2}
%\end{eqnarray}
%\end{definition}
%

\subsection{Some classical information inequalities for QoIs}

We recall a number of classical information-theoretic bounds 
which use probability distances ot divergences to control expected values of QoIs
(also referred to as observables). Because 
$$|E_Q(f) -  E_P(f)| \le 2 \|f\|_\infty TV(Q,P)$$ we can readily   
obtain bounds on QoIs from relationships between $TV(Q,P)$ and other 
divergences. It is well-known and easy to prove that 
$TV(Q,P) \le H(Q,P)$ but we will use here the slightly sharper bound (Le Cam's inequality) \cite{tsybakov2008introduction} given by \[
TV(Q,P) \le H(Q,P)\sqrt{1 -\frac{1}{4}H^2(Q,P)}\]
which implies 

\medskip
\noindent{\bf Le Cam}\cite{tsybakov2008introduction}: 
\begin{equation}
\left| E_{Q}(f)-E_{P}(f) \right| \le 2 \|f\|_\infty  
H(Q,P)\sqrt{1 - \frac{1}{4}H^2(Q ,P)}.
\label{eq:hellinger}
\end{equation}
\medskip

From inequality \eqref{eq:ineqs} and $TV(Q,P)\le H(Q,P)$ 
we obtain immediately bounds on $TV(Q,P)$ by 
$\sqrt{D_\alpha(Q\mid\mid P)}$ but the constants are not 
optimal. The following generalized Pinsker inequality (with optimal 
constants)  was proved in \cite{gilar2010} and holds for 
$0< \alpha \le 1$ 
\[TV(Q,P) \le \sqrt{ \frac{1}{2\alpha} D_\alpha(Q\mid\mid P)},
\]
and leads to 

\medskip
\noindent{\bf Csiszar-Kullback-Pinsker (CKP)} \cite{tsybakov2008introduction}:
\begin{equation}
\left| E_{Q}(f)-E_{P} (f)\right|  \le  \|f\| _{\infty}\sqrt{2R(Q \mid \mid P)}. 
\label{eq:REn}
\end{equation}

\smallskip
\noindent
{\bf Generalized Pinsker}: see { \cite{van2014renyi}}:  
For $0 < \alpha \le 1$ 
\begin{equation}
\left| E_{Q}(f)-E_{P} (f)\right|  \le  \|f\|_{\infty} \sqrt{\frac{2}{\alpha}D_\alpha (Q \mid \mid P)}. 
\label{eq:Renyi }
\end{equation}
\medskip 

It is known that the CKP  inequality is sharp only of $P$ and $Q$ 
are close. In particular the total variation norm is always 
less than $1$ while the relative entropy can be very large. 
There is a complementary bound to the CKP inequality which is based 
on a result by Scheff\'e \cite{tsybakov2008introduction}

\medskip
\noindent 
{\bf Scheff\'e}: 
\begin{equation}
\mid E_{Q}(f)-E_{P} (f)\mid  \le   \| f \|_{\infty} \left(2 - e^{- R(Q\mid\mid P)}\right).
\label{eq:CoREn}
\end{equation}

By \eqref{eq:ineqs} we have $R(P\mid\mid Q) \le \chi^2(Q\mid\mid P)$ 
and thus we can also obtain a bound in terms  of the $\chi^2$ divergence 
and $\|f\|_\infty$. However,  we can obtain a better bound  which involves the 
variance of  $f$ by using Cauchy-Schwartz inequality. 

\medskip
\noindent{\bf Chapman-Robbins} \cite{lehmanncasella2003pointestimation}:
\begin{equation}
\left| E_{Q}(f)-E_{P} (f)\right| \le\sqrt{Var_{P}(f)}\sqrt{\chi^2(Q \mid \mid P)}.
\label{eq:chisquare}
\end{equation}
\medskip

\medskip
\noindent{\bf Hellinger-based inequalities:}%\cite{dashti2013bayesian}

Recently a bound using the Hellinger distance and the $L^2$ norm was 
derived in \cite{dashti2013bayesian}:  
\begin{eqnarray*}
\left| E_{Q}(f)-E_{P}(f) \right|
& \le & \sqrt{2}H(Q,P) \sqrt{(E_P (f^2) +E_Q (f^2))}.
\label{eq:hellinger_dashti} 
\end{eqnarray*}
As we show in  Section \ref{section:Hellinger} this bound can be further optimized  by using a control variates argument.
Note that the left hand side is unchanged by replacing $f$ by 
$f - \frac{1}{2}(E_P(f)+E_Q(f))$ and this yields the improved  bound

% \medskip
% \noindent{\bf Hellinger}%\cite{dashti2013bayesian}
\begin{eqnarray}
 \left| E_{Q}(f)-E_{P}(f)\right|
& \le & \sqrt{2}H(Q,P) \sqrt{ Var_P(f) + Var_P(f) + \frac{1}{2}(E_Q(f) - E_P(f))^2}. 
\label{eq:hellinger_MK}
\end{eqnarray}

\subsection{Scaling properties for IID sequences}\label{section:iid_1}

We make here some simple, yet useful  observations, on how the inequalities discussed in the previous Section
scale with system size for IID sequences. We consider the product measure space 
$\mathcal{X}_N=\mathcal{X}\times \cdots \times \mathcal{X}$ equipped with the 
product $\sigma$-algebra $\mathcal{A}_N$ and we denote by $P_N=P \times \cdots 
\times P$ the product  measures on $(\mathcal{X}_N,\mathcal{A}_N)$ 
whose all marginals  are equal to $P$ and we define $Q_N$ similarly. From a statistics perspective, this is also the setting where  sequences of $N$ independent samples are generated by the models $P$ and $Q$ respectively.

We will concentrate on QoIs which are observables which have the form of ergodic averages or of statistical estimators. The challenge would be to assess based on information inequalities the impact on the QoIs. Next, we  consider the simplest such case  of the  sample mean.
For any measurable $g: \mathcal{X} \to \mathbb{R}$ we consider the observable  $f_N:\mathcal{X}_N \to \mathbb{R}$ given by 
\begin{equation}
\label{mean:sec2}
f_N(\sigma_1, \cdots \sigma_N) = \frac{1}{N}\sum_{j=1}^N g(\sigma_j)\,,
\end{equation}
This quantity is also the sample average of the data set 
${\mathcal D}=\{\sigma_1, \cdots \sigma_N \}$.
We also note that  
\[ \|f_N\|_\infty = \|g\|_\infty \,,\quad \quad  E_{P_N}(f_N) = E_{P}(g)\,,\quad \quad  \quad \quad Var_{P_N}(f_N) = \frac{1}{N}Var_P(g)
\]

To understand how the various inequalities scale with the system size $N$
we need to understand how the information distances and divergences themselves 
scale with $N$. For IID random variables the results are collected in the following Lemma.

\begin{lemma}\label{lem:scaling_iid} For two product measures $P_N$ and $Q_N$ with marginals $P$ and $Q$ we have 
\begin{eqnarray}
&\textrm{\bf Kullback-Leibler:}&R(Q_N\mid\mid P_N)= N R(Q\mid\mid P)
\label{eq:re_N} \nonumber \\
&\textrm{\bf R\'enyi :}&D_\alpha (Q_N\mid\mid P_N) =N D_\alpha(Q \mid \mid P)
\label{eq:renyi_N}\nonumber \\
&\textrm{\bf Chi-squared:} &\chi^2(Q_N \mid \mid P_N) =\left(1+ \chi^2(Q \mid \mid P)\right)^N-1 \label{eq:chisquare_N} \nonumber \\
&\textrm{\bf Hellinger:}&H(Q_N \mid \mid P_N) =\sqrt{2-2\left(1-\frac{H^2(Q ,P)}{2}\right)^N} \label{eq:hellinger_N}
\end{eqnarray}
%\begin{enumerate}
%\item \label{eq:re_N} ${\displaystyle R(Q_N\mid\mid P_N)= N R(Q\mid\mid P)}$
%\item ${\displaystyle D_\alpha (Q_N\mid\mid P_N) =N D_\alpha(P \mid \mid Q)}$
%\item ${\displaystyle \chi^2(Q_N \mid \mid P_N) =\left(1+ \chi^2(Q \mid \mid P)\right)^N-1}$
%\item \label{eq:hellinger_N} ${\displaystyle H^2(Q_N \mid \mid P_N) =2\left(1-\left(1-\frac{H^2(P \mid \mid Q)}{2}\right)^N\right)}$
%\end{enumerate}
\end{lemma}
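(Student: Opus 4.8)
The plan is to compute each divergence for the product measures directly from its defining integral, exploiting that a product measure factorizes the Radon–Nikodym derivative: if $P_N = P^{\otimes N}$ and $Q_N = Q^{\otimes N}$ then $\frac{dQ_N}{dP_N}(\sigma_1,\dots,\sigma_N) = \prod_{j=1}^N \frac{dQ}{dP}(\sigma_j)$, and moreover any integral of a product of single-variable functions against $P_N$ (or $Q_N$) factors into a product of $N$ identical one-dimensional integrals by Fubini. This single observation drives all four identities; the work is just bookkeeping, and there is no serious obstacle — the ``hard part'' is only to organize the four cases cleanly and to get the Hellinger algebra right.

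First I would treat \textbf{Kullback–Leibler}. Write $R(Q_N\|P_N) = \int \log\frac{dQ_N}{dP_N}\,dQ_N = \int \sum_{j=1}^N \log\frac{dQ}{dP}(\sigma_j)\,dQ_N$. By linearity of the integral and the product structure of $Q_N$, each summand contributes $\int \log\frac{dQ}{dP}\,dQ = R(Q\|P)$, giving $N R(Q\|P)$. The \textbf{Rényi} case is essentially the same but multiplicatively: $\int \left(\frac{dQ_N}{dP_N}\right)^\alpha dP_N = \int \prod_{j=1}^N \left(\frac{dQ}{dP}(\sigma_j)\right)^\alpha dP_N = \left(\int \left(\frac{dQ}{dP}\right)^\alpha dP\right)^N$; taking $\frac{1}{\alpha-1}$ of both sides and using the definition of $D_\alpha$ on each side shows $D_\alpha(Q_N\|P_N) = \frac{1}{\alpha-1}\left[(\alpha-1)D_\alpha(Q\|P)+1\right]^{N}$ — wait, more carefully: since $\int(\frac{dQ}{dP})^\alpha dP = 1+(\alpha-1)D_\alpha(Q\|P)$ is \emph{not} in general of the form that makes $D_\alpha$ additive, I should instead note the paper states $D_\alpha(Q_N\|P_N) = N D_\alpha(Q\|P)$, which is the standard fact and is recovered by observing that $\frac{1}{1-\alpha}\log$ of the Rényi integral \emph{is} additive; I would therefore present the Rényi divergence in its logarithmic normalization to make the product become a sum, then translate back. (If the paper's $D_\alpha$ is the non-logarithmic version as displayed, the additivity claim as written holds only for the logarithmic normalization, so I would flag that the intended normalization is $\frac{1}{\alpha-1}\log\int(\cdot)$; either way the computation is the factorization above.)

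For \textbf{Chi-squared}, use $\chi^2(Q_N\|P_N) = \int\left(\frac{dQ_N}{dP_N}\right)^2 dP_N - 1 = \int \prod_j \left(\frac{dQ}{dP}(\sigma_j)\right)^2 dP_N - 1 = \left(\int \left(\frac{dQ}{dP}\right)^2 dP\right)^N - 1 = \left(1+\chi^2(Q\|P)\right)^N - 1$, where the last step uses $\int(\frac{dQ}{dP})^2 dP = 1 + \chi^2(Q\|P)$. For \textbf{Hellinger}, recall $H^2(Q,P) = 2 - 2\int \sqrt{\frac{dQ}{dP}}\,dP$, so the ``affinity'' $\rho(Q,P) := \int\sqrt{\frac{dQ}{dP}}\,dP = 1 - \frac12 H^2(Q,P)$ is what factorizes: $\rho(Q_N,P_N) = \int \prod_j \sqrt{\frac{dQ}{dP}(\sigma_j)}\,dP_N = \rho(Q,P)^N = \left(1-\frac{H^2(Q,P)}{2}\right)^N$. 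Then $H^2(Q_N,P_N) = 2 - 2\rho(Q_N,P_N) = 2 - 2\left(1-\frac{H^2(Q,P)}{2}\right)^N$, and taking square roots gives the stated formula. I would close by remarking that the contrast between the additive behavior of KL/Rényi and the multiplicative/saturating behavior of $\chi^2$ and Hellinger is exactly what Section 2 needs: the former blow up linearly in $N$ while the latter saturate, and both failure modes make the classical QoI inequalities useless for large systems. $\blacksquare$
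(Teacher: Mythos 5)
Your proof is correct and follows essentially the same route as the paper's own proof in the appendix: factorize $\frac{dQ_N}{dP_N}$ over coordinates, turn the product into a sum for KL and (log-normalized) R\'enyi, and use the affinity identities $\int(\frac{dQ}{dP})^2\,dP = 1+\chi^2(Q\|P)$ and $\int\sqrt{\frac{dQ}{dP}}\,dP = 1-\frac{1}{2}H^2(Q,P)$ for the remaining two. Your flag about the R\'enyi normalization is well taken — the paper's displayed definition of $D_\alpha$ omits the logarithm, but its appendix proof (and the stated relations to $\chi^2$, Hellinger, and the $\alpha\to 1$ limit) all use the logarithmic version, exactly as you assume.
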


\proof See %Appendix 
\ref{app:iid_markov}.

\medskip
Combining the result in Lemma \ref{lem:scaling_iid} with the information 
bounds in the previous Section we obtain a series of bounds for ergodic 
averages which all suffer from serious defects. Some grow to infinity for $N\gg 1$ while others converge to a trivial bound that is not discriminating, namely provide no new information on the difference of the QoIs $E_{Q_N}(f_N)-E_{P_N} (f_N)$.
More precisely we obtain the following bounds:

\medskip
\noindent 
{\bf Csiszar-Kullback-Pinsker (CKP) for IID}: 
\begin{equation}
\left| E_{Q_N}(f_N)-E_{P_N} (f_N)\right|  \le   \| g \|_{\infty}\sqrt{2 N R(Q \mid \mid P)} = O(\sqrt{N}).
\label{eq:REn_N_CKP}
\end{equation}
\medskip
\noindent 
{\bf Generalized Pinsker for IID}:  For $0 < \alpha <1$ we have
\begin{equation}
\left|  E_{Q_N}(f_N)-E_{P_N} (f_N)\right|  \le   \| g\| _{\infty}\sqrt{\frac{2N}{\alpha} D_\alpha (Q \mid \mid P)} = O(\sqrt{N}). 
\label{eq:Renyi n}
\end{equation}

\medskip
\noindent 
{\bf Scheff\'e for IID}: 
\begin{equation}
\left| E_{Q_N}(f_N)-E_{P_N} (f_N)\right| \le   \| g \|_{\infty} \left(2 - e^{-N R(Q\mid\mid P)}\right)  =O(1).
\label{eq:REn_N_Scheffe}
\end{equation}

\medskip
\noindent 
{\bf Chapman-Robbins for IID}: We  have
\begin{equation}
\left| E_{Q_N}(f_N)-E_{P_N}(f_N)\right| \le\sqrt{\frac{1}{N}Var_{P}(g)}\sqrt{ \left(1+ \chi^2(Q \mid \mid P)\right)^N-1} = O\left(\frac{{\sqrt{e^N}}}{\sqrt{N}}\right). \label{eq:chisquaren}
\end{equation}

\medskip
\noindent 
{\bf Le Cam for IID}:
\begin{eqnarray}
\left| E_{Q_N}(f_N)-E_{P_N}(f_N)  \right|  & \le &  2 \|g\|_\infty  
\sqrt{ 2-2 \left( 1-\frac{H^2(Q ,P)}{2}\right)^N} 
\sqrt{ \frac{1}{2} + \frac{1}{2} \left( 1-\frac{H^2(Q ,P)}{2}\right)^N}
\nonumber \nonumber \\
&=&  O(1).
\label{eq:hellingern_LeCam}
\end{eqnarray}

\medskip
\noindent 
{\bf Hellinger for IID}:  
\begin{align}
 \left| E_{Q_N}(f_N)-E_{P_N}(f_N) \right|
& \le & \sqrt{2}\sqrt{2-2\left(1-\frac{H^2(Q ,P)}{2}\right)^N}\sqrt{ \frac{Var_P(g)}{N} + \frac{Var_Q(g)}{N} + \frac{1}{2} (E_P(g) - E_Q(g))^2}.
\label{eq:hellingern}
\end{align}

Every single bound fails to capture the behavior of ergodic averages. 
Note that the left-hand sides are all of order $1$ and indeed should be small 
of $P$ and $Q$ are sufficiently close to each other.  The CKP, 
generalized 
Pinsker and Chapman-Robbins bounds all diverge as $N \to \infty$ and thus 
completely fail.  The Le Cam bound is of order $1$, but as $N \to \infty$ the 
bound converges to $2 \|f\|_\infty$ which is a trivial bound independent of $P$ and $Q$.   
The Scheff\'e likewise converges to constant. Finally the Dashti-Stuart bound
converges to the trivial statement that $1\le \sqrt{2}$.

\subsection{Scaling properties for Markov sequences}
\label{section:markov_1}

Next, we  consider the same questions as in the previous Section, however this time for correlated distributions. Let two Markov chains in a finite state space $S$ with transitions matrix $p(x,y)$ 
and $q(x,y)$ respectively.  We will assume that both Markov chains are irreducible  
with stationary  distributions $\mu_p$ and $\mu_q$ respectively.  In addition  
we assume that for any $x \in S$, the probability measure $p(x, \cdot)$ and 
$q(x, \cdot)$ are mutually absolutely continuous. We denote by $\nu_p(x)$ and $\nu_q(x)$ the initial distributions of the two Markov chains 
and then the probability distributions of the path  $(X_1, \cdots X_N)$ evolving under $p$ 
is given by 
\[
P_N(X_1, \cdots, X_N) = \nu_p(X_0)p(X_0,X_1) \cdots p(X_{N-1}, X_N)\,,
\]
and similarly for the distribution $Q_N$ under $q$.

If  we are interested in the long-time behavior of the system,
for example  we may be interested in computing or estimating expectations of the steady state or in our case model discrepancies such as 
\[
\left| E_{\mu_q}(g) -E_{\mu_p}(g)\right| 
\]
for some QoI (observable) $g : 
S \to \mathbb{R}$.  In general the steady state of a Markov chain  
is not known explicitly  or it is difficult to compute for large systems.  
However, if we consider ergodic observables such as 
\begin{equation}
\label{ergodic:obs:sec2}
f_N(X_1, \cdots X_N) =\frac{1}{N} \sum_{i=1}^N g(X_i)
\end{equation}
then,  by the ergodic theorem, we have, for 
any initial distribution $\nu_p(x)$  that 
\[
\lim_{N \to \infty} E_{P_N}(f_N) = E_{\mu_p}(g)\,.
\]and thus can estimate $|E_{\mu_q}(g)- E_{\mu_p}(g)|$  if we can control $|E_{Q_N}(f)-E_{P_N}(f)|$ for large $N$. 
After our computations with IID sequences in the previous Section, it is not surprising    that none of the standard information inequalities allow such 
control.  
Indeed the following lemma, along with the fact that the variance of ergodic observables such as \eqref{ergodic:obs:sec2} scales like 
$Var_{P_N}(f_N)=O(1/N)$ \cite{dupuis2015path}, readily imply 
 that the bounds for Markov measures 
scale  exactly as (poorly as) the IID case, derived at the end of Section \ref{section:iid_1}.

\begin{lemma}\label{lem:scaling_markov} Consider two irreducible Markov chains 
with transitions matrix $p$ and $q$.  Assume that the initial conditions  $\nu_p(x)$ and $\nu_q(x)$ 
are mutually absolutely continuous and that $p(x,\cdot)$ and $q(x,\cdot)$ are  mutually absolutely 
continuous for each $x$.  

\medskip
\noindent
{\bf Kullback-Leibler:} We have  
\[
\lim_{N \to \infty} 
\frac{1}{N} R(Q_N\mid\mid P_N)=r(q\mid\mid p):= \sum_{x,y} \mu_q(x) q(x,y)
\log \left( \frac{q(x,y)}{p(x,y)} \right)\,,
\]
and the limit is positive if and only if $p\not= q$.

\smallskip
\noindent
{\bf R\'enyi :} We have 
\[
\lim_{N \to \infty} \frac{1}{N} D_\alpha(Q_N\mid\mid P_N)= \frac{1}{\alpha -1}\log \rho(\alpha) \,,
\]
where $\rho(\alpha)$ is the maximal eigenvalue of the non-negative matrix 
with entries $q^{\alpha}(x,y) p^{1-\alpha}(x,y)$ and we have 
$\frac{1}{\alpha -1}\log \rho(\alpha) \ge 0$  with equality if and only if $p\not= q$.  

\smallskip
\noindent
{\bf Chi-squared:} We have  
\[
\lim_{N \to \infty} \frac{1}{N} \log( 1 + \chi^2(Q_N \mid \mid P_N)) = \log \rho(2)
\]
where $\rho(2)$ is the maximal eigenvalue of the matrix 
with entries $q^{2}(x,y) p^{-1}(x,y)$ and we have $\log\rho(2)\ge 0$ 
with equality if and only if $p = q$. 

\smallskip
\noindent {\bf Hellinger:} We have 
\[
\lim_{N \to \infty} H(Q_N \mid \mid P_N) = \sqrt{2}.
\]
if $p \not= q$ and $0$ if $p=q$.
\end{lemma}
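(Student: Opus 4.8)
The plan is to express each divergence between the path laws $P_N$ and $Q_N$ as a functional of a single non-negative ``transfer matrix'' and then extract its exponential growth rate from Perron--Frobenius theory, in direct analogy with (but replacing the exact additivity of) Lemma~\ref{lem:scaling_iid}. The starting point is the factorization of the Radon--Nikodym derivative along a path,
\[
\frac{dQ_N}{dP_N}(X_0,\dots,X_N)=\frac{\nu_q(X_0)}{\nu_p(X_0)}\prod_{i=1}^N\frac{q(X_{i-1},X_i)}{p(X_{i-1},X_i)}\,,
\]
which is legitimate precisely because of the assumed mutual absolute continuity of $\nu_p,\nu_q$ and of the rows $p(x,\cdot),q(x,\cdot)$.

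For the Kullback--Leibler rate I would take logarithms in the display above and integrate against $Q_N$, using linearity of expectation. The initial term contributes the fixed finite constant $E_{\nu_q}\!\left[\log(\nu_q/\nu_p)\right]=R(\nu_q\mid\mid\nu_p)$, while the $i$-th transition term equals $\sum_{x}(\nu_q q^{\,i-1})(x)\sum_y q(x,y)\log\frac{q(x,y)}{p(x,y)}$, since under $Q_N$ the pair $(X_{i-1},X_i)$ has law $(\nu_q q^{\,i-1})(x)\,q(x,y)$. Dividing by $N$ and letting $N\to\infty$, the constant term drops out, and since $\frac1N\sum_{i=1}^N \nu_q q^{\,i-1}\to\mu_q$ for the irreducible finite chain $q$ (ergodic theorem / convergence of Ces\`aro averages of the laws), the Ces\`aro average of the transition terms converges to $r(q\mid\mid p)=\sum_{x,y}\mu_q(x)q(x,y)\log\frac{q(x,y)}{p(x,y)}=\sum_x\mu_q(x)R\big(q(x,\cdot)\mid\mid p(x,\cdot)\big)$. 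Non-negativity is then termwise, and because $\mu_q(x)>0$ for every $x$ (irreducibility on a finite state space), the rate vanishes iff $q(x,\cdot)=p(x,\cdot)$ for all $x$, i.e.\ iff $p=q$.

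For the R\'enyi, $\chi^2$ and Hellinger rates the common computation is that, for any exponent $\alpha$,
\[
\int\Big(\tfrac{dQ_N}{dP_N}\Big)^{\alpha}dP_N=\sum_{x_0}\nu_p(x_0)^{1-\alpha}\nu_q(x_0)^{\alpha}\,\big(A_\alpha^{\,N}\mathbf 1\big)(x_0)=v_\alpha^{\top}A_\alpha^{\,N}\mathbf 1\,,
\]
where $A_\alpha(x,y)=q(x,y)^{\alpha}p(x,y)^{1-\alpha}$ (interpreted as $0$ on the common zero set of $p$ and $q$) and $v_\alpha(x)=\nu_p(x)^{1-\alpha}\nu_q(x)^{\alpha}$. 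By the mutual absolute continuity of the rows, $A_\alpha$ has exactly the sparsity pattern of $p$, hence is non-negative and irreducible; Perron--Frobenius gives a simple positive maximal eigenvalue $\rho(\alpha)>0$ with strictly positive left and right eigenvectors, and the standard large-$N$ sandwiching $c_1\rho(\alpha)^N\le v_\alpha^{\top}A_\alpha^{\,N}\mathbf 1\le c_2\rho(\alpha)^N$ (valid even when $A_\alpha$ is periodic) yields $\frac1N\log\big(v_\alpha^{\top}A_\alpha^{\,N}\mathbf 1\big)\to\log\rho(\alpha)$. Writing $D_\alpha(Q_N\mid\mid P_N)=\frac1{\alpha-1}\log\int(dQ_N/dP_N)^{\alpha}dP_N$ gives the R\'enyi claim; taking $\alpha=2$ and using $1+\chi^2(Q_N\mid\mid P_N)=\int(dQ_N/dP_N)^2 dP_N$ gives the $\chi^2$ claim; taking $\alpha=\tfrac12$ and using $\int\sqrt{dP_N\,dQ_N}=1-\tfrac12 H^2(Q_N,P_N)=w^{\top}A_{1/2}^{\,N}\mathbf 1$ with $w(x)=\sqrt{\nu_p(x)\nu_q(x)}$ gives the Hellinger claim once the size of $\rho(1/2)$ is known.

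Finally, all the sign statements reduce to comparing $\rho(\alpha)$ with $1$, which I would read off from the row sums of $A_\alpha$: H\"older's inequality gives $\sum_y q(x,y)^{\alpha}p(x,y)^{1-\alpha}\le 1$ for $0<\alpha<1$ (for $\alpha=\tfrac12$ this is just Cauchy--Schwarz), while Jensen's inequality gives $\sum_y q(x,y)^{\alpha}p(x,y)^{1-\alpha}\ge 1$ for $\alpha>1$, with equality in the $x$-th row in either case iff $q(x,\cdot)=p(x,\cdot)$. A short argument --- test $A_\alpha$ against its strictly positive Perron eigenvector and use irreducibility to force that eigenvector to be constant whenever $\rho=1$ --- shows that an irreducible non-negative matrix with all row sums $\le 1$ (resp.\ $\ge 1$) and a strict inequality in some row has $\rho<1$ (resp.\ $\rho>1$). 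Hence, if $p\neq q$, then $\rho(\alpha)<1$ for $\alpha\in(0,1)$ and $\rho(\alpha)>1$ for $\alpha>1$, so $\frac1{\alpha-1}\log\rho(\alpha)>0$, $\log\rho(2)>0$, and $\rho(1/2)<1$, the last giving $w^{\top}A_{1/2}^{\,N}\mathbf 1\to 0$ exponentially and hence $H(Q_N,P_N)\to\sqrt2$; if $p=q$ (with matching initial laws) every $A_\alpha$ is stochastic, $\rho(\alpha)=1$, and all rates are $0$ while $H\equiv 0$. The main obstacle is the Perron--Frobenius step: one must handle possible periodicity of $A_\alpha$ (so only the logarithmic growth rate, not $A_\alpha^N/\rho(\alpha)^N$, converges) and check that the boundary vectors $v_\alpha,w,\mathbf 1$, which may have zero entries coming from degenerate initial laws, still pick up the leading eigenvalue; this and the elementary ``strictly sub-/super-stochastic $\Rightarrow$ $\rho\neq1$'' lemma are where I would concentrate the rigor.
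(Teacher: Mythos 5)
Your proposal is correct, and for the existence of the limits it coincides with the paper's proof: the Kullback--Leibler rate is obtained by expanding the log-likelihood ratio along the path and applying the ergodic theorem to the Ces\`aro averages of the marginal laws, and the R\'enyi, $\chi^2$ and Hellinger rates are all reduced to the Perron--Frobenius eigenvalue $\rho(\alpha)$ of the transfer matrix with entries $q(x,y)^{\alpha}p(x,y)^{1-\alpha}$, using $1+\chi^2 = e^{D_2}$ and $1-\tfrac12 H^2 = e^{-\frac12 D_{1/2}}$ exactly as in the appendix. Where you genuinely diverge is in proving that the rates vanish only when $p=q$. The paper argues by contradiction using the joint (quasi-)convexity of the R\'enyi divergence in $(P,Q)$ and a second-order Fisher-information expansion of $R\left(Q_N+\epsilon(P_N-Q_N)\mid\mid Q_N\right)$ along the mixture family, which implicitly requires interchanging the $N\to\infty$ limit with the small-$\epsilon$ expansion. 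You instead locate $\rho(\alpha)$ relative to $1$ directly: H\"older (for $0<\alpha<1$) and Jensen (for $\alpha>1$) control the row sums of the transfer matrix, with equality in a row iff that row of $q$ equals that of $p$, and the elementary maximum-principle argument (testing against the strictly positive Perron eigenvector and propagating equality by irreducibility) upgrades a single strict row-sum inequality to $\rho(\alpha)\neq 1$. Your route is more self-contained and in fact sharper --- it yields the strict inequalities $\rho(\alpha)<1$ for $\alpha\in(0,1)$ and $\rho(\alpha)>1$ for $\alpha>1$ whenever $p\neq q$, from which the Hellinger limit $\sqrt2$ follows immediately --- and your explicit attention to periodicity of the transfer matrix and to possible zero entries of the boundary vectors addresses points the paper passes over in silence. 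The one thing to make sure survives the write-up is the sandwiching $c_1\rho^N\le v^{\top}A_\alpha^N\mathbf 1\le c_2\rho^N$ via comparison of $\mathbf 1$ with the positive right Perron eigenvector, since that is what makes the logarithmic growth rate well defined without aperiodicity.
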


\proof See Appendix \ref{app:iid_markov}.

%%%%%%%%%%%%%%%%%%%%%%%%%%%%%%%%%%%%%%%%%%%%
\section{A  divergence with good scaling properties}

\subsection{Goal Oriented Divergence}
\label{section:goal-oriented}

In this Section we will first discuss the goal-oriented divergence which was  introduced 
by  \cite{dupuis2015path}, following the work in \cite{dupuis2011uq}.  
Subsequently in Sections \ref{section:iid_2}  and \ref{section:markov_2} and Section \ref{section:Gibbs}  we will demonstrate that this new divergence provides  bounds on the model discrepancy $E_Q(f) - E_P(f)$  between models $P$ and $Q$  which scale correctly with their 
system size,  provided the QoI $f$ has  the form of an ergodic average or  a statistical estimator.

Given an observable $f: \mathcal{X} \to \mathbb{R}$  we introduce  the cumulant generating function of $f$  
\begin{eqnarray}
\Lambda_{P,f} (c) = \log E_{P} (e^{cf}) .
\end{eqnarray}
We will assume $f$ is such that  $\Lambda_{P,f} (c)$ is finite in a neighborhood $(-c_0,c_0)$ of  the origin. 
For example if  $f$ is bounded then we can take $c_0= \infty$.  Under 
this assumption $f$ has  finite moments of any order and we will often 
use   the cumulant generating function of a mean  $0$ observable %$f - E_P(f)$
\begin{eqnarray}
\widetilde{\Lambda} _{P,f}(c) = \log E_{P} (e^{c(f-E_P(f))})  = \Lambda_{P,f} (c) - cE_P(f).
\end{eqnarray} 
The following bound is proved in \cite{dupuis2015path} and will play a 
fundamental role in the rest of the paper. 

\bigskip
\noindent
{\bf Goal-oriented divergence UQ bound:} If $Q$ is absolutely 
continuous with respect to $P$ and ${\Lambda} _{P,f}(c)$ is finite 
in a neighborhood of the origin, then 
\begin{equation}
\Xi_{-}(Q \mid \mid P;f) \le E_Q(f)-E_P(f) \le \Xi_{+}(Q \mid \mid P;f).
\label{eq:In_dupuis}
\end{equation}where 
\begin{eqnarray}
\Xi_{+}(Q \mid \mid P;f)&=& \inf _{c>0}\left\{ \frac{1}{c}  \widetilde{\Lambda} _{P,f}(c) +  \frac{1}{c} R(Q \mid \mid P) \right\} 
\label{eq:goaldivplus} \\
\Xi_{-}(Q \mid \mid P;f)&=&\sup _{c>0}\left\{ -\frac{1}{c}  \widetilde{\Lambda} _{P,f}(-c) -  \frac{1}{c} R(Q \mid \mid P) \right\} .
\label{eq:goaldivminus}
\end{eqnarray}

We refer to \cite{dupuis2015path} and \cite{dupuis2011uq} for 
details of the proof but the main idea
behind the proof is the variational principle for the relative entropy: 
for bounded $f$ we have,
\cite{dupuisellis},
\[
\log E_P(e^f) = \sup_{Q} \left\{ E_Q(f) - R(Q \mid \mid P)\right\}  
\]
and thus for any $Q$
\[
E_Q(f) \le \log E_P(e^f) + R(Q\mid \mid P) \, .
\]
Replacing $f$ by $c (f - E_P(f))$ with $c>0$  and optimizing over $c$ yields the upper bound. The lower bound is derived in a similar manner. 

\bigskip
\noindent{\bf Robustness}:
These new information bounds were shown in \cite{dupuis2011uq} to be robust in the sense that the upper bound is attained 
when considering  all models $Q$ with  a specified uncertainty threshold given by  $R(Q\mid \mid P) \le \eta$. Furthermore, the parameter $c$ in the variational representations \eqref{eq:goaldivplus} and \eqref{eq:goaldivminus}
controls the degree of robustness with respect to the model uncertainty captured by $R(Q\mid \mid P) $. In a control or optimization context these bounds are also related to  H$^\infty$ control, \cite{dupuis2000}.

\bigskip

As the following result from \cite{dupuis2015path} shows, the quantities $\Xi_{+}$ and $\Xi_{-}$ are divergences similar to the  relative (R$\acute{e}$nyi ) entropy, the $\chi^2$ divergence and the Hellinger distance. Yet they depend on the observable $f$ and thus will be referred to as goal-oriented divergences.

\bigskip
\noindent
{\bf Properties of the goal-oriented divergence:}
%\begin{theorem} We have 
%Assume that $h \in \mathcal{E}(\Omega)$ and $R (Q \| P ) < \infty$. then 
\begin{enumerate}
\item $\Xi_{+}(Q \mid \mid P ;f ) \ge 0$ and $\Xi_{-}(Q \mid \mid P;f)\le 0$.
\item $\Xi_{\pm}(Q \mid \mid P ;f) = 0$ if and only if $Q=P$ or $f$ is constant P-a.s.
\end{enumerate}
%\end{theorem}

It is instructive to understand the bound when $P$ and $Q$ are close 
to each other. Again we refer to \cite{dupuis2015path} for a proof and
provide here just an heuristic argument. First note that if $P=Q$ then 
it is easy to see that the infimum in the upper bound is attained at 
$c=0$ since $R(Q\mid\mid P)=0$ and $\widetilde{\Lambda} _{P,f}(c) > 0$ 
for $c>0$ (the function is convex in $c$ and we have $\widetilde{\Lambda} _{P,f}(0) = \widetilde{\Lambda} _{P,f}'(0)=0$ and $\widetilde{\Lambda} _{P,f}''(0)= \frac{1}{2} Var_P(f)$. 
So if $R(Q\mid\mid P)$ is small,  we can expand the right-hand side 
in $c$ and we need to find 
\[
\inf_{c > 0} \left\{ c \frac{Var_P(f)}{2} + O(c^2) + \frac{1}{c}R(Q\mid\mid P) \right\}\, .
\]
Indeed,  we find that the minimum has the form $\sqrt{Var_P(f)} \sqrt{2 R(Q\mid\mid P)} + O(R(Q \mid \mid P))$, \cite{dupuis2015path}.  The lower bound is similar
and we obtain:

\bigskip
\noindent
{\bf Linearized UQ bound} \cite{dupuis2015path}:
 If $R(P\mid\mid Q)$ is small we have 
%\begin{theorem}\label{theorem:as_ex} We have the asymptotic expansion 
\begin{eqnarray}
\mid E_Q(f)- E_P(f)\mid  \le   \sqrt{Var_P(f)}\sqrt{2R(Q\mid \mid P)}+O(R(Q\mid \mid P)).
\label{eq:goaldiv_linear}
\end{eqnarray}
%\label{Thm:goaldiv_linear}

\subsection{Example: Exponential Family} 
Next we  compute the goal-oriented divergences for an exponential family which covers many cases of interest 
including Markov and Gibbs measures (see Sections 3.4 and 4), as well as numerous probabilistic models in machine learning \cite{bishop,MacKay:2003}.

Given a reference measure $P^0$ (which does not 
need to be a finite measure) we say that $P^\theta$ is an exponential 
family if $P^\theta$ is absolutely continuous with respect to $P^0$ 
with 
\[
\frac{dP^\theta}{dP^0}(x) = \exp\left( t(x) \cdot \theta  -F(\theta)\right)
\]
where $\theta= [\theta_1, \cdots, \theta_K]^T  \in \Theta \subset \mathbb{R}^K$ 
is the parameter vector,  $t(x)=[t_1(x),..., t_K(x)]^T$ is the sufficient statistics 
vector and $F(\theta)$ is the log-normalizer 
\[
F(\theta) = \log \int e^{t(x) \cdot \theta}  dP^0(x)\,.
\]
Note that $F(\theta)$ is  a cumulant generating function for the 
sufficient statistics, for example we have $\nabla_\theta F(\theta) =  E_{P^\theta}(t)$.  
%Firstly, we discuss the exponential family distributions since it is a special case of Gibbs distributions.
%Given two exponential distribution
%\begin{equation}
%p(x;\theta)=exp(\langle t(x),\theta \rangle-F(\theta)+k(x))
%\end{equation} and
%\begin{equation}
%q(x;\theta')=exp(\langle t(x),\theta'\rangle-F(\theta')+k(x)),
%\end{equation}
%where $t(x)=[t_1(x),..., t_n(x)]^T$ is the sufficient statistics vector, $\theta$ and $\theta'$ are the parameter vectors, $F(\cdot)$ is the log-normalizer and k(x) is carrier function.
The relative entropy between two members of the exponential family
is then computed as   
\begin{eqnarray}
R(P^{\theta'} \mid \mid P^\theta) = \int \log \frac{dP^{\theta'}}{dP^\theta}(x)  dP^{\theta'}(x) 
&=&  E_{P^{\theta'}} \left( (\theta'-\theta)\cdot t(x) \right) +F(\theta)-F(\theta^{\prime}) \nonumber \nonumber \\ 
&=& (\theta'-\theta)\cdot \nabla F(\theta')+F(\theta)-F(\theta^{\prime})
\label{eq:re_exponential}
\end{eqnarray}
If we consider an observable which is a linear function of the sufficient statistics, that is  
\begin{equation}
f(x)= t(x) \cdot v 
\end{equation}
for some vector $v \in \mathbb{R}^K$ 
then the cumulant generating function of $f-E_{P^\theta}(f)$ is 
\begin{equation}
\tilde{\Lambda} _{P^\theta,f}(c) =\log E_{P^\theta}[e^{cf}]-c E_{P^\theta}(f)
= F(\theta+ cv)-F(\theta)-cv \cdot \nabla F(\theta).
\label{eq:cumu_exponential}
\end{equation}
and thus combining \eqref{eq:re_exponential} and \eqref{eq:cumu_exponential}, we obtain the divergences 
\begin{eqnarray}
\Xi_{+}(P^{\theta'}\|P^\theta; f)&=&\inf_{c>0}\frac{1}{c}\{ (\theta'-\theta)\cdot \triangledown F(\theta')-F(\theta^{\prime})+F(\theta+cv)-cv\cdot \triangledown F(\theta)\}
\\
\Xi_{-}(P^{\theta'}\|P^\theta;f)&=&\sup_{c>0}(-\frac{1}{c})\{ (\theta'-\theta)\cdot \triangledown F(\theta')-F(\theta^{\prime})+F(\theta-cv)+cv\cdot \triangledown F(\theta)\} 
\end{eqnarray}
%Detailed calculations can be found in Appendix. 
%
Note that if our observable is not in the sufficient statistics class then we 
can  obtain a similar formula by simply enlarging the sufficient 
statistics to  include the observable in question.

\subsection{Example: IID sequences}\label{section:iid_2}

To illustrate the scaling properties of the goal-oriented divergence consider 
first two product measures $P_N$ and $Q_N$ as in Section \ref{section:iid_1} and the same sample mean observable \eqref{mean:sec2}.
%Consider a bounded observable $g$ and recall $f_N(\sigma_{\Lambda_N})=\frac{1}{N}\sum_{k=1}^N g(\sigma_i)$. 
We now apply the bounds \eqref{eq:goaldivplus} and \eqref{eq:goaldivminus}
to $N f_N = \sum_{k=1}^N g(\sigma_k)$ to obtain 
\[
\frac{1}{N} \Xi_{-}(Q_N \mid \mid P_N ;  N f_N) \le E_{Q_N}(f_N)- E_{P_N}(f_N) \le \frac{1}{N} \Xi_{+}(Q_N \mid \mid P_N; N f_N).
\]
The following lemma shows that the bounds scale correctly with $N$. 
\begin{lemma} We have 
\[
\Xi_{\pm}(Q_N \mid \mid P_N ;  N f_N) = N \Xi_{\pm}(Q \mid \mid P ; g) \,.
\]
\end{lemma}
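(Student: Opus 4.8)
The plan is to reduce the statement to two tensorization facts — one for the relative entropy, already recorded in Lemma~\ref{lem:scaling_iid}, and one for the centered cumulant generating function — and then to observe that both variational problems defining $\Xi_\pm$ scale linearly in $N$ as soon as their two ingredients do.

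First I would compute $\widetilde{\Lambda}_{P_N, N f_N}(c)$. Since $N f_N = \sum_{k=1}^N g(\sigma_k)$ by \eqref{mean:sec2} and $P_N = P \times \cdots \times P$ is a product measure, the exponential moment factorizes: $E_{P_N}\big(e^{c N f_N}\big) = \prod_{k=1}^N E_P\big(e^{c g(\sigma_k)}\big) = \big(E_P(e^{cg})\big)^N$, and likewise $E_{P_N}(N f_N) = N E_P(g)$. Hence $\widetilde{\Lambda}_{P_N, N f_N}(c) = N\, \widetilde{\Lambda}_{P,g}(c)$, finite on the same neighbourhood of the origin on which $\Lambda_{P,g}$ is finite, so the admissible range of $c$ is the same in both variational problems. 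Combining this with $R(Q_N \mid\mid P_N) = N\, R(Q \mid\mid P)$ from Lemma~\ref{lem:scaling_iid}, the bracket in \eqref{eq:goaldivplus} evaluated at $N f_N$ becomes $\frac{1}{c}\big(N\, \widetilde{\Lambda}_{P,g}(c) + N\, R(Q \mid\mid P)\big) = N\cdot\frac{1}{c}\big(\widetilde{\Lambda}_{P,g}(c) + R(Q \mid\mid P)\big)$.

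Then I would take the infimum over $c>0$: because the nonnegative constant $N$ does not depend on $c$, one has $\inf_{c>0} N\cdot(\,\cdots) = N\,\inf_{c>0}(\,\cdots)$, which is precisely $N\,\Xi_{+}(Q \mid\mid P; g)$. The identity for $\Xi_{-}$ follows verbatim from \eqref{eq:goaldivminus}: the same factorization gives $\widetilde{\Lambda}_{P_N, N f_N}(-c) = N\,\widetilde{\Lambda}_{P,g}(-c)$, so the bracket is $-N\cdot\frac{1}{c}\big(\widetilde{\Lambda}_{P,g}(-c) + R(Q \mid\mid P)\big)$, and $N$ again pulls out of the supremum over $c>0$, yielding $N\,\Xi_{-}(Q \mid\mid P; g)$.

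There is no genuine obstacle here; the proof is a short calculation. The only two points that deserve an explicit line are that the factorization $E_{P_N}(e^{c N f_N}) = (E_P(e^{cg}))^N$ relies on $P_N$ being an honest product measure (independence of the coordinates), and that finiteness of $\Lambda_{P,g}$ near the origin is exactly what guarantees the infimum and supremum in \eqref{eq:goaldivplus}–\eqref{eq:goaldivminus} are taken over the same nonempty set of $c$ for $N f_N$ as for $g$. Both are among the standing hypotheses, so the scaling identity holds as stated.
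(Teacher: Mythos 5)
Your proof is correct and follows essentially the same route as the paper: tensorization of the relative entropy from Lemma~\ref{lem:scaling_iid}, the factorization $\widetilde{\Lambda}_{P_N,Nf_N}(c)=N\widetilde{\Lambda}_{P,g}(c)$ from the product structure of $P_N$, and pulling the factor $N$ out of the infimum and supremum. The extra remarks on the common domain of $c$ are a small but welcome addition that the paper leaves implicit.
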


\proof We have already noted that $R( Q_N \mid \mid P_N) = N R(Q \mid \mid P)$. 
Furthermore 
\begin{eqnarray}
{\widetilde \Lambda}_{P_N,Nf_N} (c)& = & \log E_{P_N}(e^{c NF_N }) - c E_{P_N}(N f_N) \nonumber \nonumber \\
&=&  \log \int_{\mathcal{X}_N} e^{c\sum_{i=1}^N g(\sigma_i)} \prod_{i=1}^N dP(\sigma_i) - c E_{P_N}\left( \sum_{i=1}^N g(\sigma_i)\right)\nonumber\nonumber \\
& = & N \log E_P (e^{cg}) - c N E_P(g) = N {\widetilde \Lambda}_{P,g} (c).
\end{eqnarray}
This result shows that the goal oriented divergence bounds captures perfectly 
the behavior of ergodic average as $N$ goes to infinity. In particular  
when P and Q are very close, $\Xi_{\pm}(Q \mid \mid P;g) \to 0$, which contrasts sharply with all the bounds in Section \ref{section:iid_1}.

\section{UQ and nonlinear response bounds for Markov sequences}
\label{section:markov_2}

In the context of Markov chains, there are a number of UQ challenges 
which are usually not addressed by standard numerical analysis or   UQ 
techniques: (a) Understand the effects of a model uncertainty 
on the long-time behavior (e.g. steady state) of the model. 
(b) Go beyond linear response and be able to understand how large perturbations affect the model, both in finite and long time regimes. 
(c) Have a flexible framework allowing to compare different models
as, for example for Ising model versus mean-field model approximations  considered in Section \ref{section:Examples}. 

The inequalities of Section \ref{section:goal-oriented} can provide new insights to all three questions,
at least when the bounds can be estimated or computed numerically or analytically.
As a first example in this direction we   consider Markov dynamics  with the same set-up as in Section 
\ref{section:markov_1}. We have the following bounds which 
exemplify how  the goal-oriented divergences provide  UQ bounds  for the long-time behavior of Markov 
chains.

\begin{theorem}
\label{thm:markov}
Consider two irreducible Markov chains with transition 
matrices $p(x,y)$ and $q(x,y)$ and stationary distributions $\mu_p$ and 
$\mu_q$ respectively.  If $p(x,\cdot)$ and $q(x,\cdot)$ are mutually absolutely continuous we have for any observable $g$ the bounds
\[
\xi_{-}(q \mid \mid p;g) \le E_{\mu_q}(g) - E_{\mu_p}(g) \le  \xi_{+}(q \mid \mid p;g)\, ,
\]
where 
\begin{eqnarray} \label{eq:bound for divergence rate}
\xi_{+}(q \mid \mid p;g) &=& \inf_{c \ge 0} \left\{\frac{1}{c}\lambda_{p,g}(c) + \frac{1}{c} r(q\mid \mid p) \right\}  \nonumber \nonumber \\
\xi_{-}(q \mid \mid p;g) &=& \sup_{c \ge 0} \left\{ -\frac{1}{c}\lambda_{p,g}(-c) - \frac{1}{c} r(q\mid \mid p)\right\}\, .
\end{eqnarray}
Here $$r(q\mid \mid p) = \lim_{N\to\infty}\frac{1}{N} R(Q_N\mid \mid P_N)$$
is the relative entropy rate and $\lambda_{p,g}(c)$ is the logarithm of the 
maximal eigenvalue of the non-negative matrix with entries $p(x,y)\exp( c ( g(y) - E_{\mu_p}(g)))$. 

Moreover, we have the asymptotic expansion in relative entropy rate $r(q\mid \mid p)$,
\begin{equation}
\label{linearization:theorem}
\left| E_{\mu_q}(g) - E_{\mu_p}(g) \right| \le \sqrt{ v_{\mu_p}(g) } \sqrt{2 r(q\mid \mid p)} + O (r(q \mid \mid p))
\end{equation}
where 
\[
v_{\mu_p(g)} \,=\, \sum_{k=-\infty}^\infty E_{\mu_p} 
\left( g(X_k) g(X_0)\right)
\]
is the integrated auto-correlation function for the observable $g$. 
\end{theorem}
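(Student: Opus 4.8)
\noindent The plan is to derive the steady-state bounds by applying the finite-$N$ goal-oriented divergence bound \eqref{eq:In_dupuis} to the ergodic average \eqref{ergodic:obs:sec2} and letting $N\to\infty$, and then to read off the asymptotic expansion \eqref{linearization:theorem} from the behaviour of the rate functional $\lambda_{p,g}$ near $c=0$.

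First I would apply \eqref{eq:In_dupuis} with $P\to P_N$, $Q\to Q_N$ and $f\to Nf_N=\sum_{i=1}^N g(X_i)$ --- this is legitimate since $g$ is bounded on the finite state space (so $\widetilde{\Lambda}_{P_N,Nf_N}$ is finite everywhere) and $Q_N\ll P_N$ by the mutual absolute continuity hypotheses --- and divide by $N$:
\[
\tfrac1N\Xi_{-}(Q_N\|P_N;Nf_N)\ \le\ E_{Q_N}(f_N)-E_{P_N}(f_N)\ \le\ \tfrac1N\Xi_{+}(Q_N\|P_N;Nf_N)\,.
\]
Irreducibility of the chains gives $\tfrac1N\sum_{i=1}^N\nu_p p^i\to\mu_p$, hence $E_{P_N}(f_N)\to E_{\mu_p}(g)$ and $E_{Q_N}(f_N)\to E_{\mu_q}(g)$, so it remains to show $\tfrac1N\Xi_{\pm}(Q_N\|P_N;Nf_N)\to\xi_{\pm}(q\|p;g)$. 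Writing out \eqref{eq:goaldivplus} gives $\tfrac1N\Xi_{+}(Q_N\|P_N;Nf_N)=\inf_{c>0}\tfrac1c\big(\tfrac1N\widetilde{\Lambda}_{P_N,Nf_N}(c)+\tfrac1N R(Q_N\|P_N)\big)$, and the two ingredients have clean limits: $\tfrac1N R(Q_N\|P_N)\to r(q\|p)$ by Lemma \ref{lem:scaling_markov}; and, writing $E_{P_N}(e^{c\sum_{i=1}^N g(X_i)})=\nu_p^{\top}M_c^{N}\mathbf 1$ with $M_c(x,y)=p(x,y)e^{cg(y)}$ an irreducible non-negative matrix, Perron--Frobenius gives $\tfrac1N\log E_{P_N}(e^{c\sum_i g(X_i)})\to\log\rho(c)$ with $\rho(c)$ the spectral radius of $M_c$; since the matrix with entries $p(x,y)e^{c(g(y)-E_{\mu_p}(g))}$ is $e^{-cE_{\mu_p}(g)}M_c$ and $\tfrac cN\sum_i E(g(X_i))\to cE_{\mu_p}(g)$, this yields $\tfrac1N\widetilde{\Lambda}_{P_N,Nf_N}(c)\to\lambda_{p,g}(c)$ for each $c$ (and the $\Xi_{-}$ statement by replacing $g$ with $-g$).

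The one genuine analytic point --- the step I expect to cost the most --- is the interchange $\lim_N\inf_c=\inf_c\lim_N$. The inequality $\limsup_N\inf_c(\cdot)\le\inf_c\lim_N(\cdot)$ is immediate from pointwise convergence; for the reverse I would show that the minimizing $c$ in $\tfrac1N\Xi_{+}$ stays in a fixed compact interval $[a,b]\subset(0,\infty)$ for all large $N$. It is bounded away from $0$ because $\widetilde{\Lambda}_{P_N,Nf_N}\ge 0$ (Jensen) forces the functional to be $\ge\delta/c$ near $c=0$ once $\tfrac1N R(Q_N\|P_N)\ge\delta>0$ (the non-degenerate case $p\ne q$), and away from $\infty$ because $c\mapsto\widetilde{\Lambda}_{P_N,Nf_N}(c)$ grows superlinearly, with slope at infinity bounded below by a fixed positive constant uniformly in large $N$ (for non-constant $g$, using irreducibility to reach the maximizing state). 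On $[a,b]$ the convex functions $\tfrac1N\widetilde{\Lambda}_{P_N,Nf_N}$ converge pointwise, hence uniformly, to $\lambda_{p,g}$, which legitimizes the interchange and gives $\lim_N\tfrac1N\Xi_{+}(Q_N\|P_N;Nf_N)=\inf_{c>0}\tfrac1c(\lambda_{p,g}(c)+r(q\|p))=\xi_{+}(q\|p;g)$; passing to the limit in the sandwich above proves the first assertion. The degenerate cases $p=q$ (all quantities vanish, attained at $c=0$ in the stated $\inf_{c\ge0}$) and $g$ constant are checked directly.

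For the asymptotic expansion \eqref{linearization:theorem}, note that $\lambda_{p,g}$ is smooth near $0$ with $\lambda_{p,g}(0)=\lambda_{p,g}'(0)=0$ and $\lambda_{p,g}''(0)=v_{\mu_p}(g)$, the last being the Green--Kubo identity that the curvature at the origin of the scaled cumulant generating function equals the integrated autocorrelation. Hence $\tfrac1c\lambda_{p,g}(c)=\tfrac12 v_{\mu_p}(g)\,c+O(c^2)$ and --- exactly the argument behind the linearized UQ bound \eqref{eq:goaldiv_linear}, now applied to the rate functional ---
\[
\xi_{+}(q\|p;g)=\inf_{c>0}\Big\{\tfrac12 v_{\mu_p}(g)\,c+\frac{r(q\|p)}{c}+O(c^2)\Big\}=\sqrt{v_{\mu_p}(g)}\,\sqrt{2\,r(q\|p)}+O\big(r(q\|p)\big)\,,
\]
the optimum being near $c_{*}=\sqrt{2r(q\|p)/v_{\mu_p}(g)}$; likewise $\xi_{-}(q\|p;g)=-\sqrt{v_{\mu_p}(g)}\sqrt{2r(q\|p)}+O(r(q\|p))$, and combining these two with the sandwich bound yields \eqref{linearization:theorem}.
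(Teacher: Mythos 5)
Your proposal is correct and follows essentially the same route as the paper: apply the finite-$N$ goal-oriented bound to $Nf_N$, identify the limits of $\frac1N R(Q_N\|P_N)$ and $\frac1N\widetilde{\Lambda}_{P_N,Nf_N}(c)$ via Lemma \ref{lem:scaling_markov} and Perron--Frobenius, and expand $\lambda_{p,g}$ at $c=0$ using $\lambda_{p,g}''(0)=v_{\mu_p}(g)$ for the linearized bound. The only difference is that you carefully justify the interchange $\lim_N\inf_c=\inf_c\lim_N$, which the paper passes over in silence; note that for the stated inequality only the easy direction $\limsup_N\inf_c\le\inf_c\lim_N$ is actually required, your compactness argument being needed only to show that no tightness is lost in the limit.
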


\proof We apply the goal-oriented divergence bound to the observable $N f_N= \sum_{k=1}^N g(X_i)$ and have 
\[
\frac{1}{N} \Xi_{-}(Q_N \mid \mid P_N ;  N f_N) \le E_{Q_N}(f_N)- E_{P_N}(f_N) \le \frac{1}{N} \Xi_{+}(Q_N \mid \mid P_N; N f_N).
\]
We then take the limit $N \to \infty$. By the ergodicity of $P_N$ we have 
$\lim_{N\to \infty} E_{P_N}(f_N) =  E_{\mu_p}(g)$ and similarly for $Q_N$. 
We have already established in Lemma \ref{lem:scaling_markov} the existence 
of the limit $r(q\mid \mid p) \,=\, \lim_{N\to \infty} \frac{1}{N}R(Q_N \mid \mid P_N)$. 
For the moment generating function in $\Xi_{\pm}$ we have
\begin{eqnarray}
\frac{1}{N}\widetilde{\Lambda}_{P_N,Nf_N}(c) &=& \frac{1}{N} \log E_{P_N}(e^{c Nf_N})  - c \frac{1}{N} E_{P_N}(Nf_N) \nonumber \nonumber \\
&=& \frac{1}{N} \log \sum_{X_0, \cdots X_N}  \nu_p(X_0) \prod_{k=1}^N p(X_{k-1}, X_k) e^{c g(X_k)}  -  c E_{P_N}(f_N) \nonumber \nonumber \\ 
&=& \frac{1}{N} \log \nu_p P_{cg}^N  -  c E_{P_N}(f_N) \nonumber 
\end{eqnarray}
where $P_{cg}$ is the non-negative matrix with entries $p(x,y)e^{cg(y)}$. 
The Perron-Frobenius theorem gives the existence of the limit. 

The asymptotic expansion is proved exactly as for the linearized UQ bound \eqref{eq:goaldiv_linear}. 
It is not difficult to compute the second derivative of 
$\frac{1}{c}\lambda_{p,g}(c)$ with respect to $c$ by noting all function are 
analytic of function of $c$ and thus we can freely exchange the $N \to \infty$
limit with the derivative with respect to $c$. Therefore we obtain that 
\[
\frac{d^2}{dc^2} \lambda_{P,g}(0) \,=\, \lim_{N \to \infty} Var_{P_N}(Nf_N) 
\]
and a standard computation shows that the limit is the integrated 
autocorrelation function $v_{\mu_p}(g)$. 

\medskip\noindent
{\bf Remark:}
A well studied  case of UQ for stochastic models and in particular stochastic dynamics  is  linear response,
also referred to  as   local sensitivity analysis,  which addresses the role of infinitesimal perturbations to model parameters of  probabilistic models, e.g. \cite{Hairer:10,Glynn:book}. Here \eqref{linearization:theorem}
provides  computable bounds in the linear response regime, as demonstrated earlier in \cite{dupuis2015path} and which can be used for fast screening of uninfluential parameters in reaction networks with a very large number of parameters, \cite{arampatzis2015}.

\medskip
\noindent
{\bf Nonlinear response bounds:} Beyond  linear response considerations,  nonlinear response methods attempt to   address the role of larger parameter perturbations. Some of the relevant methods  involve   asymptotic series expansions  in terms of the  small parameter perturbation \cite{diezemann2012response,maes2015response},  which quickly become computationally intractable as more terms need to be computed. However, 
the inequalities \eqref{eq:goaldivplus} and \eqref{eq:goaldivminus}
provide robust and computable nonlinear response bounds.

The main result in Theorem \ref{thm:markov} was first obtained in  \cite{dupuis2015path}. Here we revisit it in the context of scalability in both space and time and  connect it 
to nonlinear response calculations for stochastic dynamics  in statistical mechanics. 
This connection is made more precise in the following Corollaries which follow from Theorem \ref{thm:markov} and provide  layers of progressively  simpler-and accordingly less sharp-bounds:

\begin{corollary} \label{corollary:RER}
Based on the assumptions  and definitions in Theorem \ref{thm:markov}, we have the following two bounds that involve two  upper bounds  of $r(q\mid \mid p)$.  Bound (i) is sharper than bound (ii), while the latter is straightforward to calculate analytically. 

(i) Let $R(q(x,\cdot)\|p(x,\cdot))=\sum_y q(x,y)\log  \frac{q(x,y)}{p(x,y)}$;  then, 
\begin{eqnarray} 
\xi_{+}(q \mid \mid p;g) &\le & \inf_{c \ge 0} \left\{\frac{1}{c}\lambda_{p,g}(c) + \frac{1}{c} \sup_x R(q(x,\cdot)\|p(x,\cdot)) \right\}  \nonumber  \\
\xi_{-}(q \mid \mid p;g) &\ge & \sup_{c \ge 0} \left\{ -\frac{1}{c}\lambda_{p,g}(-c) - \frac{1}{c} \sup_x R(q(x,\cdot)\|p(x,\cdot))\right\}\, .
\end{eqnarray}

(ii) Next,  we have  the upper bound in terms of the quantity $\sup\limits_{x,y}|\log  \frac{q(x,y)}{p(x,y)}|$,
\begin{eqnarray} 
\xi_{+}(q \mid \mid p;g) &\le & \inf_{c \ge 0} \left\{\frac{1}{c}\lambda_{p,g}(c) + \frac{1}{c} \sup_{x,y} |\log \frac{q(x,y)}{p(x,y)}| \right\}  \nonumber \\
\xi_{-}(q \mid \mid p;g) &\ge & \sup_{c \ge 0} \left\{ -\frac{1}{c}\lambda_{p,g}(-c) - \frac{1}{c} \sup_{x,y}|\log \frac{q(x,y)}{p(x,y)}| \right\}\, .
\end{eqnarray}

% (iii)
% \begin{eqnarray}
% \xi_{\pm}(q \mid \mid p;g) &\lessapprox& \sqrt{ v_{\mu_p}(g) } \sqrt{2 \sup_x R(q(x,\cdot)\|p(x,\cdot))} \\
% &\lessapprox& \sqrt{ v_{\mu_p}(g) } \sqrt{2 \sup_{x,y} |\log  \frac{q(x,y)}{p(x,y)}|}
% \end{eqnarray}
\end{corollary}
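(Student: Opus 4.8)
The plan is to deduce the Corollary directly from Theorem \ref{thm:markov}, using only two elementary facts: a pointwise comparison of the nonnegative quantities involved, and the monotonicity of the variational formulas \eqref{eq:bound for divergence rate} in the relative entropy rate $r(q\mid\mid p)$.

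First I would rewrite $r(q\mid\mid p)$ as a $\mu_q$-average of the per-step relative entropies. From the explicit formula in Lemma \ref{lem:scaling_markov},
\[
r(q\mid\mid p) \,=\, \sum_{x,y}\mu_q(x)\,q(x,y)\log\frac{q(x,y)}{p(x,y)} \,=\, \sum_x \mu_q(x)\,R(q(x,\cdot)\|p(x,\cdot))\,.
\]
Since each summand is nonnegative (relative entropy is a divergence) and $\sum_x \mu_q(x)=1$, this gives immediately
\[
r(q\mid\mid p)\,\le\, \sup_x R(q(x,\cdot)\|p(x,\cdot))\,,
\]
the quantity appearing in part (i). For part (ii) I would bound each per-step relative entropy by the supremum of the log-ratio: for fixed $x$,
\[
R(q(x,\cdot)\|p(x,\cdot)) \,=\, \sum_y q(x,y)\log\frac{q(x,y)}{p(x,y)} \,\le\, \sum_y q(x,y)\,\sup_{x',y'}\Big|\log\frac{q(x',y')}{p(x',y')}\Big| \,=\, \sup_{x,y}\Big|\log\frac{q(x,y)}{p(x,y)}\Big|\,,
\]
so that $\sup_x R(q(x,\cdot)\|p(x,\cdot)) \le \sup_{x,y}|\log(q(x,y)/p(x,y))|$. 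Altogether this produces the chain $r(q\mid\mid p)\le \sup_x R(q(x,\cdot)\|p(x,\cdot))\le \sup_{x,y}|\log(q(x,y)/p(x,y))|$, which is also what makes bound (i) sharper than bound (ii).

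The last step is to push these inequalities through the variational representations. For any constant $B\ge r(q\mid\mid p)$ and any $c>0$ we have $\frac1c\lambda_{p,g}(c)+\frac1c B \ge \frac1c\lambda_{p,g}(c)+\frac1c r(q\mid\mid p)$, and the infimum over $c\ge 0$ preserves this, so by Theorem \ref{thm:markov},
\[
\xi_{+}(q\mid\mid p;g) \,=\, \inf_{c\ge 0}\Big\{\tfrac1c\lambda_{p,g}(c)+\tfrac1c r(q\mid\mid p)\Big\} \,\le\, \inf_{c\ge 0}\Big\{\tfrac1c\lambda_{p,g}(c)+\tfrac1c B\Big\}\,.
\]
For $\xi_{-}$ the factor $-\frac1c$ reverses the inequality, giving $-\frac1c\lambda_{p,g}(-c)-\frac1c B \le -\frac1c\lambda_{p,g}(-c)-\frac1c r(q\mid\mid p)$ and hence $\xi_{-}(q\mid\mid p;g)\ge \sup_{c\ge 0}\{-\tfrac1c\lambda_{p,g}(-c)-\tfrac1c B\}$. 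Taking $B=\sup_x R(q(x,\cdot)\|p(x,\cdot))$ yields (i) and $B=\sup_{x,y}|\log(q(x,y)/p(x,y))|$ yields (ii); that (i) is the sharper statement follows from the same monotonicity applied to the chain of inequalities on $B$ established above.

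There is essentially no hard step here. The only points needing a word of care are the endpoint $c=0$ in the variational formulas — harmless since $\lambda_{p,g}(0)=0$ because $p$ is stochastic, as noted after Theorem \ref{thm:markov} — and keeping track of the sign reversal for $\xi_{-}$. All the substance is contained in Theorem \ref{thm:markov}; the corollary merely repackages $r(q\mid\mid p)$ into quantities that can be read off directly from the transition matrices $p$ and $q$ without computing the stationary distribution $\mu_q$.
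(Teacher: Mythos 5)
Your proposal is correct and follows essentially the same route as the paper: decompose $r(q\mid\mid p)=\sum_x\mu_q(x)R(q(x,\cdot)\|p(x,\cdot))$, bound it by $\sup_x R(q(x,\cdot)\|p(x,\cdot))$ and then by $\sup_{x,y}|\log(q(x,y)/p(x,y))|$, and substitute into the variational formulas of Theorem \ref{thm:markov}. You are somewhat more explicit than the paper about the monotonicity of the variational expressions in the relative-entropy-rate term (including the sign reversal for $\xi_-$), which the paper leaves as "easy to obtain."
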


\begin{proof}
We consider the relative entropy rate $r(q\mid \mid p)$,
\begin{align}\label{eq:RER_bound1}
r(q\mid\mid p) &= \sum_{x,y} \mu_q(x) q(x,y)
\log  \frac{q(x,y)}{p(x,y)}  \nonumber \\
&=E_{\mu_q(x)}\left(\sum_y q(x,y)
\log  \frac{q(x,y)}{p(x,y)}\right) \nonumber\\
&=E_{\mu_q(x)}\left(R(q(x,\cdot)\|p(x,\cdot))\right)  \nonumber\\
& \le \sup_x R(q(x,\cdot)\|p(x,\cdot)),
\end{align}
where $R(q(x,\cdot)\|p(x,\cdot))=\sum_y q(x,y)
\log  \frac{q(x,y)}{p(x,y)}$.
Moreover, we have 
\begin{align*}
R(q(x,\cdot)\|p(x,\cdot))=\sum_y q(x,y)
\log  \frac{q(x,y)}{p(x,y)} \le \sup_y |\log  \frac{q(x,y)}{p(x,y)}|.
\end{align*}
Therefore  we can obtain another bound for $r(q\mid\mid p)$, that is,
\begin{equation}\label{eq:RER_bound2}
r(q\mid\mid p) \le \sup_{x,y} |\log  \frac{q(x,y)}{p(x,y)}|.
\end{equation}
This bound  may be not as sharp as the one in \eqref{eq:RER_bound1}, but it is more easily computable.
Thus, by \eqref{eq:bound for divergence rate}, \eqref{eq:RER_bound1} and \eqref{eq:RER_bound2}, it is easy to obtain (i) and (ii).
\end{proof}

If we consider the linearized bound in \eqref{linearization:theorem}, then combining the bounds \eqref{eq:RER_bound1} and \eqref{eq:RER_bound2} of $r(q\|p)$, we can obtain the following bound, which is a further simplification of Corollary \ref{corollary:RER},
 again at the expense of the tightness of the bounds.
\begin{corollary}
\label{corollary:RER_linear}
Under the assumptions  and definitions in Theorem \ref{thm:markov}, we have:
\begin{eqnarray}
\xi_{\pm}(q \mid \mid p;g) &\le& \pm \sqrt{ v_{\mu_p}(g) } \sqrt{2 \sup_x R(q(x,\cdot)\|p(x,\cdot))}+O(\sup_x R(q(x,\cdot)\|p(x,\cdot))) 
\label{eq:RER_linear1}\\
&\le& \pm \sqrt{ v_{\mu_p}(g) } \sqrt{2 \sup_{x,y} |\log  \frac{q(x,y)}{p(x,y)}|}+O(\sup_{x,y} |\log  \frac{q(x,y)}{p(x,y)}|)
\label{eq:RER_linear2}.
\end{eqnarray}
\end{corollary}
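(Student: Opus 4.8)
The plan is to combine the linearized bound \eqref{linearization:theorem} of Theorem~\ref{thm:markov} with the two deterministic upper bounds \eqref{eq:RER_bound1} and \eqref{eq:RER_bound2} on the relative entropy rate $r(q\mid \mid p)$. The corollary asserts nothing beyond Theorem~\ref{thm:markov} except that $r(q\mid \mid p)$ may be replaced by either of these cruder but more explicit quantities; the only things to verify are that such a replacement is legitimate, i.e. that the right-hand side of the linearized bound is monotone in the relative entropy rate, and that the $O(\cdot)$ remainder survives the replacement.

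First I would note that the variational formulas \eqref{eq:bound for divergence rate} make this monotonicity transparent: $\xi_{+}(q\mid \mid p;g)=\inf_{c\ge 0}\{\tfrac1c\lambda_{p,g}(c)+\tfrac1c r(q\mid \mid p)\}$ is an infimum of functions each nondecreasing in $r(q\mid \mid p)$, hence itself nondecreasing in $r(q\mid \mid p)$, and likewise $-\xi_{-}(q\mid \mid p;g)=\inf_{c\ge 0}\{\tfrac1c\lambda_{p,g}(-c)+\tfrac1c r(q\mid \mid p)\}$ is nondecreasing in $r(q\mid \mid p)$. Consequently, for any $a\ge r(q\mid \mid p)$ we get $\xi_{+}(q\mid \mid p;g)\le\inf_{c\ge 0}\{\tfrac1c\lambda_{p,g}(c)+\tfrac1c a\}$ and $-\xi_{-}(q\mid \mid p;g)\le\inf_{c\ge 0}\{\tfrac1c\lambda_{p,g}(-c)+\tfrac1c a\}$ — the same expressions appearing in Theorem~\ref{thm:markov} but with $r(q\mid \mid p)$ replaced by the larger $a$. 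Re-running the second-derivative computation from the proof of Theorem~\ref{thm:markov}, which gives $\lambda_{p,g}''(0)=v_{\mu_p}(g)$ and hence $\inf_{c\ge 0}\{c\,v_{\mu_p}(g)/2+O(c^2)+a/c\}=\sqrt{v_{\mu_p}(g)}\sqrt{2a}+O(a)$ as $a\to 0$, with $a$ in place of $r(q\mid \mid p)$, therefore yields $\xi_{+}(q\mid \mid p;g)\le\sqrt{v_{\mu_p}(g)}\sqrt{2a}+O(a)$ and $-\xi_{-}(q\mid \mid p;g)\le\sqrt{v_{\mu_p}(g)}\sqrt{2a}+O(a)$. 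Taking $a=\sup_x R(q(x,\cdot)\|p(x,\cdot))$, which dominates $r(q\mid \mid p)$ by \eqref{eq:RER_bound1}, gives \eqref{eq:RER_linear1}; taking instead $a=\sup_{x,y}|\log\tfrac{q(x,y)}{p(x,y)}|$, which dominates the former by \eqref{eq:RER_bound2}, gives \eqref{eq:RER_linear2}, and the chain of inequalities in the statement follows by transitivity.

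The only point needing care — the nearest thing to an obstacle — is the bookkeeping of the $O(\cdot)$ remainder under these substitutions: since $r(q\mid \mid p)\le\sup_x R(q(x,\cdot)\|p(x,\cdot))\le\sup_{x,y}|\log\tfrac{q(x,y)}{p(x,y)}|$ and the remainder in the expansion is, near $0$, bounded in absolute value by a constant times its argument, $O(r(q\mid \mid p))$ is genuinely absorbed into $O(\sup_x R(q(x,\cdot)\|p(x,\cdot)))$ and then into $O(\sup_{x,y}|\log\tfrac{q(x,y)}{p(x,y)}|)$, all the expansions being understood in the regime $q\to p$. One should also keep track of the sign convention encoded in the $\pm$ notation: for the $+$ branch the displayed relation is an upper bound on $\xi_{+}\ge 0$ and for the $-$ branch a lower bound on $\xi_{-}\le 0$, consistent with \eqref{linearization:theorem}. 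With this, no further estimates are required.
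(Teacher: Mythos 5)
Your argument is correct and follows essentially the same route as the paper, which simply combines the linearized expansion \eqref{linearization:theorem} with the upper bounds \eqref{eq:RER_bound1} and \eqref{eq:RER_bound2} on $r(q\mid\mid p)$. Your explicit verification that the variational formulas \eqref{eq:bound for divergence rate} are monotone in the relative entropy rate, and your bookkeeping of the $O(\cdot)$ remainders and of the sign convention in the $\pm$ notation, supply details the paper leaves implicit but introduce no new idea.
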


\medskip
\noindent{\bf Remark:}
By the previous two Corollaries, we get some cheap ways to replace the calculation of  $\xi_{\pm}(q \mid \mid p;g)$ since it is much easier to calculate $\sup\limits_x R(q(x,\cdot)\|p(x,\cdot))$ or $\sup\limits_{x,y} |\log  \frac{q(x,y)}{p(x,y)}|$  than $r(q\|p)$ itself, especially the latter one. 
In practice, we can first attempt to estimate $\xi_{\pm}(q \mid \mid p;g)$ by calculating the leading term in \eqref{eq:RER_linear1} or \eqref{eq:RER_linear2}.
If the  the linearization assumptions in the last Corollary fail, then we can try to use  Corollary \ref{corollary:RER} or Theorem \ref{thm:markov} which can also give computable bounds or estimates of $\xi_{\pm}(q \mid \mid p;g)$. 

Finally, 
the bound in \eqref{eq:RER_bound2} is the Markov chain  analogue of the triple norm $|||\cdot|||$ used  in the estimation of UQ bounds for QoIs  of Gibbs measures, which we discuss in depth in Section \ref{section:Gibbs}.

\section{UQ and nonlinear response  bounds for Gibbs measures}
\label{section:Gibbs}

The Gibbs measure is one of the central objects in statistical mechanics  and molecular dynamics simulation,  \cite{simon2014statistical}, \cite{tuckerman:book}.  On the other hand Gibbs measures in  the form of Boltzmann Machines or Markov Random Fields provide  one of the  key classes of models in machine learning and pattern recognition, \cite{MacKay:2003,bishop}. Gibbs measures are probabilistic models which are inherently high dimensional, describing spatially distributed systems or a large number of interacting molecules.
In this Section we derive scalable UQ bounds for Gibbs measures based on the goal oriented inequalities discussed in Section \ref{section:goal-oriented}. Gibbs measures can be  set on a lattice or in continuum space, here for  simplicity in the presentation we focus on lattice systems.

\smallskip
\noindent
{\bf Lattice spins systems.} We consider Gibbs measures for lattice 
systems on $\mathbb{Z}^d$. If we let $S$ be the configuration space of a 
single  particle at a single site $x \in \mathbb{Z}^d$, then $S^X$ is the 
configuration space for the particles in $X \subset \mathbb{Z}^d$; we 
denote by  $\sigma_X \,=\, \{ \sigma_x\}_{x\in X}$ an element of $S^X$.   
We will be interested in large systems so we let $\Lambda_N=\{x\in 
\mathbb{Z}^d, |x_i|\le n\}$ denote the square lattice with $N=(2n+1)^d$ 
lattice sites. We shall use the shorthand notation $\lim \limits_N$ to denote 
taking limit along the increasing sequence of lattices $\Lambda_N$ which 
eventually cover $\mathbb{Z}^d$. 

\smallskip
\noindent
{\bf Hamiltonians, interactions, and Gibbs measures}. 
To specify a Gibbs measures we specify the energy  
$H_N(\sigma_{\Lambda_N})$ of a set of particles
in the region $\Lambda_N$. It is convenient to introduce the concept of 
an interaction 
%$\Phi= \{\Phi_X\}_{X \subset \mathbb{Z}^d, X \textrm{finite}}$ 
$\Phi= \{\Phi_X: X \subset \mathbb{Z}^d, X \textrm{finite}\}$ 
which associates to any finite subset $X$ a function $\Phi_X(\sigma_X)$ which 
depends  only on the configuration in $X$. We shall always assume that 
interactions are  translation-invariant, that is for any $X \subset 
\mathbb{Z}^d$ and any $a\in \mathbb{Z}^d$, $\Phi_{X+a}$ is obtained by 
translating $\Phi_X$.  For translation-invariant interactions we have the 
norm \cite{simon2014statistical}
\begin{equation}
\label{triple}
||| \Phi ||| = \sum_{\mathrm{X} \ni 0} |\mathrm{X}|^{-1}  \|\Phi_X\|_\infty
\end{equation}
and denote by $\mathcal{B}$  the corresponding Banach space of 
interactions. 
Given an interaction $\Phi$ we then define the Hamiltonians $H^\Phi_N$ (with free boundary conditions) by 
\begin{equation}
H_N^\Phi(\sigma_{\Lambda_N})=\sum_{X \subset \Lambda_N}\Phi_X(\sigma_X),
\end{equation}
and Gibbs measures $\mu_N^\Phi$ by 
\begin{equation}
d\mu_N^\Phi(\sigma_{\Lambda_N}) = \frac{1}{Z_N^\Phi} e^{ - H_N(\sigma_{\Lambda_N})} dP_N(\sigma_{\Lambda_N}) ,
\end{equation}
where $P_N$ is the counting measure on $S^{\Lambda_N}$ and $Z_N^\Phi= \sum_{ \sigma_{\Lambda_N}} e^{ - H_N(\sigma_{\Lambda_N})}$ is the normalization constant. 
%\[
%Z_N^\phi = \sum_{ \sigma_{\lambda_N} \in S^{\Lambda_N}} e^{ - H_N(\sigma_{\Lambda_N})}
%\]
In a similar way one can consider periodic boundary conditions 
or more general boundary conditions, 
see \cite{simon2014statistical} for details.

\medskip
\noindent
{\bf Example: Ising model.} For the $d$-dimensional nearest neighbor 
Ising model at inverse temperature $\beta$ we have 
\[
H_N(\sigma_{\Lambda_N})\,=\, -\beta J \sum_{\langle x, y \rangle \subset \Lambda_N} \sigma(x) \sigma(y) - \beta h \sum_{x \in \Lambda_N} \sigma(x)  
\]
where $\langle x, y \rangle $ denotes a pair of neighbors with $\sup\limits_{i}|x_i-y_i|=1$. So we have 
\[
\Phi_X = \left\{ \begin{array}{cc} -\beta J \sigma(x) \sigma(y)\, , & X = \{ x, y\}, \nonumber \\
- \beta h \sigma(x)\, , & X= \{x\},\nonumber \\
0 & \textrm{otherwise}, \nonumber \\
\end{array}
\right.
\]
and it is easy to see that \eqref{triple} becomes
\[
|||\Phi||| = \beta(d |J| + |h|).
\]

\smallskip 
\noindent{\bf Observables.}  
We will consider observables of the form \[
f_N(\sigma_{\Lambda_N})= \frac{1}
{N}\sum_{x \in \Lambda_N} g(\sigma_x)
\]
for some observable $g$. It will be useful to 
note that $N f_N$ is nothing but Hamiltonian $H_N^{\Gamma^g}$ for 
the interaction $\Gamma^g$ with  
\begin{equation}
\label{gamma:sec4}
\Gamma^g_{\{x\}}=g\, , \quad\mbox{and} \quad  \Gamma^g_X=0 \quad  \mbox{if} \quad X \not= \{x\}\, .   
\end{equation}

\smallskip
\noindent 
{\bf UQ bounds for Gibbs measures in finite volume.} Given two Gibbs 
measure $\mu_N^\Phi$ and  $\mu_N^\Psi$ straightforward computations show 
that for the relative entropy we have 
\begin{equation}\label{eq:re_gibbs}
R(\mu_N^\Psi \mid \mid \mu_N^\Phi) \,=\, \log Z_N^\Phi - \log Z_N^\Psi + E_{\mu_N^\Psi}( H_N^\Phi - H_N^\Psi) \,.
\end{equation}
while for the cumulant generating function we have
\begin{equation}\label{eq:cg_gibbs}
\widetilde{\Lambda}_{\mu_N^\Phi, Nf_N}(c)  
= \log Z_N^{\Phi - c \Gamma^g} - \log Z_N^\phi  - c E_{\mu_N^\Phi}(N f_N)
\end{equation}
and thus we obtain immediately from the results in Section \ref{section:goal-oriented}
\begin{proposition}(Finite volume UQ bounds for Gibbs measures) For two Gibbs measures $\mu_N^\Phi$ and $\mu_N^\Psi$ 
we have the bound
\begin{equation}\label{eq:finite_gibbs}
\frac{1}{N}
\Xi_{-}(\mu_N^{\Psi}\mid \mid \mu_N^{\Phi}; Nf_N) \le  E_{\mu_N^\Psi}(f_N) - E_{\mu_N^\Phi}(f_N) \le \frac{1}{N}
\Xi_{+}(\mu_N^{\Psi}\mid \mid \mu_N^{\Phi}; Nf_N)
\end{equation}
where 
\begin{align}
\Xi_{+}(\mu_N^{\Psi}\mid \mid \mu_N^{\Phi}; Nf_N) &= \inf_{c >0}\frac{1}{c}\left\{ \log Z_N^{\Phi - c \Gamma^g} - \log Z_N^\Psi +  E_{\mu_N^\Psi}( H_N^\Phi - H_N^\Psi) - c E_{\mu_N^\Phi}(N f_N)\right\} \label{eq:finite_gibbs+} \\
\Xi_{-}(\mu_N^{\Psi}\mid \mid \mu_N^{\Phi}; Nf_N)&= 
\sup_{c > 0}(-\frac{1}{c})\left\{ \log Z_N^{\Phi + c \Gamma^g} - \log Z_N^\Psi +  E_{\mu_N^\Psi}( H_N^\Phi - H_N^\Psi) +c E_{\mu_N^\Phi}(N f_N)\right\}
\label{eq:finite_gibbs-}.
\end{align}
\end{proposition}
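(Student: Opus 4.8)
The plan is to read the statement off directly from the goal-oriented divergence UQ bound \eqref{eq:In_dupuis}, applied with $P=\mu_N^\Phi$, $Q=\mu_N^\Psi$ and the observable $Nf_N$ in place of $f$, after substituting the closed forms of the two ingredients entering $\Xi_\pm$: the relative entropy $R(\mu_N^\Psi\mid\mid\mu_N^\Phi)$ and the centered cumulant generating function $\widetilde\Lambda_{\mu_N^\Phi,Nf_N}(\pm c)$. To invoke \eqref{eq:In_dupuis} I would first check its hypotheses: since $\Lambda_N$ is a finite lattice and $S$ is the (discrete) single-site space, $S^{\Lambda_N}$ is a finite set, both Gibbs densities $d\mu_N^\Phi/dP_N$ and $d\mu_N^\Psi/dP_N$ are strictly positive, hence $\mu_N^\Psi$ and $\mu_N^\Phi$ are mutually absolutely continuous; moreover $f_N$ is bounded, so $\Lambda_{\mu_N^\Phi,Nf_N}(c)$ is finite for every $c\in\R$. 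Thus \eqref{eq:In_dupuis} applies with $f$ replaced by $Nf_N$.

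Next I would record the two elementary identities \eqref{eq:re_gibbs} and \eqref{eq:cg_gibbs}. For the relative entropy, $\frac{d\mu_N^\Psi}{d\mu_N^\Phi}=\frac{Z_N^\Phi}{Z_N^\Psi}\,e^{H_N^\Phi-H_N^\Psi}$, so taking logarithms and then expectations under $\mu_N^\Psi$ gives $R(\mu_N^\Psi\mid\mid\mu_N^\Phi)=\log Z_N^\Phi-\log Z_N^\Psi+E_{\mu_N^\Psi}(H_N^\Phi-H_N^\Psi)$. For the cumulant generating function I would use that $Nf_N=H_N^{\Gamma^g}$ together with the linearity $H_N^{\Phi-c\Gamma^g}=H_N^\Phi-cH_N^{\Gamma^g}$ to recognize $\sum_{\sigma_{\Lambda_N}}e^{cNf_N(\sigma_{\Lambda_N})}e^{-H_N^\Phi(\sigma_{\Lambda_N})}=Z_N^{\Phi-c\Gamma^g}$; hence $E_{\mu_N^\Phi}(e^{cNf_N})=Z_N^{\Phi-c\Gamma^g}/Z_N^\Phi$ and $\widetilde\Lambda_{\mu_N^\Phi,Nf_N}(c)=\log Z_N^{\Phi-c\Gamma^g}-\log Z_N^\Phi-cE_{\mu_N^\Phi}(Nf_N)$. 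Replacing $c$ by $-c$ yields the same identity with $Z_N^{\Phi+c\Gamma^g}$ in place of $Z_N^{\Phi-c\Gamma^g}$.

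Finally I would assemble the bounds. For the upper bound, the bracket in \eqref{eq:goaldivplus} reads $\frac1c\bigl[\widetilde\Lambda_{\mu_N^\Phi,Nf_N}(c)+R(\mu_N^\Psi\mid\mid\mu_N^\Phi)\bigr]$, and substituting the two identities the $\pm\log Z_N^\Phi$ terms cancel, leaving precisely $\frac1c\{\log Z_N^{\Phi-c\Gamma^g}-\log Z_N^\Psi+E_{\mu_N^\Psi}(H_N^\Phi-H_N^\Psi)-cE_{\mu_N^\Phi}(Nf_N)\}$, i.e.\ \eqref{eq:finite_gibbs+}; the same cancellation applied to \eqref{eq:goaldivminus}, which uses $\widetilde\Lambda_{\mu_N^\Phi,Nf_N}(-c)$ with an overall minus sign, produces \eqref{eq:finite_gibbs-}. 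Dividing the resulting inequality \eqref{eq:In_dupuis} by $N$ and using $E_{\mu_N^\Psi}(f_N)-E_{\mu_N^\Phi}(f_N)=\frac1N\bigl(E_{\mu_N^\Psi}(Nf_N)-E_{\mu_N^\Phi}(Nf_N)\bigr)$ gives \eqref{eq:finite_gibbs}. There is no genuine obstacle here, as the proposition is just a specialization of \eqref{eq:In_dupuis}; the only point requiring care is the sign bookkeeping in the $\Xi_-$ branch, so that the shifted interaction enters with the correct sign ($\Phi+c\Gamma^g$, not $\Phi-c\Gamma^g$) and the normalization constants cancel as claimed.
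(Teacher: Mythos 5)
Your proposal is correct and follows exactly the route the paper takes: it establishes the identities \eqref{eq:re_gibbs} and \eqref{eq:cg_gibbs} by direct computation with the Gibbs densities and then specializes the goal-oriented divergence bound \eqref{eq:In_dupuis} to $P=\mu_N^\Phi$, $Q=\mu_N^\Psi$, $f=Nf_N$, with the $\log Z_N^\Phi$ terms cancelling as you describe. The only difference is that you spell out the hypothesis checks (mutual absolute continuity on the finite configuration space, finiteness of the cumulant generating function) and the sign bookkeeping in the $\Xi_-$ branch, which the paper leaves implicit.
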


\smallskip
\noindent 
{\bf UQ bounds for Gibbs measures in infinite volume.}
In order to understand how the bounds scale with $N$ we note first (see 
Theorem II.2.1 $\in$ \cite{simon2014statistical}) that the following limit 
exists 
\begin{equation}
p(\Phi) \,=\, \lim_{N} \log Z^\phi_N, 
\end{equation}
and $p(\Phi)$ is called the pressure for the interaction $\Phi$ (and is 
independent of the choice of boundary conditions).  
The scaling of the other terms in  the goal-oriented divergence $\Xi_{\pm}$ is slightly more delicate. In the absence of first order 
transition for the Gibbs  measure for the interaction $\Psi$ the finite volume Gibbs 
measures $\mu_N^\Psi$ have a well-defined unique limit $\mu^\Phi$ 
on $S^{\mathbb{Z}^d}$ which is translation invariant and ergodic with 
respect to $\mathbb{Z}^d$ translations. In addition we have (see Section III.3 in \cite{simon2014statistical})
\[
\lim_N \frac{1}{N} E_{\mu_N^\Psi} (H_N^\Phi) = E_{\mu^\Phi}( A^\Phi) \, \quad \textrm{ with  }  A^\Phi = \sum_{X\ni 0}\frac{1}{|X|}\Phi_X
\]
and moreover $E_{\mu^\Phi}( A^\Phi)$ can also be interpreted in terms of 
the derivative of the pressure functional
\[
E_{\mu^\Phi}( A^\Phi) = -\frac{d}{d\alpha} p( \Psi+\alpha \Phi) \mid_{\alpha=0}.
\]
We obtain therefore the following theorem which is valid in the 
absence of first order phase transitions. 

\begin{theorem} (Infinite-volume UQ bounds for Gibbs measures.) Assume that both $\Phi$ and $\Psi$ have a unique infinite-
volume Gibbs measure $\mu^\Phi$ and $\mu^\Psi$. Then we have the bound
\[
\xi_{-}(\mu_\Phi \mid \mid \mu_\Psi ; g)\le E_{\mu^\Phi}(g) - E_{\mu^\Psi}(g) \le \xi_{+}(\mu_\Phi\mid \mid \mu_\Psi ; g)
\]
where $\Gamma^g$ is given by \eqref{gamma:sec4} and, 
\begin{eqnarray}
\xi_{+}(\mu_\Phi \mid \mid \mu_\Psi ; g)&=& \inf_{c>0 }\frac{1}{c} \left\{ p(\Phi-c\Gamma^g) - p(\Psi) - \frac{d}{d\alpha}
p(\Psi+\alpha(\Phi-\Psi))\mid_{\alpha=0}  - c \frac{d}{dc} p(\Phi- c\Gamma^g)\mid_{c=0}  \right\} \nonumber  \\
\xi_{-}(\mu_\Phi \mid \mid \mu_\Psi ; g)&=&\sup_{c>0 }\frac{1}{c} \left\{ -p(\Phi+c\Gamma^g) + p(\Psi) + \frac{d}{d\alpha}
p(\Psi+\alpha(\Phi-\Psi))\mid_{\alpha=0}  + c\frac{d}{dc} p(\Phi+ c \Gamma^g)\mid_{c=0}  \right\} 
\nonumber
\end{eqnarray}
\end{theorem}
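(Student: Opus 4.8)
\medskip
\noindent
\textbf{Proof strategy.} The plan is to obtain the infinite-volume bounds by passing to the limit $N\to\infty$, term by term, in the finite-volume bounds of the preceding Proposition applied with $P=\mu_N^\Phi$ and $Q=\mu_N^\Psi$, i.e.\ in
\[
\tfrac{1}{N}\Xi_{-}(\mu_N^{\Psi}\mid\mid\mu_N^{\Phi};Nf_N)\ \le\ E_{\mu_N^\Psi}(f_N)-E_{\mu_N^\Phi}(f_N)\ \le\ \tfrac{1}{N}\Xi_{+}(\mu_N^{\Psi}\mid\mid\mu_N^{\Phi};Nf_N).
\]
The ingredients are precisely the classical statistical mechanics facts recalled just above: (a) for every interaction in the Banach space $\mathcal{B}$ the pressure $p(\cdot)=\lim_N\frac1N\log Z_N^{(\cdot)}$ exists and is boundary-condition independent; since $\Gamma^g\in\mathcal{B}$ with $|||\Gamma^g|||=\|g\|_\infty<\infty$ (here $g$ is taken bounded, the standing assumption for finiteness of the cumulant generating function), the perturbed interactions $\Phi\pm c\Gamma^g$ again lie in $\mathcal{B}$ and $p(\Phi\pm c\Gamma^g)$ is well defined for all $c>0$; (b) when $\Psi$ (resp.\ $\Phi$) has a \emph{unique} infinite-volume Gibbs measure $\mu^\Psi$ (resp.\ $\mu^\Phi$), the energy densities converge, $\frac1N E_{\mu_N^\Psi}(H_N^\Phi)\to E_{\mu^\Psi}(A^\Phi)$; applying this with $\Phi$ replaced by $\Gamma^g$ and using $Nf_N=H_N^{\Gamma^g}$ also gives $E_{\mu_N^\Phi}(f_N)\to E_{\mu^\Phi}(g)$ and $E_{\mu_N^\Psi}(f_N)\to E_{\mu^\Psi}(g)$; (c) these limits are the derivatives of the (convex) pressure, which exist because uniqueness of the Gibbs measure forces G\^ateaux differentiability, yielding $E_{\mu^\Psi}(A^{\Phi-\Psi})=-\frac{d}{d\alpha}p(\Psi+\alpha(\Phi-\Psi))\mid_{\alpha=0}$ and $E_{\mu^\Phi}(g)=E_{\mu^\Phi}(A^{\Gamma^g})=\frac{d}{dc}p(\Phi-c\Gamma^g)\mid_{c=0}$.

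\medskip
Next, for each \emph{fixed} $c>0$ I would combine the explicit formulas \eqref{eq:re_gibbs}--\eqref{eq:cg_gibbs} with (a)--(c) to show that the bracketed integrand of $\frac1N\Xi_{+}(\mu_N^{\Psi}\mid\mid\mu_N^{\Phi};Nf_N)$, namely $\frac1c\{\frac1N\log Z_N^{\Phi-c\Gamma^g}-\frac1N\log Z_N^\Psi+\frac1N E_{\mu_N^\Psi}(H_N^\Phi-H_N^\Psi)-cE_{\mu_N^\Phi}(f_N)\}$, converges as $N\to\infty$ to the integrand defining $\xi_{+}(\mu_\Phi\mid\mid\mu_\Psi;g)$; the analogous statement holds for the $\Xi_{-}$ integrand.

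\medskip
The one genuinely delicate structural point is interchanging $\lim_N$ with $\inf_{c>0}$ (resp.\ $\sup_{c>0}$). Here I would avoid proving uniform convergence in $c$ and instead use the elementary one-sided estimate: since $\frac1N\Xi_{+}(\mu_N^{\Psi}\mid\mid\mu_N^{\Phi};Nf_N)$ is an infimum over $c$, it is bounded above by the integrand evaluated at any fixed $c_0>0$; taking $\limsup_N$ and then the infimum over $c_0$ gives $\limsup_N\frac1N\Xi_{+}(\mu_N^{\Psi}\mid\mid\mu_N^{\Phi};Nf_N)\le\xi_{+}(\mu_\Phi\mid\mid\mu_\Psi;g)$. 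Combined with the finite-volume bound and the convergence $E_{\mu_N^\Psi}(f_N)-E_{\mu_N^\Phi}(f_N)\to E_{\mu^\Psi}(g)-E_{\mu^\Phi}(g)$ this yields the upper bound; the lower bound follows symmetrically, with $\liminf$ replacing $\limsup$ and $\sup$ replacing $\inf$.

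\medskip
I expect the heart of the argument to be steps (b)--(c), the convergence of the energy densities and the differentiability of the pressure, and this is exactly where the hypothesis of \emph{no first-order phase transition} is essential: without uniqueness of $\mu^\Phi,\mu^\Psi$ the averages $E_{\mu_N^\Psi}(H_N^\Phi)$ need not converge, or converge to a boundary-condition dependent value, and the pressure is only one-sidedly differentiable, so one would then only obtain bounds phrased with $\limsup/\liminf$ of these densities and left/right derivatives of $p$. Everything else---linearity of $H_N^{\Phi}$ in $\Phi$, membership $\Gamma^g\in\mathcal{B}$, existence of the pressure, and the $\inf$/$\limsup$ interchange---is soft.
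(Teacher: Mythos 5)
Your proposal is correct and follows essentially the same route as the paper, which obtains the theorem by passing to the $N\to\infty$ limit in the finite-volume proposition using the existence of the pressure $p(\cdot)$, the convergence $\frac1N E_{\mu_N^\Psi}(H_N^\Phi)\to E_{\mu^\Psi}(A^\Phi)$ in the absence of a first-order transition, and the identification of these limits with derivatives of the pressure. Your explicit handling of the $\lim_N$ versus $\inf_{c>0}$ interchange via the one-sided $\limsup$ estimate is a point the paper leaves implicit, and is a welcome addition rather than a deviation.
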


\smallskip
\noindent
{\bf Phase transitions.}  The bound is useful  even in 
the presence of first order phase transition which manifests itself
by the existence of  several infinite volume Gibbs measure 
consistent with the finite volume Gibbs measure (via the DLR condition)
or equivalently by the lack of differentiability of the pressure 
functional $p(\Phi+ \alpha \Upsilon)$ for some interaction $\Upsilon$. 
For example in the 2-d Ising model discussed in Section \ref{section:Examples}, below the critical 
temperature the pressure $p(\Phi)$ is not differentiable in $h$ at 
$h=0$: there are two ergodic infinite volume Gibbs measures which 
corresponds to the two values of left and right derivatives of the 
pressure (aka the magnetization). If necessary, in practice one 
will select a particular value of the magnetization, see the examples 
in Section \ref{section:Examples}.

\medskip\noindent{\bf  UQ bounds  and  the use of the triple norm $|||\Phi|||$.
}
It is not difficult to show (see Proposition II.1.1C and Lemma II.2.2C in \cite{simon2014statistical} and the definition of the triple norm in 
\eqref{triple}, that 
\begin{equation}
| \log Z_N^\Phi - \log Z_N^\Psi| \le \|H^\Phi_N - H_N^\Psi\|_\infty \le N |||\Phi-\Psi |||\,.
\end{equation}
and thus by \eqref{eq:re_gibbs} we have 
\begin{equation}
\label{eq:bound:triple}
\frac{1}{N} R(\mu_N^\Phi \mid \mid \mu_N^\Psi) \le 2 |||\Phi-\Psi|||\, .
\end{equation}
Therefore,  we obtain the bounds
\begin{eqnarray}
\Xi_+ &\le & \inf_{c>0} \left\{ \frac{1}{c}\widetilde{\Lambda}_{\mu_N^\Phi, Nf_N}(c) + \frac{2}{c} 
|||\Psi - \Phi|||\right\} \nonumber \nonumber \\
\Xi_- & \ge & \sup_{c>0} \left\{-\frac{1}{c}\widetilde{\Lambda}_{\mu_N^\Phi, Nf_N}(c) - \frac{2}{c} |||\Psi - \Phi|||\right\}\, .\nonumber 
\end{eqnarray}
These new upper and lower bounds, although they are less sharp, they still scale correctly in system size, while they are intuitive in capturing the dependence of the 
model discrepancy on  the fundamental level of the interaction discrepancy $|||\Psi - \Phi|||$; finally the bounds  do not require %the  hich are slightly more easy to evaluate since they do not require 
the computation 
of the relative entropy, due to upper bound  \eqref{eq:bound:triple}.

\medskip\noindent
{\bf Remark:} On the other hand, it is tempting but nevertheless misguided to try to bound $\widetilde{\Lambda}_{\mu_N^\Phi, Nf_N}(c)$ in terms of interaction norms. Indeed we have the bound   
$\frac{1}{c}\widetilde{\Lambda}_{\mu_N^\Phi, Nf_N}(c) \le \| N f_N - E_{\mu_N^\Phi}(Nf_N)\|_\infty$. 
But this bound becomes trivial:  since the the infimum over $c$ is then 
attained at $c=\infty$ with the trivial result that $\Xi_+(\mu_N^\Phi \mid \mid
\mu_N^\Psi; Nf_N) \le \| N f_N - E_{\mu_N^\Phi}(Nf_N)\|_\infty$ which is independent of $\Psi$ and thus useless.  

\medskip
\noindent
{\bf Linearized bounds.}  Applying the linearized bound \eqref{eq:goaldiv_linear}
to the Gibbs case gives the bound
% Applying Theorem \ref{Thm:goaldiv_linear} to $Nf_N=\sum_{x \in \Lambda_N} \sigma(x)$, we have 
\begin{equation}
\frac{1}{N}\Xi_{\pm}(\mu^\Psi_N\|\mu^\Phi_N; Nf_N)= \pm 
\sqrt{\frac{1}{N}Var_{\mu^\Phi_{N}}\big( \sum_{x\in \Lambda_N} g(\sigma_x)\big) }
\sqrt{\frac{2}{N}R(\mu^\Psi_{N}\|\mu^\Phi_{N})}+O(\frac{1}{N}R(\mu^\Psi_{N}\|\mu^\Phi_{N})).
\label{eq:linear_approx0}
\end{equation}
In the large $N$ limit, in the absence of first order transition, and if the spatial 
correlations in the infinite volume Gibbs state decays sufficiently fast then 
the variance term converges to the integrated auto-correlation function
\begin{eqnarray}
\lim_{N}\frac{1}{N}Var_{\mu^\Phi_{N}}\bigg( \sum_{x\in \Lambda_N} g(\sigma_x)\bigg)  &=& \sum_{x\in \mathbb{Z}^d} E_{\mu^\Phi} \left( (g(\sigma_x) - E_{\mu^\Phi}(g) )(g(\sigma_0) - E_{\mu^\Phi}(g) )\right) \nonumber \nonumber \\
&=& \frac{d^2}{dc^2}P( \Phi - c \Gamma^g) \mid_{c=0}  
\end{eqnarray}
which is also known as susceptibility in the statistical mechanics literature. 

Finally, we get a  simple and easy to implement linearized UQ bound when we replace \eqref{eq:bound:triple} in \eqref{eq:linear_approx0},
namely
\begin{equation}
\frac{1}{N}\Xi_{\pm}(\mu^\Psi_N\|\mu^\Phi_N; Nf_N)= \pm 2
\sqrt{\frac{1}{N}Var_{\mu^\Phi_{N}}\big( \sum_{x\in \Lambda_N} g(\sigma_x)\big) }
\sqrt{|||\Psi - \Phi|||}+O(|||\Psi - \Phi|||).
\label{eq:linear_approx3}
\end{equation}
Each one of terms on the right hand side of \eqref{eq:linear_approx3}
can be either computed using Monte Carlo simulation or can be easily estimated, see for instance the calculation of $|||\Psi - \Phi|||$ in the Ising case earlier.

\section{UQ for  Phase Diagrams of molecular systems }
\label{section:Examples}

In this section, we will consider the Gibbs measures for one and two-dimensional  Ising  and mean field models, which are exactly solvable models, see e.g. 
\cite{baxter2007exactly}. We also note that mean field models can be obtained as a minimizer of relative entropy  in the sense of \eqref{variational:intro},  where $P$ is an Ising model and ${\cal Q}$ is a parametrized family of  product distributions, \cite{MacKay:2003}.

Here  we will demonstrate the  use of the goal-oriented divergence, discussed earlier in Section \ref{section:goal-oriented} and Section \ref{section:Gibbs},
to  analyze  uncertainty quantification for sample mean observables such as the mean magnetization 
\begin{equation}
\label{mag:sec5}
f_N=\frac{1}{N}\sum_{x \in \Lambda_N} \sigma(x),
\end{equation} in different phase diagrams based on these models.  We use exactly solvable models as a test bed for the accuracy of  our bounds, and demonstrate their tightness even in phase transition regimes.  In  \ref{section:backgroup_Ising and mean}, we give some background about one/two-dimensional 
Ising models and mean field models and recall some well-known formulas.

\smallskip
\noindent
{\bf Implementation of the UQ bounds} 
The results in Sections \ref{section:markov_2} and \ref{section:Gibbs} demonstrate mathematically  that the bounds relying on the goal oriented divergences $\Xi_{\pm}$ are the only available ones that scale properly for long times and high dimensional systems. Therefore we turn our attention to the implementation of these bounds. First we note that   the bounds depending of the triple norms $|||\cdot |||$, as well as the the linearized bounds of Section \ref{section:Gibbs} provide implementable upper bounds, see also the strategies in  \cite{arampatzis2015} for the linearized regime, which are related to sensitivity screening.

By contrast, here we focus primarily on exact calculations of the goal oriented divergences $\Xi_{\pm}$, at least for cases where either the Ising 
models are exactly solvable or in the case where the known (surrogate) model is a  mean field approximation.
We denote by $\mu_N$ the Gibbs measures of the model we assume to be known 
and  $\mu_N'$ the Gibbs measure of the  model we try to estimate.   Then from 
\eqref{eq:finite_gibbs}--\eqref{eq:finite_gibbs-}, recalling that 
$\Lambda_{\mu_N,Nf_N}( c)= \tilde{\Lambda}_{\mu_N,Nf_N}(c) + c E_{\mu_{N}}(Nf_N)$,  we can rewrite the  bounds  as

\begin{eqnarray}
&& \hspace{.5cm} E_{\mu'_{N}}(f_N)\ge 
\sup_{c>0}\left\{-\frac{1}{cN}\Lambda_{\mu_N,Nf_N}(-c)-\frac{1}{cN}R(\mu'_{N}\mid\mid\mu_{N})\right\}  
\nonumber \nonumber \\
&& \hspace{.5cm} E_{\mu'_{N}}(f_N) 
\le \inf_{c>0}\left\{\frac{1}{cN}\Lambda_{\mu_N,Nf}(c)+\frac{1}{cN}R(\mu'_{N} \mid \mid \mu_{N})\right\} \nonumber
\end{eqnarray}
and obtain an explicit formula for each term in the large $N$ limit in 
terms of the pressure, mean energy and magnetization for the models.  
In the figures below we will display the upper and lower bounds 
using simple optimization algorithm in Matlab to find the 
optimal $c$ in the bounds. Note that in the absence of exact formulas 
we would need to rely on numerical  sampling of those quantities, an issue we will discuss elsewhere. 

For completeness and for comparison with the exact bounds 
we will also use and display the approximate linearized bounds
\begin{eqnarray}
E_{\mu'_N}(f_N) &\gtrapprox&  E_{\mu_{N}}(f_N) - \sqrt{\frac{1}{N}Var_{\mu_{N}}(Nf_N)}\sqrt{\frac{2}{N}R(\mu'_{N} \mid \mid \mu_{N})} 
 \nonumber  \\ \hspace{3cm} E_{\mu'_{N}}(f_N) &\lessapprox& E_{\mu_{N}}(f_N) + \sqrt{\frac{1}{N}Var_{\mu_{N}}(Nf_N)}\sqrt{\frac{2}{N}R(\mu'_{N} \mid \mid \mu_{N})} \nonumber 
\end{eqnarray}
where each term is computable in the large $N$ limit in terms of the pressure, susceptibility, magnetization, and so on.

\subsection{Three examples of UQ  bounds for Phase Diagrams}
Next we consider three cases where our methods provide  exact UQ bounds for phase diagrams between two high dimensional probabilistic models.  Here we compare three classes of Gibbs measures for Ising models. (1) Mean field models with different parameters well beyond the linear response regime, (2) Ising models compared to their mean field approximations, and (3) Ising models with vastly different parameters. 
All these examples cannot be handled with conventional arguments such as linear response theory because they fall into two categories: either,  (a)  the models have parameters differing significantly, for instance by  at least  $50\%$, or  (b) the comparison is between  different models, e.g. a  complex model and  a simplified surrogate model which is  a potentially  inaccurate   approximation such as the mean field   of the original Ising model.

\smallskip
\noindent
{\bf (1) Mean field versus mean field models.\,} 
Firstly, we consider  two mean field models, assume $\mu_{N;mf}$ and $\mu'_{N;mf}$ are their Gibbs measures (probabilities) defined in  \ref{section:backgroup_Ising and mean} with $h_{mf}=h+dJm$ and $h'_{mf}=h'+dJ'm'$, respectively.
By some straightforward calculation in  \ref{section: example_goal oriented divergence calculation }, we obtain the ingredients of the UQ bounds discussed earlier in the Section:
\begin{eqnarray}
\frac{1}{N}R(\mu'_{N;mf}\|\mu_{N;mf})= \log \frac{e^{\beta h_{mf}} +e^{-\beta h_{mf}}}{e^{-\beta' h'_{mf}}+e^{\beta' h'_{mf}}}+(\beta' h'_{mf}-\beta h_{mf})m',
\end{eqnarray}
\begin{eqnarray}
\frac{1}{N}\Lambda_{\mu_{N;mf},Nf_N}(c)  =\log \frac{e^{(c+\beta h_{mf})} +e^{-(c+\beta h_{mf})}}{e^{-\beta h_{mf}}+e^{\beta h_{mf}}},
\label{eq:cumulant_mf}
\end{eqnarray}
and
\begin{eqnarray}
\frac{1}{N}Var_{\mu_{N;mf}}(\sum_{x \in \Lambda_N} \sigma(x)) 
& = & 1- m^2,
\label{eq:var_mf}
\end{eqnarray}
where $m$ and $m'$  are  the magnetizations \eqref{mag:sec5}
of these two
mean field models and  can be obtained by solving the implicit equation \eqref{eq:mean_mf}.
Here we note that  the solution of the equation \eqref{eq:mean_mf} when $h=0$  has a super-critical pitchfork bifurcation. In our discussion regarding   mean field vs mean field and  1-d Ising  vs mean field models  we only consider the upper branch of the stable solution. But, in our discussion about  2d Ising vs mean field, we consider the both upper and lower branches.

In  \ref{section:backgroup_Ising and mean},  for given parameters, we can calculate the magnetizations, the goal-oriented divergence bounds  and their corresponding linearized bounds which we use in deriving exact formulas for the UQ bounds.
Indeed, for Figure \ref{Fig:1dmfVmf_beta1}, we set $J=2$ and 
consider the Gibbs measure of the 1-d  mean field model with $h=0$ as the benchmark and plot the magnetization based on this distribution as a function of inverse temperature $\beta$.
Then, we perturb the external magnetic field to $h=0.6$ and consider the Gibbs measure with this external magnetic field.
We  plot the goal-oriented divergence  bounds  of the magnetization of the Gibbs measure with $h=0.6$ as a function of $\beta$ as well as their corresponding linearized approximation in this figure. To test the sharpness of these bounds, we also give the magnetization with $h=0.6$ in the figure. We can see that the bounds work well here. The upper bound almost coincides with the magnetization.  The linearized approximation  works well at low temperature, but, it does not work as well as the goal-oriented bound around the critical point. The reason for this is that relative entropy between those two measures is bigger here due to  the bigger perturbation of $h$ and linearization is a poor approximation of the bounds. Also, by the figure, for $h=0$, we can 
 see that the magnetization of vanishes at high temperatures. At low temperatures it goes to its maximum value $m=1$. For non-zero $h$, we see that there is no phase transition and the magnetization increases gradually from close to $m=0$ at  high temperatures ($\beta \ll 1$)  to $m=1$ at low temperatures ($\beta \gg 1$). 
 %Then, at low temperature, it almost equal to the constant $1$.

In Figure \ref{Fig:1dmfVmf_h2}, we set $J=1$ and consider the Gibbs measure of the 1-d mean field model with $\beta=1$ as the benchmark and plot the magnetization based on this measure as a function of $h$ in the figure. Then we perturb $\beta$ by $60\%$ and obtain another Gibbs measure with $\beta=1.6$ that has a phase transition at $h=0$. 
In the figure, we give  the upper/lower goal-oriented divergence bounds of the magnetization based on the Gibbs measure with  $\beta=1.6$ as well as their corresponding linearized bounds. 
To test the sharpness of the bounds, we also plot the magnetization with $\beta=1.6$ as a function of $h$.
The goal-oriented divergence bounds work well here.  We can see the upper bound almost coincide with the magnetization when $h$ is positive and the lower bound almost coincide with the magnetization when $h$ is negative.  Similarly with \ref{Fig:1dmfVmf_beta1}, the linearized bounds  make a relatively poor estimation around the critical point $h=0$ because of the bigger relative entropy between these two measures.

\begin{figure}[H]
	\centering
	\begin{subfigure}[b]{0.45\textwidth}
		\centering
		\includegraphics[width=\linewidth]{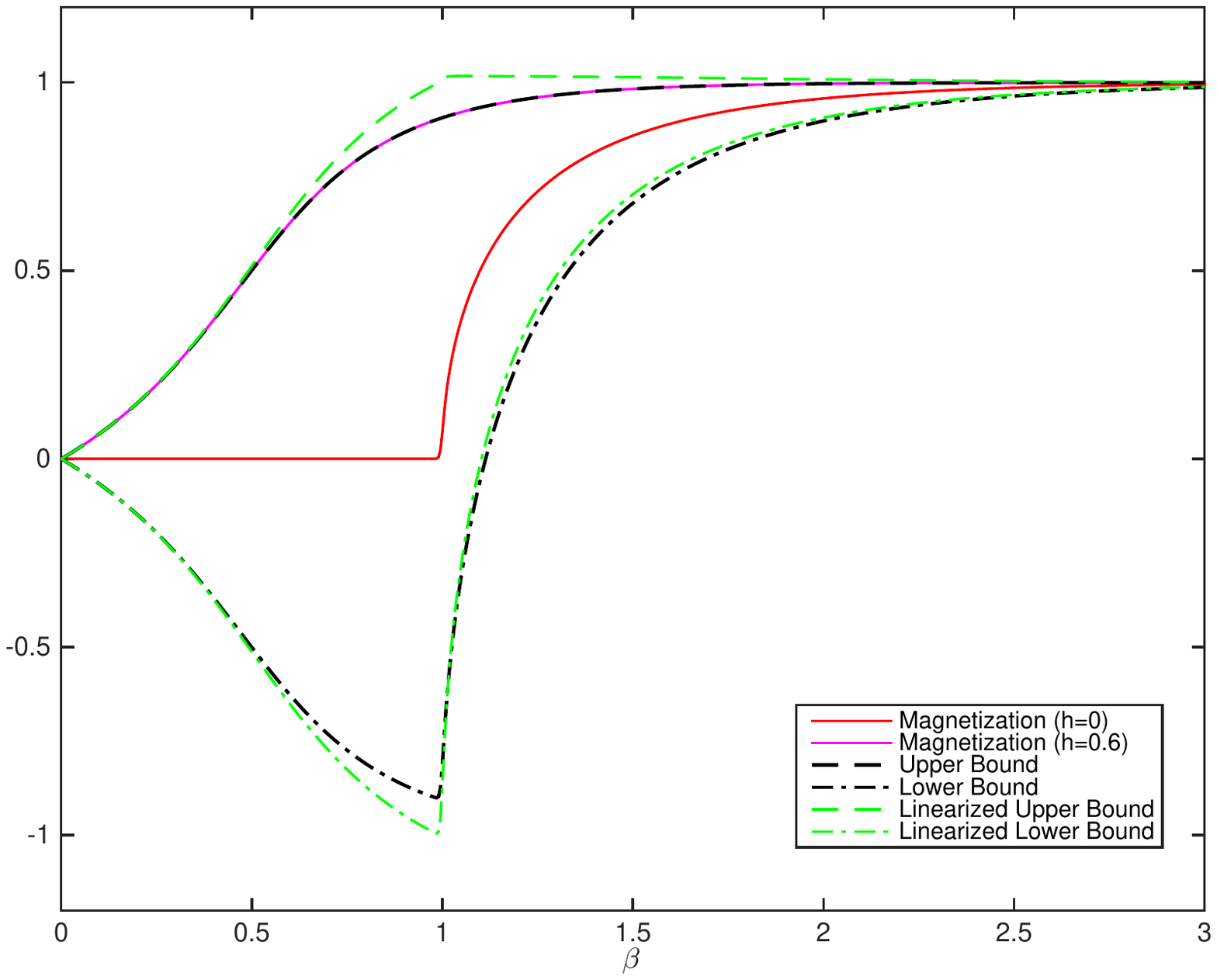}
		\caption{}
		\label{Fig:1dmfVmf_beta1}
	\end{subfigure}%
	\begin{subfigure}[b]{0.45\textwidth}
		\centering
		\includegraphics[width=\linewidth]{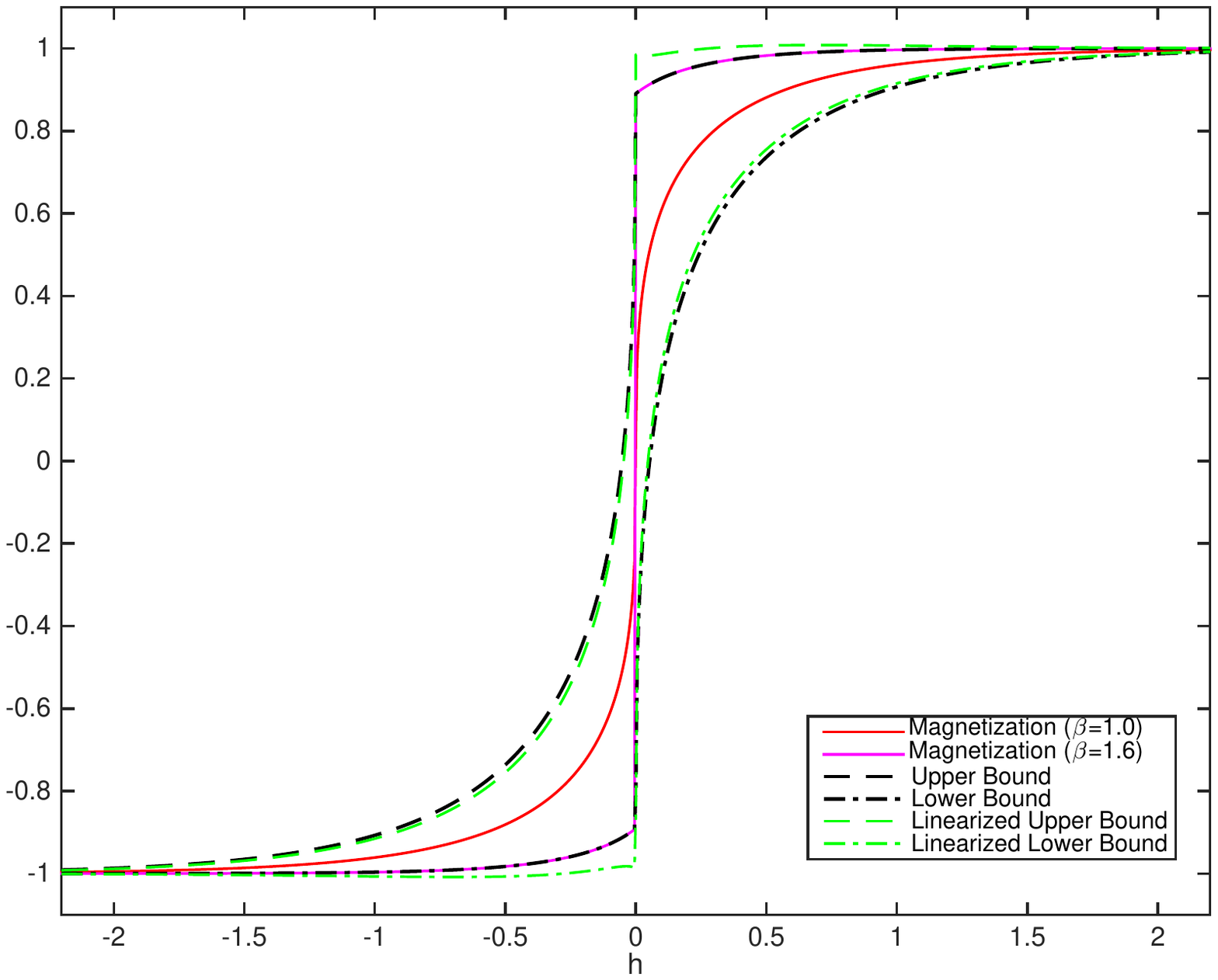}
		\caption{}
		\label{Fig:1dmfVmf_h2}
	\end{subfigure}
	\caption{\subref{Fig:1dmfVmf_beta1}: The red solid line is the magnetization for $h=0$; The magenta solid line is the magnetization for $h=0.6$. 
The black dashed/dash-dot lines is the upper/lower UQ bounds given by the goal-oriented divergences of the magnetization \eqref{mag:sec5}
for $h=0.6$ . The green dashed/dash-dot line is the linearized upper/lower bound. \subref{Fig:1dmfVmf_h2}: The red solid line is the magnetization for $\beta=1$; The magenta solid line is the magnetization for $\beta=1.6$. 
The black dashed/dash-dot lines is the upper/lower goal-oriented divergence bound of the magnetization for $\beta=1.6$ . The green dashed/dash-dot line is the linearized upper/lower bound.}
	\label{Fig:1dmfVmf}
\end{figure}

%%%%%%%%%%%%%%%%%%%%%%%%%%%%%

\smallskip
\noindent
{\bf (2a) One-dimensional  Ising model versus mean field.\,} 
Consider the $1$-d Ising model and mean field model and assume $\mu_N$ and $\mu_{N;mf}$ are respectively their Gibbs distributions,
which are defined in  \ref{section:backgroup_Ising and mean} . 
Then, by straightforward calculations, we obtain 
\begin{align}
\lim \limits_{N} \frac{1}{N}R(\mu_N\|\mu_{N;mf}) 
&=\log \frac{e^{\beta [h+Jm]}+e^{-\beta [h+Jm]}}{e^{\beta J}\cosh(\beta h)+k_1}+\frac{\beta J}{k_1}(k_1-\frac{2e^{-2\beta J}}{e^{\beta J}\cosh(\beta h)+k_1}-me^{J\beta}\sinh(h\beta))
\end{align}
where $k_1=\sqrt{e^{2J\beta}\sinh^2(h\beta)+e^{-2J\beta}}$;
detailed calculations can be found in  \ref{section: example_goal oriented divergence calculation } .
By \eqref{eq:cumulant_mf} and \eqref{eq:var_mf}, we have
\begin{equation}
\frac{1}{N}\Lambda_{{N;mf},Nf_N}(c) 
  =  \log \frac{e^{[c+\beta (h+Jm)] } +e^{-[c+\beta (h+Jm)]}}{e^{-\beta [h+Jm]}+e^{\beta [h+Jm]}}.
\end{equation}
and
\begin{eqnarray}
\frac{1}{N}Var_{\mu_{N;mf}}(\sum_{x \in \Lambda_N} \sigma(x))  
& = & 1- m^2.
\end{eqnarray}
Combining with  \ref{section:backgroup_Ising and mean},  for given parameters, we can calculate the magnetizations, the goal-oriented divergence bounds and their corresponding linearized approximation. 

In Figure \ref{Fig:Ising_1d_beta}, we set $h=0$ and $J=1$ and consider the  Gibbs measure of the mean field model as the benchmark, that is we use it as a surrogate model for  the Ising model. 
In the figure, we see that its magnetization vanishes at high temperatures. At low temperatures it goes to its maximum value $m=1$, exhibiting spontaneous magnetization and a phase transition at the inverse temperature $\beta=1$.
We  plot the upper/lower goal-oriented divergence  bound as well as their corresponding linearized bounds of the magnetization as a function of  $\beta$. 
To test the sharpness of these bounds, we also plot the magnetization of the Ising model in the figure. 
The magnetization of the Ising model vanishes for all temperatures, exhibiting no phase transitions. In this sense the mean field approximation of the Ising model is a very poor one and the UQ bounds depicted in Figure \ref{Fig:Ising_1d_beta}
capture and quantify the nature of this approximation. Indeed, we can see that the bounds work well here, but the linearized lower bound fails for low temperatures because of the considerable difference between $\mu_N$ and $\mu_{N;mf}$.
In Figure \ref{Fig:Ising_1d_h}, we set $\beta=1$ and $J=1$ and consider the bounds and the magnetizations as  a function of the external field $h$.
Similarly with  Figure \ref{Fig:Ising_1d_beta}, we take the Gibbs measure of the mean field model as the benchmark. To test the sharpness of the bounds, we also plot the magnetization of the Ising model in the figure. We can see the goal-oriented divergence bounds of the magnetization of the Ising model works well here.  The upper bound almost coincides with  it for positive $h$ and the lower bound almost coincide with it for negative $h$. However, the linearized ones do not give a good approximation around the point $h=0$.

    \begin{figure}[H]
	\centering
	\begin{subfigure}[b]{0.45\textwidth}
		\centering
		\includegraphics[width=\linewidth]{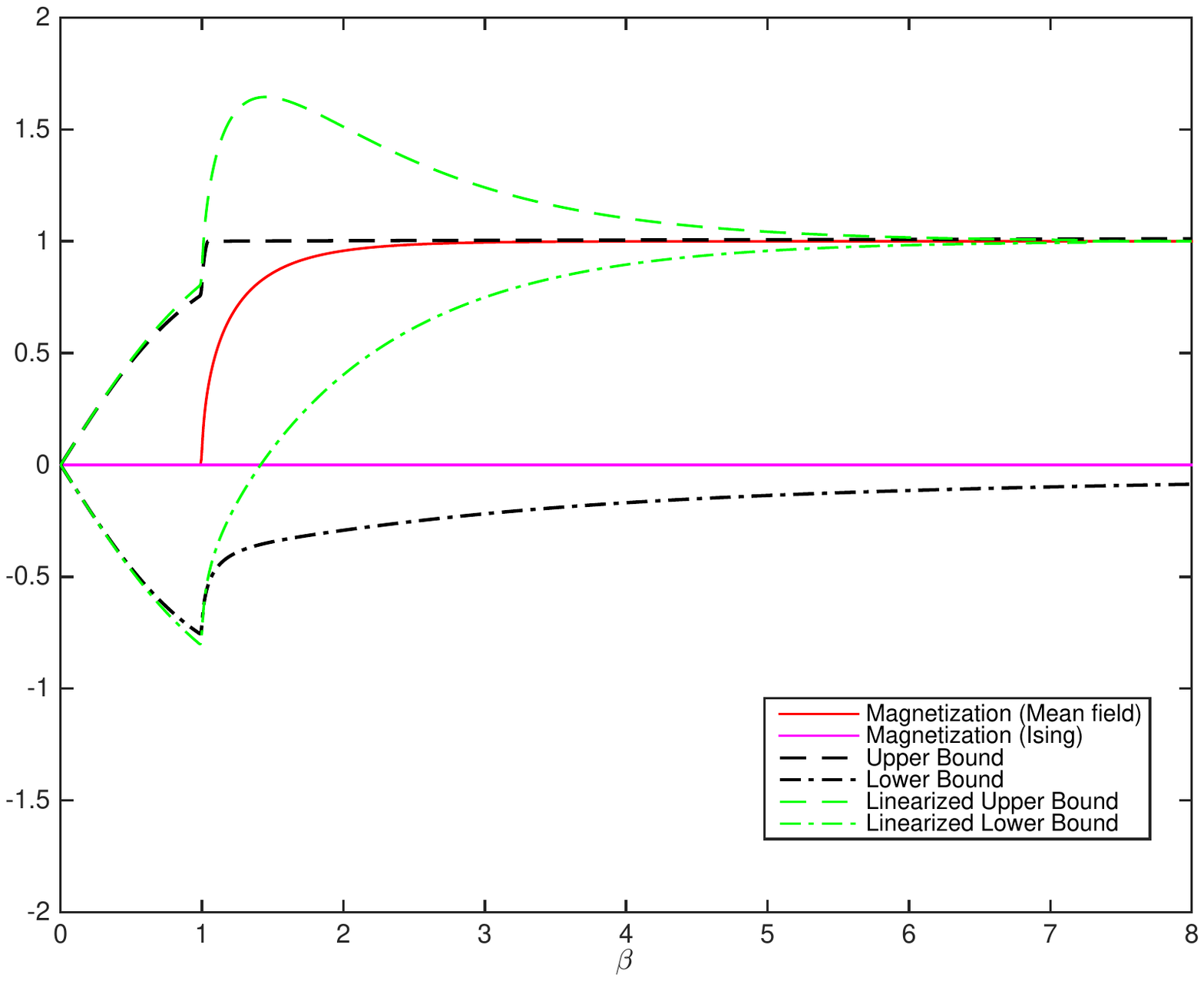}
		\caption{}
		\label{Fig:Ising_1d_beta}
	\end{subfigure}%
	\begin{subfigure}[b]{0.45\textwidth}
		\centering
		\includegraphics[width=\linewidth]{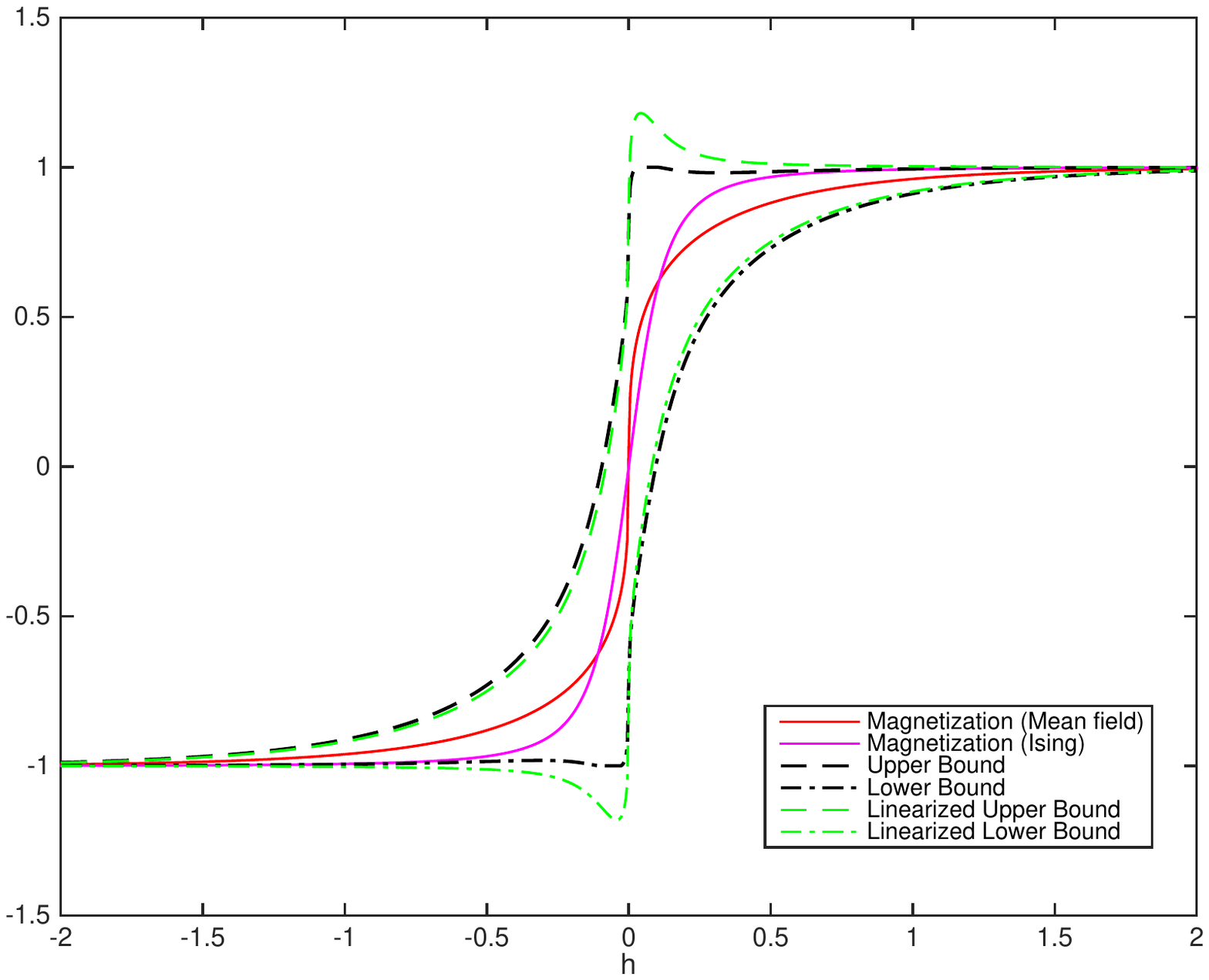}
		\caption{}
		\label{Fig:Ising_1d_h}
	\end{subfigure}
	\caption{\subref{Fig:Ising_1d_beta}:The red solid line is the magnetization of 1-d mean field model for $h=0$; The magenta solid line is the magnetization of 1-d Ising model for $h=0$. 
The black dashed/dash-dot lines is the upper/lower goal-oriented divergence bound of the magnetization of the Ising model. The green dashed/dash-dot line is the linearized  upper/lower bound. \subref{Fig:Ising_1d_h}:The red solid line is the magnetization of 1-d mean field model for $\beta=1.0$; The magenta solid line is the magnetization of 1-d Ising model $\beta=1.0$. 
The black dashed/dash-dot lines is the upper/lower goal-oriented divergence bound of the magnetization of the Ising model. The green dashed/dash-dot line is the linearized  upper/lower bound. }
	\label{Fig:Ising_1d}
\end{figure}

\smallskip
\noindent
{\bf (2b) Two-dimensional  Ising model %with $h=0$ 
versus mean field.\,} We revisit the example in (2a) above but this time in two dimensions where the Ising model exhibits phase transitions at a finite temperature.
WE denote by  $\mu_N$ and $\mu_{N;mf}$  the  Gibbs distributions  for the two-dimensional  zero-field Ising model and two-dimensional   mean field model with $h_{mf}=2Jm$, respectively.
Then, by  straightforward calculations, we obtain 
\begin{eqnarray}
\lim \limits_{N}\frac{1}{N}R(\mu_N\|\mu_{N;mf})
&=&\log [e^{-2\beta Jm}+e^{2\beta Jm}]-\frac{\log2}{2}-\frac{1}{2\pi}\int_0^{\pi}\log[\cosh^2(2\beta J)+k(\theta)]d\theta \nonumber \nonumber \\
&+& \beta J \frac{\sinh(4\beta J)}{\pi}\int_0^{\pi} \frac{1}{k(\theta)}[1-\frac{1+\cos(2\theta)}{\cosh^2(2\beta J)+k(\theta)}]d\theta-2\beta JmM_0,
\end{eqnarray}
\begin{equation}
\frac{1}{N}\Lambda_{\mu_{N;mf},Nf_N}(c) 
  =  \log \frac{e^{(c+2\beta Jm) } +e^{-(c+2\beta Jm)}}{e^{-2\beta Jm}+e^{2\beta Jm}}
\end{equation}
and 
\begin{eqnarray}
\frac{1}{N}Var_{\mu_{N;mf}}(\sum_{x \in \Lambda_N} \sigma(x))  
& = & 1- m^2,
\end{eqnarray}
where $m$ and $M_0$ are the spontaneous magnetizations of the two-dimensional  mean field model and Ising models, respectively and can be obtain by solving \eqref{eq:mean_mf} and \eqref{eq:mean_Ising2}.
Detailed calculations can be found in  \ref{section: example_goal oriented divergence calculation }.
Combining with \ref{section:backgroup_Ising and mean},  for given parameters, we can calculate the magnetizations, the goal-oriented divergence bounds and their corresponding linearized approximation. 

In Figure \ref{Fig:Ising_2d_beta}, we set $h=0$ and $J=1$ and plot the bounds and the magnetizations  as a function of inverse temperature $\beta$.
Similarly with Figure \ref{Fig:Ising_1d}, we take the Gibbs measure of the mean field as the benchmark and consider the bounds for the magnetization of the Ising model.
We can see that the goal-oriented bounds work well here, especially in low temperatures. Notice the large uncertainty prior to the onset of the spontaneous magnetization (phase transition) which is due  to a pitchfork bifurcation and the two branches (upper and lower) reported in   Figure 1b, as well as in the  panels in Figure 4.
The linearized bounds  also work well, but they are not as sharp as the goal-oriented divergence bounds around the critical points because of the larger value of the relative entropy $R(\mu_N\|\mu_{N;mf})$. There are  phase transitions for both mean field model and Ising model. The critical points are $1/2$ and $\log(1+\sqrt{2})/2$ for mean field model and Ising model, respectively. Both their magnetizations vanish at  high temperatures  and  go to their maximum values $1$ at low temperature. 

Actually, the spontaneous magnetizations we consider in Figure \ref{Fig:Ising_2d_beta} are both based on the definition $M=\lim \limits_{h\to 0^{+}}\langle \sigma(x) \rangle$. If we consider the definition $M=\lim \limits_{h\to 0^{-}}\langle \sigma(x) \rangle$, we can obtain another figure which is Figure \ref{Fig:Ising_2d_beta1}.
We can see the quantities in Figure \ref{Fig:Ising_2d_beta1} are just the opposite of the corresponding quantities in Figure \ref{Fig:Ising_2d_beta}. Combining both figures gives us the uncertainty region reported in the Introduction.

  \begin{figure}[H]
	\centering
	\begin{subfigure}[b]{0.45\textwidth}
		\centering
		\includegraphics[width=\linewidth]{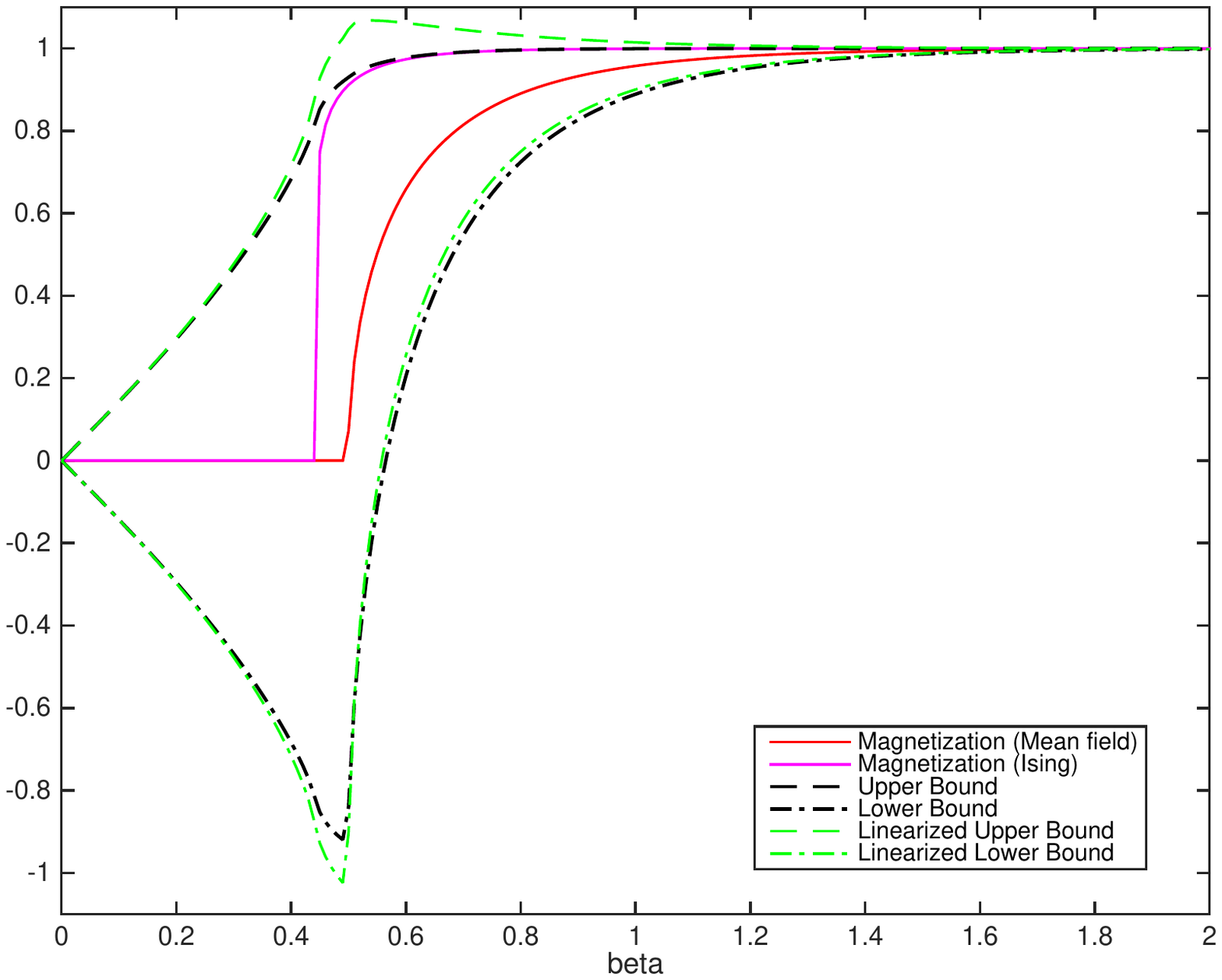}
		\caption{}
		\label{Fig:Ising_2d_beta}
	\end{subfigure}%
	\begin{subfigure}[b]{0.45\textwidth}
		\centering
		\includegraphics[width=\linewidth]{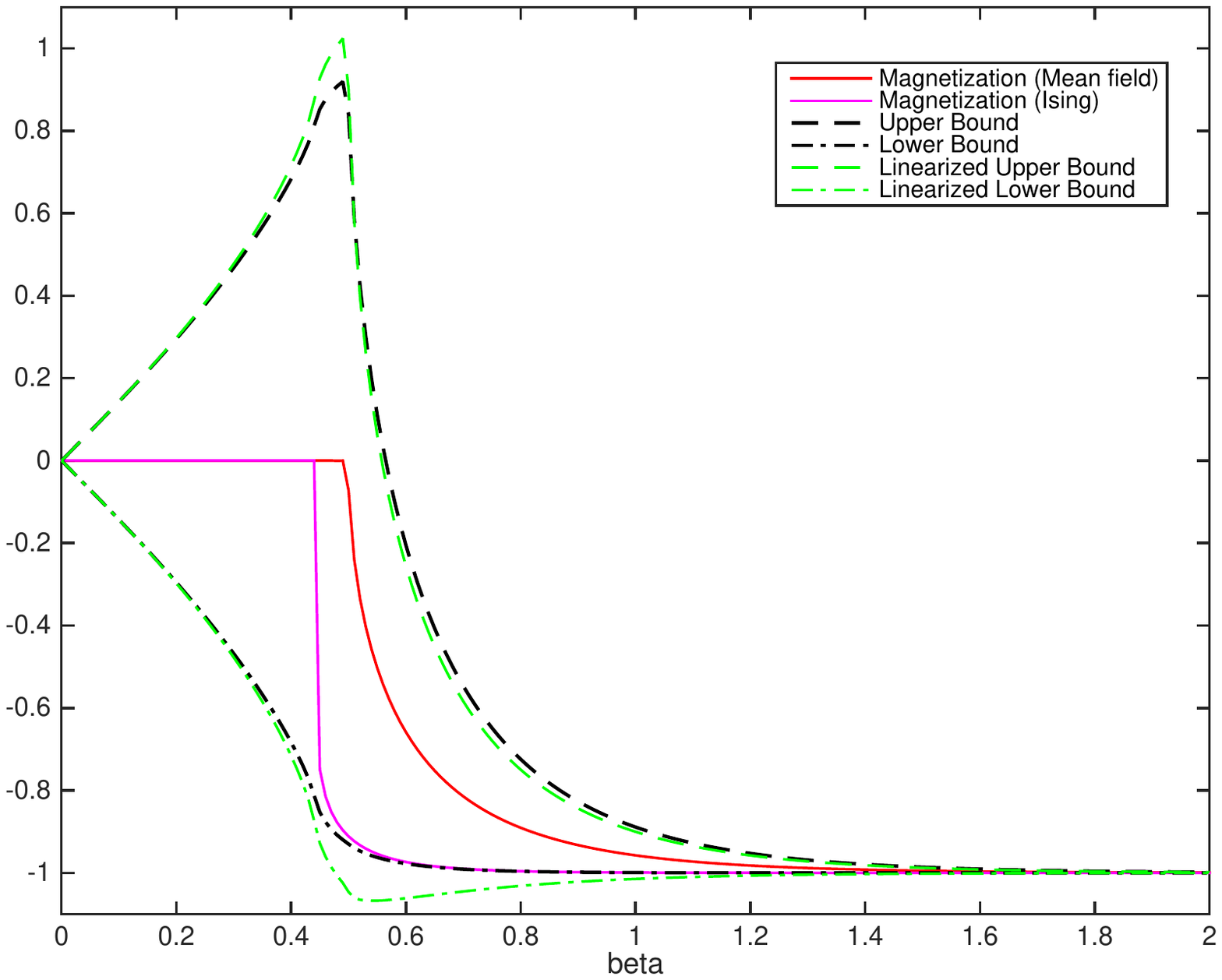}
		\caption{}
		\label{Fig:Ising_2d_beta1}
	\end{subfigure}
	\caption{ \subref{Fig:Ising_2d_beta} 
    The red solid line is the spontaneous magnetization of the 2-d mean field model with $h=0^{+}$; The magenta solid line is the spontaneous magnetization of 2-d Ising model with $h=0^{+}$;
The black dashed/dash-dot lines is the upper/lower goal-oriented divergence bound of the magnetization for Ising model; The green dashed/dash-dot line is the linearized  upper/lower bound.  \subref{Fig:Ising_2d_beta1}
The red solid line is the spontaneous magnetization of 2-d mean field model with $h=0^{-}$; The magenta solid line is the spontaneous magnetization of 2-d Ising model with $h=0^{-}$;
The black dashed/dash-dot lines is the upper/lower goal-oriented divergence bound of the magnetization for Ising model; The green dashed/dash-dot line is the linearized  upper/lower bound. }
	\label{Fig:Ising_2d}
\end{figure}
    
%%%%%
\smallskip
\noindent
{\bf (3) One-dimensional  Ising model versus Ising model.\,} 
Consider  two one-dimensional Ising models and $\mu_N$ and $\mu_N'$ are their Gibbs distributions  defined in \ref{section:backgroup_Ising and mean}.
By  straightforward calculation, we have 
\begin{align}
\lim \limits_{N}\frac{1}{N}R(\mu'_N\|\mu_N)
&=\log\frac{e^{\beta J}\cosh(\beta h)+\sqrt{e^{2J\beta}\sinh^2(h\beta)+e^{-2J\beta}}}{e^{\beta' J'}\cosh(\beta' h')+\sqrt{e^{2J'\beta'}\sinh^2(h'\beta')+e^{-2J'\beta'}}} \nonumber\nonumber \\
&+(\beta' J'-\beta J)(1-\frac{1}{k'_1}\frac{2e^{-2\beta' J'}}{e^{\beta' J'}\cosh(\beta' h')+k'_1})+(\beta' h'-\beta h)
\frac{1}{k'_1}e^{J'\beta'}\sinh(h'\beta')
\end{align}
and
\begin{equation}
\lim \limits_{N}\frac{1}{N}Var_{\mu_N}(\sum_{x \in \Lambda_N}\sigma(x))=\frac{1}{k_1^3}e^{-J\beta}\cosh(h\beta),
\end{equation}
where $k'_1=\sqrt{e^{2J'\beta'}\sinh^2(h'\beta')+e^{-2J'\beta'}}$.
The cumulant generating function is 
\begin{eqnarray}
\lim \limits_{N} \frac{1}{N}\Lambda_{\mu_N}(c) 
& = &\log \frac{e^{\beta J}\cosh(\beta h+c)+\sqrt{e^{2J\beta}\sinh^2(h\beta+c)+e^{-2J\beta}}}{e^{\beta J}\cosh(\beta h)+\sqrt{e^{2J\beta}\sinh^2(h\beta)+e^{-2J\beta}}},
\end{eqnarray}
detailed calculations can be found in  \ref{section: example_goal oriented divergence calculation }.
Combining with \ref{section:backgroup_Ising and mean},  for given parameters, we can calculate the magnetizations, the bounds given by goal-oriented divergence and their corresponding linearized approximation. 

In Figure \ref{Fig:1dIVI_beta}, we set  $J=1$ and plot the magnetizations of 1-d Ising model as a function of inverse temperature $\beta$ for $h=0$ and $h=0.6$, respectively. For the zero-field Ising model, used here as our benchmark, the magnetization vanishes for all temperatures. For $h=0.6$, the magnetization increases gradually to its maximum $1$. Clearly the models are far apart but the UQ bounds work remarkably well.
Indeed, we   plot the upper/lower goal-oriented divergence bound of the magnetization  for the nonzero-field Ising model.  The upper bound almost coincides with the magnetization itself. The lower bound is poor due to the symmetry of the bounds in $h$. If we break the symmetry by comparing models for different positive  external fields both bounds become much sharper (not shown).
The linearized bounds give a good approximation at high temperatures. However, at low temperatures, they are  not as sharp as the goal-oriented divergence bounds. This is due to the  larger  relative entropy $R(\mu\|\mu')$ between $\mu$ and $\mu'$.
In Figure \ref{Fig:1dIVI_h2},  we plot the magnetization of  the one-dimensional  Ising model as a function of $h$ for two different inverse temperatures $\beta=1$ and $\beta=1.6$. The  parameter $J$ was set to $1$. We also plot the upper/lower goal-oriented divergence bounds for  $\beta=1.6$. Similarly with Figure~\ref{Fig:1dIVI_beta}, we also plot the linearized  upper/lower bound in the figure.
The goal-oriented divergence bounds work well here.  We can see the upper bound almost coincides with the magnetization when $h$ is positive and the lower bound almost coincides with the magnetization when $h$ is negative.  
%For $\beta=1.6$, there is a phase transition at the point $h=0$ and 
The linearized bounds  make a relatively poor estimation around the  since models are far apart due to the large perturbation in the parameter $\beta$ or $h$ in Figure 5.

\begin{figure}[H]
	\centering
	\begin{subfigure}[b]{0.45\textwidth}
		\centering
		\includegraphics[width=\textwidth]{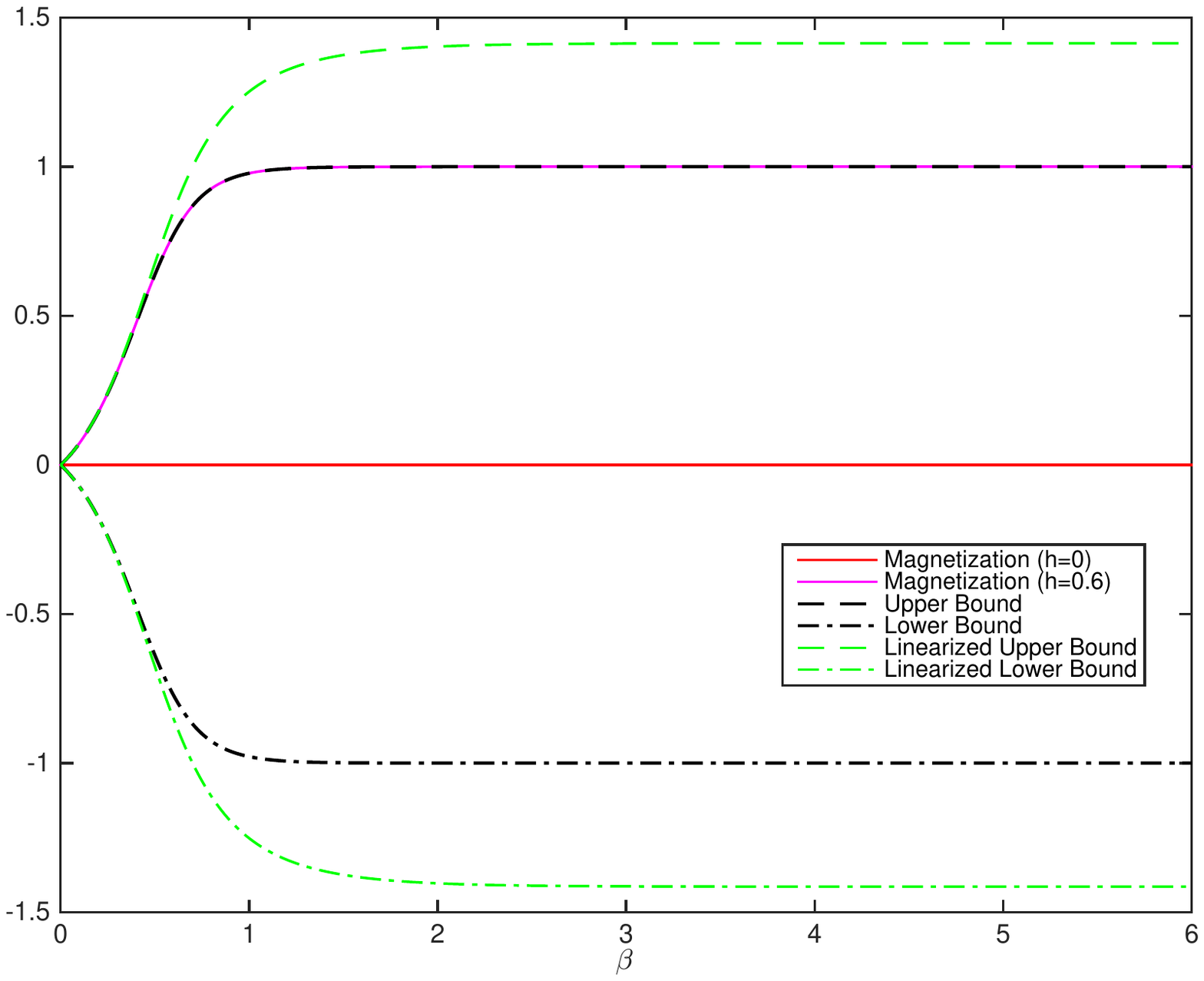}
		\caption{}
		\label{Fig:1dIVI_beta}
	\end{subfigure}%
	\begin{subfigure}[b]{0.45\textwidth}
		\centering
		\includegraphics[width=\textwidth]{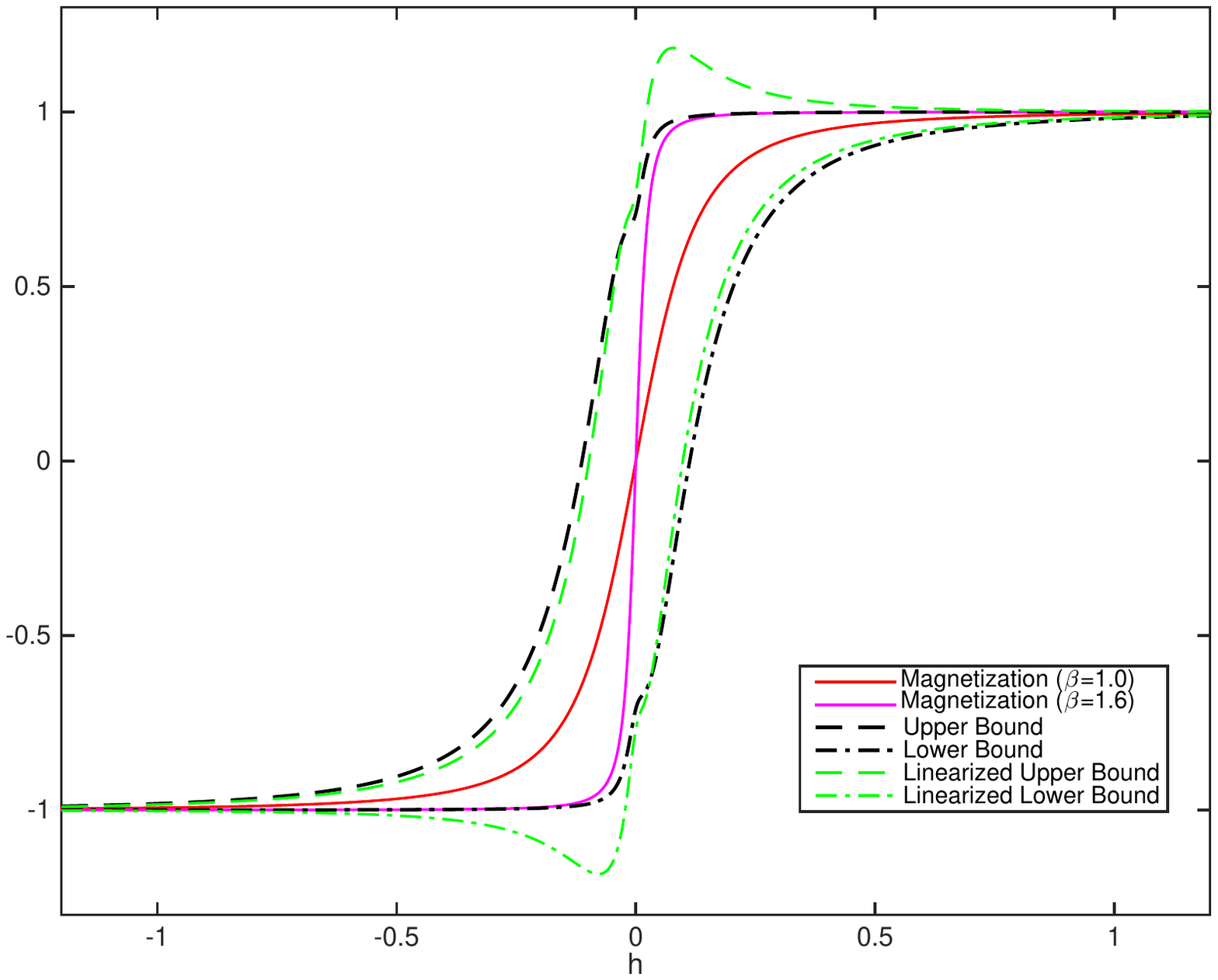}
		\caption{}
		\label{Fig:1dIVI_h2}
	\end{subfigure}
	\caption{\subref{Fig:1dIVI_beta} The red solid line is the magnetization of 1-d Ising model for $h=0$; the magenta solid line is the magnetization of 1-d Ising model for $h=0.6$;
the black dashed/dash-dot lines is the upper/lower bound by goal-oriented divergence; the green dashed/dash-dot line is the linearized  upper/lower bound. \subref{Fig:1dIVI_h2} The red solid line is the magnetization of 1-d Ising model for $\beta=1$; the magenta solid line is the magnetization of 1-d Ising model for $\beta=1.6$;
the black dashed/dash-dot lines is the upper/lower bound by goal-oriented divergence; the green dashed/dash-dot line is the linearized  upper/lower bound. }
	\label{Fig:1dIVI}
\end{figure}

%     \begin{figure}[H]
%     	\centering
%         \includegraphics[width=0.8\textwidth]{1dIVI_beta.pdf}
%         \caption{The red solid line is the magnetization of 1-d Ising model for $h=0$; The magenta solid line is the magnetization of 1-d Ising model for $h=0.5$;
% The black dashed/dash-dot lines is the upper/lower bound by goal-oriented divergence; The green dashed/dash-dot line is the linearized  upper/lower bound.}
% \label{Fig:1dIVI_beta}
%     \end{figure}
% \begin{figure}[H]
%     	\centering
%         \includegraphics[width=0.8\textwidth]{1dIVI_h1.pdf}
%         \caption{The red solid line is the magnetization of 1-d Ising model for $\beta=0.5$; The magenta solid line is the magnetization of 1-d Ising model for $\beta=0.8$;
% The black dashed/dash-dot lines is the upper/lower bound by goal-oriented divergence; The green dashed/dash-dot line is the linearized  upper/lower bound. }
%     \end{figure}
%     \begin{figure}[H]
%     	\centering
%         \includegraphics[width=0.8\textwidth]{1dIVI_h2.pdf}
%         \caption{The red solid line is the magnetization of 1-d Ising model for $\beta=1$; The magenta solid line is the magnetization of 1-d Ising model for $\beta=1.6$;
% The black dashed/dash-dot lines is the upper/lower bound by goal-oriented divergence; The green dashed/dash-dot line is the linearized  upper/lower bound.}
%     \end{figure}

%%%%%%%%%%%%%%%%%%%%%%%%%%%%%%%%%%%%%%%%%%%%%%%%%%

\section{Conclusions}

In this paper we first showed  that the classical information 
inequalities such as Pinsker-typer inequalities and other 
inequalities based on the Hellinger distance, the $\chi^2$-divergence, or the R$\acute{e}$nyi  
divergence perform poorly for the purpose of    controlling QoIs of systems with many degrees of freedom, 
and/or in long time regimes. On the other hand we demonstrated that the goal oriented 
divergence introduced in \cite{dupuis2015path} scales properly and 
allows to control QoIs  provided they can be written
as ergodic averages or spatial averages,  e.g. quantities like autocorrelation, mean magnetization, specific  
energy, and so on. We illustrated the potential  of our approach by 
computing uncertainty quantification bounds for phase diagrams for Gibbs measures, that is for systems in the thermodynamic limit.
We showed that the bounds perform remarkably well even in the presence of 
phase transitions. 

Although we provided computable bounds and exact calculations,
there is still a lot to be done towards developing efficient Monte Carlo samplers
for the goal oriented divergences $\Xi_{\pm}$, which is a central mathematical object in our approach.
An additional  strength of our approach is that it also 
applies  to non-equilibrium systems which do not necessarily satisfy 
detailed balance, providing robust nonlinear response bounds. The key insight here is to study the statistical 
properties of the paths of the systems and to use the thermodynamic 
formalism for space-time Gibbs measures.  
Our results can be applied   to a wide range of problems in  statistical 
inference, coarse graining of complex systems, steady states 
sensitivity analysis for non-equilibrium systems, and  Markov 
random fields.   

%\medskip\noindent{\bf Acknowledgments.} 
\section*{Acknowledgments} 
The research of MAK and JW was partially supported by the Defense Advanced Research Projects
Agency (DARPA) EQUiPS program under the grant  W911NF1520122. The research of LRB was partially supported by the National Science Foundation (NSF) under the grant DMS-1515712.

%%%%%%%%%%%%%%%%%%%%%%%%%%%%%%%%%%%%%%%%
\section*{Reference}
%\printbibliography 
\bibliographystyle{unsrt}      
\bibliography{ref} 
%%%%%%%%%%%%%%%%%%%%%%%%%%%%%%%%%%%%%%%%

\newpage

\appendix
\section{ Hellinger-based Inequalities}
\label{section:Hellinger}
\begin{lemma}
Suppose $P$ and $Q$ be two probability measures on some measure space 
$(\mathcal{X},\mathcal{A})$ and let $f : \mathcal{X} \to \mathbb{R}$ be
some quantity of interest (QoI), which is  measurable and has second moments with respect to both $P$ and $Q$. 
 Then
\begin{eqnarray}
 \left| E_{Q}(f)-E_{P}(f)\right|
& \le & \sqrt{2}H(Q, P) \sqrt{ Var_P(f) + Var_Q(g) + \frac{1}{2}(E_Q(f) - E_P(f))^2}. 
\end{eqnarray}
\end{lemma}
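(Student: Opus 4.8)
The plan is to first reprove the Dashti--Stuart-type bound $\left| E_Q(f) - E_P(f)\right| \le \sqrt{2}\, H(Q,P)\sqrt{E_P(f^2) + E_Q(f^2)}$ by a single application of the Cauchy--Schwarz inequality, and then to sharpen it via a control-variates (recentering) argument, exactly as indicated in the remark following \eqref{eq:hellinger_MK}. Throughout, the second-moment hypothesis guarantees that all the integrals appearing below are finite, so every step is legitimate.

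For the first step I would fix a common dominating measure $\mu$ (for instance $\mu=\tfrac12(P+Q)$) and write $p=dP/d\mu$, $q=dQ/d\mu$. Then
\[
E_Q(f) - E_P(f) = \int f\,(q-p)\,d\mu = \int f\,\big(\sqrt{q}-\sqrt{p}\big)\big(\sqrt{q}+\sqrt{p}\big)\,d\mu,
\]
and Cauchy--Schwarz gives
\[
\left| E_Q(f)-E_P(f)\right| \le \left(\int \big(\sqrt{q}-\sqrt{p}\big)^2 d\mu\right)^{1/2}\left(\int f^2\big(\sqrt{q}+\sqrt{p}\big)^2 d\mu\right)^{1/2}.
\]
The first factor equals $H(Q,P)$ by the definition of the Hellinger distance, and for the second I would use the pointwise bound $(\sqrt{q}+\sqrt{p})^2 \le 2(p+q)$, so that $\int f^2(\sqrt{q}+\sqrt{p})^2 d\mu \le 2\big(E_P(f^2)+E_Q(f^2)\big)$. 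This yields the intermediate inequality; the only loss in the whole argument occurs in this last pointwise bound, which is precisely what produces the constant $\sqrt{2}$.

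For the second step I would observe that $E_Q(f)-E_P(f)$ is unchanged when $f$ is replaced by $f-c$ for any constant $c$, so the intermediate bound applies verbatim with $f$ replaced by $f-c$, giving $\left| E_Q(f)-E_P(f)\right| \le \sqrt{2}\,H(Q,P)\sqrt{E_P\big((f-c)^2\big)+E_Q\big((f-c)^2\big)}$. The right-hand side is minimized by minimizing the quadratic $c\mapsto E_P((f-c)^2)+E_Q((f-c)^2)$, whose unique minimizer (the quadratic opens upward) is $c^\star=\tfrac12\big(E_P(f)+E_Q(f)\big)$. Substituting $c^\star$ and using $E_P\big((f-c^\star)^2\big) = Var_P(f) + (E_P(f)-c^\star)^2 = Var_P(f) + \tfrac14\big(E_Q(f)-E_P(f)\big)^2$ together with the analogous identity for $Q$, the two correction terms add up to $\tfrac12\big(E_Q(f)-E_P(f)\big)^2$ and the stated inequality follows. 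There is no real obstacle here: the argument is a Cauchy--Schwarz estimate plus elementary algebra, and the only point worth a sentence of justification is the verification that $c^\star$ is a genuine minimizer of the recentered bound.
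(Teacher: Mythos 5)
Your proof is correct and follows essentially the same route as the paper's: the paper's own argument invokes Lemma 7.14 of the Dashti--Stuart reference for the bound $\left|E_Q(f)-E_P(f)\right|\le\sqrt{2}\,H(Q,P)\sqrt{E_P(f^2)+E_Q(f^2)}$ and then performs exactly your recentering with $c^\star=\tfrac12\bigl(E_P(f)+E_Q(f)\bigr)$. The only difference is that you additionally derive the cited base inequality from scratch (Cauchy--Schwarz with the dominating measure $\tfrac12(P+Q)$ and the pointwise bound $(\sqrt{p}+\sqrt{q})^2\le 2(p+q)$), which is a correct and welcome self-contained addition; note also that your computation confirms the second variance term in the statement should read $Var_Q(f)$ rather than $Var_Q(g)$.
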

\begin{proof}
By Lemma $7.14$ in \cite{dashti2013bayesian},we have 
\begin{eqnarray*}
\left| E_{Q}(f)-E_{P}(f) \right|
& \le & \sqrt{2}H(Q, P) \sqrt{E_P (f^2) +E_Q (f^2)}.
\end{eqnarray*}
For any $c\in \mathbb{R}$, replace $f$ by $f-c$, we have 
\begin{align*}
\mid E_{Q}(f)-E_{P}(f) \mid &=\mid E_{P}(f-c)-E_{Q}(f-c)\mid \\
&\le \sqrt{2}H(Q, P) \sqrt{E_P ((f-c)^2) +E_Q ((f-c)^2)}.
\end{align*}
Thereby,
\begin{equation*}
\mid E_{Q}(f)-E_{P}(f) \mid \le \inf \limits_{c}\sqrt{2}H(Q, P) \sqrt{E_P ((f-c)^2) +E_Q ((f-c)^2)}
\end{equation*}
By some straight  calculations, we can find the optimal c is : 
\begin{equation*}
c^{\star}= \frac{E_P(f)+E_Q(f)}{2}.
\end{equation*}
Thus, we have 
\begin{eqnarray*}
\mid E_{Q}(f)-E_{P}(f) \mid 
& \le & \sqrt{2}H(Q, P)\sqrt{E_Q [(f-c^\star)^2] +E_P [(f-c^\star)^2]} \\ 
& = &  \sqrt{2}H(Q, P) \sqrt{ Var_P(f) + Var_Q(g) + \frac{1}{2}(E_Q(f) - E_P(f))^2}.
\end{eqnarray*}
\end{proof}
\section{Proof of Lemmas \ref{lem:scaling_iid} and \ref{lem:scaling_markov}}
\label{app:iid_markov}

\subsection{I.I.D. sequences}
{\bf Proof of Lemma \ref{lem:scaling_iid}.} since $P_N$ and $Q_N$ are 
product measures  we have $\frac{dP_N}{dQ_N}(\sigma_{\Lambda_N}) = \prod_{j=1}^N \frac{dP}{dQ}(\sigma_j)$. 

For the relative entropy we have 
\begin{eqnarray}
R(P_N\mid \mid Q_N) & = &\int_{\mathcal{X}_n} \log\frac{dP_N}{dQ_N}(\sigma_{\Lambda_N}) dP_N(\sigma_{\Lambda_N}) = \int_{ \mathcal{X}_n} \sum_{j=1}^N \log \frac{dP}{dQ}(\sigma_j) dP_N(\sigma_{\Lambda_N}) \nonumber \nonumber \\
&=& \sum_{j=1}^N \int_{\mathcal{X}}\log \frac{dP}{dQ}(\sigma_j) dP(\sigma_j)=  N R(P\mid\mid Q) \,.
\end{eqnarray}

For the Reny relative entropy we have 
\begin{eqnarray}
D_\alpha(P_N\mid \mid Q_N) & = & \log \int_{\mathcal{X}_n} \left(\frac{dP_N}{dQ_N}(\sigma_{\Lambda_N})\right)^{\alpha} dQ_N(\sigma_{\Lambda_N}) = \log \int_{ \mathcal{X}_n} \prod_{j=1}^N \left(\frac{dP}{dQ}(\sigma_j)\right)^{\alpha} dQ_N(\sigma_{\Lambda_N}) \nonumber \\
&=& \sum_{j=1}^N\log \int_{\mathcal{X}} \left(\frac{dP}{dQ}(\sigma_j)\right)^{\alpha} dQ(\sigma_j)=  N D_\alpha(P\mid\mid Q) \,.
\end{eqnarray}

For the $\chi^2$ distance we note first that 
\[
\chi^2(P \mid \mid Q) = \int \left(\frac{dP}{dQ}-1\right)^2 dQ =  
\int \left( \left(\frac{dP}{dQ}\right)^2- 2 \frac{dP}{dQ} + 1\right) dQ = 
\int  \left(\frac{dP}{dQ}\right)^2 dQ  - 1 \,,
\]
and therefore we have 
\begin{eqnarray}
\chi^2(P_N \mid \mid Q_N) & =& \int_{\mathcal{X}_n} \left( \prod_{j=1}^N \frac{dP}{dQ}(\sigma_j) \right)^2 dQ_N(\sigma_{\Lambda_N}) -1  = \prod_{j=1}^N \int_{\mathcal{X}} \left( \frac{dP}{dQ}(\sigma_j) \right)^2 dQ(\sigma_j)  -1 \nonumber \nonumber \\
&=& \left( 1 + \chi^2(P\mid\mid Q)\right)^N -1 \,.
\end{eqnarray}

For the Hellinger distance we note first that 
\[
H^2(P,Q) 
=  \int \left( \sqrt{\frac{dP}{dQ}} -1\right)^2 dQ 
= \int \left( \frac{dP}{dQ}- 2 \sqrt{\frac{dP}{dQ}} + 1 \right) dQ 
= 2 - 2\int \sqrt{\frac{dP}{dQ}} dQ \,,
\]
and thus $\int \sqrt{\frac{dP}{dQ}} dQ = 1 -\frac{1}{2} H^2(P,Q)$. Therefore we have
\begin{eqnarray}
H^2(P_N,Q_N)&=& 2 - 2\int_{\mathcal{X}_n}   \sqrt{\prod_{j=1}^N \frac{dP}{dQ}(\sigma_j)} dQ(\sigma_{\Lambda_N})  = 2 - 2\prod_{j=1}^N \int_{\mathcal{X}} \sqrt{\frac{dP}{dQ}(\sigma_j)}  dQ(\sigma_j) \nonumber \nonumber \\
&=&  2 - 2 \left( 1 - \frac{H^2(P,Q)}{2}\right)^N \,.
\end{eqnarray}
This concludes the proof of Lemma \ref{lem:scaling_iid}. \qed

\subsection{Markov sequences}

{\bf Proof of Lemma \ref{lem:scaling_markov}}:  The convergence of the relative 
entropy rate is well known and we give here a  short  proof for the convenience of 
the reader. 

Recall that $\nu_p$ and $\nu_q$ are the initial distributions of the Markov 
chain at time $0$ with transition matrices $p$ and $q$ respectively. 
We write $\nu_p^k$ the distribution at time $k$ as a row vector 
and we have then $\nu_p^k(x) \equiv \nu_p p^k (x)$ where $p^k$ is the 
matrix product.

By expanding the logarithm and integrating we find  
\begin{eqnarray}
&& \frac{1}{N} \int \log \frac{dP_N}{dQ_N} dP_N \nonumber \nonumber \\
&& = \frac{1}{N} \sum_{x_0, \cdots x_N} \log
\left(
\frac{\nu_p(x_0) p(x_0,x_1) \cdots p(x_{n-1}, x_n)}{\nu_q(x_0) q(x_0,x_1) \cdots q(x_{n-1}, x_n)}
 \right) 
 \nu_p(x_0) p(x_0,x_1) \cdots p(x_{n-1}, x_n) 
\nonumber  \nonumber \\
&& = \frac{1}{N} \sum_{x_0}  \log \frac{\nu_p(x_0)}{\nu_q(x_0)} \nu_p(x_0) +  \frac{1}{N}\sum_{k=1}^N 
\sum_{x_0, \cdots, x_k} 
\nu_p(x_0) p(x_0,x_1) \cdots p(x_{k-1}, x_k)   \log \frac{p(x_{k-1}, x_k)}{q(x_{k-1}, x_k)} \nonumber 
\nonumber \nonumber \\
&& = \frac{1}{N} \sum_{x_0} \log \frac{\nu_p(x_0)}{\nu_q(x_0)} \nu_p(x_0) + \frac{1}{N}\sum_{k=1}^N  \sum_{x,y}  \nu_p^k(x)  p(x,y)  \log \frac{p(x, y)}{q(x, y)}. \label{eq:rerate}
\end{eqnarray}
The first term goes to $0$ as $N \to \infty$ while for the second term, by the ergodic theorem we have 
that for any initial condition $\nu_p$,   $\lim_{N \to \infty} \frac{1}{N}\sum_{k=1}^{N} \nu_p^k= \mu_p$ where $\mu_p$ is stationary distribution.  Therefore we obtain that   
\[
\lim_{N \to \infty} \frac{1}{N} \int \log \frac{dP_N}{dQ_N} dP_N = \sum_{x,y}  \mu_p (x)  p(x,y)  \log \frac{p(x, y)}{q(x, y)} .
\]
Finally we note that the limit can be written as a relative entropy itself, since 
\[
\sum_{x,y}  \mu_p (x)  p(x,y)  \log \frac{p(x, y)}{q(x, y)} = \sum_{x} \mu_p(x) R\left(p(x, \cdot) \mid \mid  q(x, \cdot)\right) .
\]
As a consequence the relative entropy rate vanishes if and only if  
$ R\left(p(x, \cdot) \mid \mid  q(x, \cdot)\right) = 0$ for every $x$
that is if and only if $p(x,y)=q(x,y)$ for every $x$ and $y$.

We turn next to R$\acute{e}$nyi  entropy. As it will turn out understanding the scaling 
properties of the R$\acute{e}$nyi  entropy will allow us immediately to understand the scaling
properties of the chi-squared and Hellinger divergences as well.
We have 
\[
\frac{1}{N} D_\alpha(P_N \mid \mid Q_N) = \frac{1}{N} \frac{1}{\alpha -1} \log \sum_{x_0, \cdots x_N}
\nu_q(x_0)^{1-\alpha}\nu_p(x_0)^\alpha \prod_{j=1}^N p(x_{j-1}, x_j)^{\alpha} q(x_{j-1}, x_j)^{1-\alpha}.   
\]
Let $F_\alpha$ be the non-negative matrix with entries 
\[
F_\alpha(x,y)= p(x,y)^{\alpha} q(x,y)^{1-\alpha}.
\]
since $p$ and $q$ are irreducible and mutually absolutely continuous the matrix 
$F_\alpha$ is irreducible as well. Let $v$ be the row vector with entries 
$v(x)=\nu_q(x)^{1-\alpha}\nu_p(x)$ and $1$ the column vector with all entries equal 
to $1$. Then we have 
\[
\frac{1}{N} D_\alpha(P_N \mid \mid Q_N) =  \frac{1}{\alpha-1} v F_\alpha^N 1 ,
\]
and thus by the Perron-Frobenius Theorem \cite{dembo2009large}, we have
\[
\lim_{N \to \infty} \frac{1}{N}D_\alpha(P_N,Q_N) = \frac{1}{\alpha-1} \log \rho(\alpha).
\]
where $\rho(\alpha)$ is the maximal eigenvalue of the non-negative matrix $F_\alpha$. 

It remains to show that the limit is $0$ only if $p=q$. In order to do this  
we will use some convexity properties of the R$\acute{e}$nyi  entropy 
\cite{van2014renyi}.  For $0< \alpha \le 1$  the R$\acute{e}$nyi  entropy $D_\alpha(P\mid 
\mid Q)$ is jointly convex in $P$ and $Q$, i.e. for any $\epsilon \in [0,1]$ we 
have 
\[
D_\alpha( \epsilon P_0 + (1- \epsilon) P_1 \mid \mid \epsilon 
Q_0 + (1- \epsilon) Q_1) \le \epsilon D(P_0\mid \mid Q_0) + (1 - \epsilon) D(P_1\mid 
\mid Q_1).
\]
For $\alpha> 1$ the R$\acute{e}$nyi  entropy is merely jointly quasi-convex, that is 
\[
D_\alpha(\epsilon P_0 + (1- \epsilon) P_1 \mid \mid \epsilon Q_0 + (1- 
\epsilon) Q_1) \le \max \left\{ D(P_0\mid \mid Q_0), D(P_1\mid \mid Q_1) \right\}.
\] 
In any case let us assume that $p \not= q$ is such that 
\[
\lim_{N \to \infty} \frac{1}{N} D_\alpha(P_N\mid \mid Q_N) = 0 \,. 
\]
Then by convexity, or quasi-convexity we have for any $\epsilon \in [0,1]$  
\[
\lim_{N \to \infty}\frac{1}{N} D_\alpha( \epsilon P_N + (1-\epsilon) Q_N \mid \mid Q_N) = 0 \,. 
\]
On the other hand, for any smooth parametric family $Q_\theta$ we have that, \cite{van2014renyi}, 
\[
D_\alpha(Q^{\theta'} \mid \mid Q^\theta) = \frac{\alpha}{2} (\theta - \theta')^2 J(Q^\theta) + O((\theta'-\theta)^3)  
\]
where $J(Q^\theta)$ is the Fisher information. If $Q^\theta$ is a discrete 
probability distribution then the Fisher information is $J(Q^\theta)= \sum_{x} 
Q^\theta(x) (\frac{d}{d\theta} \log Q^\theta(x))^2$.  

To compute $J(Q_N^\theta)$ we can use the relative entropy $R(Q_N^{\theta'} \mid \mid Q_N^\theta)=D_1(Q_N^{\theta'} \mid \mid Q_N^\theta)$  and from \eqref{eq:rerate} with $q=q^\theta$ and $p=q^{\theta'}$ 
we obtain 
\begin{eqnarray}
R(Q^{\theta'}_N \mid \mid Q^\theta_N) &=& (\theta'-\theta)^2 \sum_{x} \nu_{q^\theta}(x) \left(\frac{d}{d \theta} \log \nu_{q^\theta}(x)\right)^2 \nonumber \nonumber \\ 
&& +\frac{1}{2} (\theta-\theta')^2 \sum_{k=1}^N \sum_{x,y} (\nu_{q^\theta})^k(x) q^\theta(x,y)  \left(\frac{d}{d\theta} \log q^\theta(x,y)\right)^2 + O((\theta-\theta')^3). \nonumber
\end{eqnarray}
So as $N \to \infty$ we obtain 
\begin{equation}
\lim_{N \to \infty} \frac{1}{N} R(Q_N^{\theta'} \mid \mid Q_N^\theta) = \frac{1}{2}(\theta-\theta')^2 \sum_{x,y} \mu_{q^\theta}(x) q^\theta(x,y)\left(\frac{d}{d\theta} \log q^\theta(x,y)\right)^2+O((\theta-\theta')^3)
\end{equation}
If we now apply this to the family $Q^\epsilon = Q_N + \epsilon(P_N - Q_N)$ we have
that 
\[
\lim_{N \to \infty} \frac{1}{N} R(Q_N + \epsilon(P_N-Q_N) \mid \mid Q_N) = \frac{1}{2}\epsilon^2 
\sum_{x,y} \mu_q(x) \frac{ (p(x,y) - q(x,y))^2}{q(x,y)} + O(\epsilon^3)
\]
since the term of order $\epsilon^2$ is strictly positive unless $p=q$ this 
contradicts our assumption that  $\lim_{N \to \infty}\frac{1}{N} D_\alpha( \epsilon P_N + (1-\epsilon) Q_N \mid \mid Q_N) = 0$. 

We can now easily deduce the scaling of the $\chi^2$ divergence from the R$\acute{e}$nyi  
relative entropy because  of the relation $\chi^2(Q_N \mid \mid P_N) = e^{D_2(Q_N \mid \mid P_N)}-1$. This implies that $\chi^2(Q_N \mid \mid P_N)$ grows exponentially in $N$ unless 
$\lim_{N\to \infty} \frac{1}{N}D_2(Q_N \mid \mid P_N)=0$ which is possible if and 
only if $p=q$.  

Similarly for the Hellinger distance we use the relation $H^2(P_N, Q_N) = 2 -2 e^{-\frac{1}{2}  D_{\frac{1}{2}}(Q_N \mid \mid P_N)}$ and the scaling of the R$\acute{e}$nyi  entropy to see $H(Q_N,P_N)$  converges to $\sqrt{2}$ unless $p=q$. This concludes the proof of Lemma \ref{lem:scaling_markov}.

\section{Background for Section 5}
\subsection{Ising models and mean field models}
\label{section:backgroup_Ising and mean}
\medskip 
\noindent 
{\bf One-dimensional Ising model} Consider an Ising model on the lattice $\Lambda_N$ which  a line of N sites, labelled successively $x=1,2,...,N$. At each site there is a spin $\sigma(x)$, with two possible values: $+1$ or $-1$. The the Hamiltonian is given by 
\begin{equation}
H_{N}(\sigma_{\Lambda_N})=-\beta J\sum_{x=1}^{N-1}\sigma(x)\sigma(x+1)-\beta h\sum_{x\in \Lambda_N}\sigma(x).
\end{equation}
The configuration probability is given by the Boltzmann distribution with inverse temperature $\beta  \ge 0$:
\begin{equation}
d\mu_N(\sigma_{\Lambda_N})=\frac{1}{Z_{N}}e^{- H_{N}(\sigma_{\Lambda_N})}dP_N(\sigma_{\Lambda_N}),
\end{equation}
where 
\begin{equation}
Z_{N}=\sum_{\sigma_{\Lambda_N}} e^{- H_{N}(\sigma_{\Lambda_N})}
\label{eq:partition_Ising1}
\end{equation} 
is the partition function  and
$P_N(\sigma_{\Lambda_N})$ is the counting measure on $\Lambda_N$.
%$P_N(\sigma_{\Lambda_N})=\prod_{x=1}^NP(\sigma_{\Lambda_N})$ is the counting measure
% is the prior distribution with $P(\sigma(x)=1)=P(\sigma(x)=-1)=0.5$.

By \cite{baxter2007exactly}, the magnetization is
\begin{equation}
M= \frac{e^{J\beta}\sinh(h\beta)}{\sqrt{e^{2J\beta}\sinh^2(h\beta)+e^{-2J\beta}}},
\label{eq:mean_Ising1}
\end{equation}

and the pressure is 
\begin{equation}
 P=\lim \limits_{N}\frac{1}{N}\log Z_{N} 
= \log[e^{\beta J}\cosh(\beta h)+\sqrt{e^{2J\beta}\sinh^2(h\beta)+e^{-2J\beta}}].
\label{eq:pressure_Ising1}
\end{equation}
Differentiating \eqref{eq:partition_Ising1} with respect to $J$ and using \eqref{eq:pressure_Ising1}, one obtain
\begin{equation}
\lim \limits_{N}\frac{1}{N}E_{\mu_N}[\sum_{x \in \Lambda_N}\sigma(x)\sigma(x+1)]=\lim \limits_{N}\frac{1}{\beta}\frac{\partial}{\partial J}(\frac{1}{N}\log Z_{N}) 
% = \frac{1}{k_1}[\cosh(\beta h)-\frac{1+e^{-4\beta J}}{\cosh(\beta h)+k_1}].
= 1-\frac{1}{k_1}\frac{2e^{-2\beta J}}{e^{\beta J}\cosh(\beta h)+k_1},
\label{eq:corre_Ising1}
\end{equation}
where
\begin{equation}
k_1=\sqrt{e^{2J\beta}\sinh^2(h\beta)+e^{-2J\beta}}.
\end{equation}

Consider the susceptibility $\mathcal{X}$, by Section 1.7 in \cite{baxter2007exactly}, we have
\begin{equation}
\mathcal{X}=\frac{\partial M}{\partial h}=\beta\lim \limits_{N} \frac{1}{N}Var_{\mu_N}(\sum_{x \in \Lambda_N}\sigma(x)).
\end{equation}
Thus, by differentiating \eqref{eq:mean_Ising1} with respect to h, we obtain 
\begin{equation}\label{eq:var_Ising1}
\lim \limits_{N} \frac{1}{N}Var_{\mu_N}(\sum_{x \in \Lambda_N}\sigma(x))=\frac{e^{-J\beta}\cosh(h\beta)}{k_1^3}.
\end{equation}
 
\smallskip
\noindent
{\bf Square lattice zero-field Ising model} Consider an Ising model on the square lattice $\Lambda_N$ with $|\Lambda|=N$. Similarly with the 1-d Ising model, the spins $\{\sigma(x)\}_{x=1}^N \in \{-1,1\}^N$. Assume there is no external magnetic field, then Hamiltonian for the 2-d zero-field Ising model is given by 
\begin{equation}
H_{N}(\sigma_{\Lambda_N})=-\beta J\sum_{\langle x,y \rangle \subset \Lambda_N}\sigma(x)\sigma(y).
\label{eq:Hamiltonian_Ising2}
\end{equation}
where the first sum is over pairs of adjacent spins (every pair is counted once). The notation $\langle x,y\rangle $ indicates that sites $x$ and $y$ are nearest neighbors. 
Then the configuration probability is given by:
\begin{equation}
d\mu_{N}(\sigma_{\Lambda_N})=\frac{1}{Z_{N}}e^{\beta J\sum_{\langle x,y \rangle \subset \Lambda_N}\sigma(x)\sigma(y)}dP_N(\sigma_{\Lambda_N}),
\end{equation}
where 
\begin{equation}
Z_{N}=\sum_{\sigma_{\Lambda_N}} e^{\beta J\sum_{\langle x,y \rangle \subset \Lambda_N}\sigma(x)\sigma(y)}
\label{eq:partition_Ising2}
\end{equation} 
is the partition function  and $P_N(\sigma_{\Lambda_N})=\prod_{x=1}^NP(\sigma_{\Lambda_N})$ is the prior distribution with $P(\sigma(x)=1)=P(\sigma(x)=-1)=0.5$.
By Section 7.10 in \cite{baxter2007exactly}, the spontaneous  magnetization is 
\begin{equation} \label{eq:mean_Ising2}
M_0=
 \begin{cases}
    [1-\sinh^{-4}(2\beta J)]^{1/8}  &  \text{ $\beta>\beta_c$ ,} \nonumber \\
    0 
        &  \text{$\beta<\beta_c$,}
 \end{cases}                \end{equation}
 where $\beta_c=\frac{\log(1+\sqrt{2})}{2J}$.
 Actually, this formula for the spontaneous magnetization is given by the definition $
 M_0=\lim \limits_{h\to 0^{+}}\langle \sigma(x)\rangle$. Sometimes, we can also consider the spontaneous magnetization by using the other definition $M=\lim\limits_{h\to )^{-}}\langle \sigma(x)\rangle$, which actually is the opposite of \eqref{eq:mean_Ising2}. 
 
 And the pressure is also given by \cite{baxter2007exactly}
\begin{equation}
P=\lim \limits_{N \to \infty} \frac{1}{N}\log Z_{N} 
% & \approx & 
% \frac{\log2}{2}+\frac{1}{2\pi}\int_0^{\pi}\log[\cosh^2(2\beta J)+\frac{1}{k}\sqrt{1+k^2-2k\cos(2\theta)}]d\theta \nonumber \\
% &=&
% \frac{\log2}{2}+\frac{1}{2\pi}\int_0^{\pi}\log[\cosh^2(2\beta J)+\sqrt{\frac{1}{k^2}+1-2\frac{1}{k}\cos(2\theta)}]d\theta \nonumber \\
% &=&
% \frac{\log2}{2}+\frac{1}{2\pi}\int_0^{\pi}\log[\cosh^2(2\beta J)+\sqrt{\sinh^4(2\beta J)+1-2\sinh^2(2\beta J)\cos(2\theta)}]d\theta  \nonumber \\
=\frac{\log2}{2}+\frac{1}{2\pi}\int_0^{\pi}\log[\cosh^2(2\beta J)+k(\theta)]d\theta,
\label{eq:pressure_Ising2}
\end{equation}
where \begin{equation}
k(\theta)=\sqrt{\sinh^4(2\beta J)+1-2\sinh^2(2\beta J)\cos(2\theta)}.
\end{equation}
And, by \eqref{eq:partition_Ising2} and \eqref{eq:pressure_Ising2}, we obtain
\begin{equation}\label{eq:corre_Ising2}
\lim \limits_{N \to \infty}\frac{1}{N}E_{\mu_{N}}(\sum_{\langle x,y\rangle \subset \Lambda_N}\sigma(x)\sigma(y))=\frac{1}{\beta}\frac{\partial}{\partial J}(\frac{1}{N}\log Z_{N} )=  
\frac{\sinh(4\beta J)}{\pi}\int_0^{\pi} \frac{1}{k(\theta)}[1-\frac{1+\cos(2\theta)}{\cosh^2(2\beta J)+k(\theta)}]d\theta .
\end{equation}

\smallskip
\noindent
{\bf Mean field model}
Given the Lattice $\Lambda_N$ in d-dimension and set $|\Lambda|=N$, consider the Hamiltonian for d-dimensional Ising model
\begin{equation}
 H_{N}(\sigma_{\Lambda_N}) = -\beta J\sum_{\langle x,y \rangle \subset \Lambda_N}  \sigma(x)\sigma(y)-\beta h\sum_{x \in \Lambda_N}^N\sigma(x)\nonumber \\
= -\sum_{x \in \Lambda_N} \sigma(x)\{\frac{1}{2}\beta J\sum_y^{n.n} \sigma(y)+\beta h \},
\label{eq:H_Ising}
\end{equation}
where the first sum is over pairs of adjacent spins (every pair is counted once). The notation $\langle x,y\rangle $ indicates that sites $x$ and $y$ are nearest neighbors. 
And, $\{\sigma(x)\}_{x=1}^{N} \in \{-1,1\}^N$ are Ising spins.
Replace $\sum_y^{n.n} \sigma(y)$ by $\sum_y^{n.n} \langle \sigma(y) \rangle$ in \eqref{eq:H_Ising}, we obtain the mean field Hamiltonian
\begin{eqnarray}
 H_{N;mf}(\sigma_{\Lambda_N}) & = & -\sum_{x \in \Lambda_N} \sigma(x)\{\frac{1}{2}\beta J\sum_y^{n.n} \langle \sigma(y) \rangle +\beta h \} \nonumber \\ 
& = & -\sum_{x \in \Lambda_N} \sigma(x)\{\frac{1}{2}\beta J 2d m +\beta h \} \nonumber \\ 
& = & -\sum_{x \in \Lambda_N} \sigma(x)\{\beta Jd m +\beta h \} \nonumber \\ 
& = & -\beta h_{mf}\sum_{x \in \Lambda_N}\sigma(x) \nonumber \\
\end{eqnarray}
where $h_{mf}=h+Jdm$.
Then, we have the probability
\begin{eqnarray}
d\mu_{N;mf}(\sigma_{\Lambda_N})=\frac{1}{Z_{N;mf}}e^{- H_{N;mf}(\sigma_{\Lambda_N})}dP_N(\sigma_{\Lambda_N})=\frac{1}{Z_{N;mf}}e^{\beta \sum_{x \in \Lambda_N}h_{mf}\sigma(x)}dP_N(\sigma_{\Lambda_N}).
\end{eqnarray}
So the partition function is 
\begin{eqnarray}
Z_{N;mf} & = &\sum_{\sigma(x)} e^{ \beta\sum_{x \in \Lambda_N}h_{mf}\sigma(x)} \nonumber \\
& = & \sum_{\sigma(x)} \prod_{x \in \Lambda_N} e^{\beta h_{mf}\sigma(x)} \nonumber \\
& = & \prod_{x \in \Lambda_N}\sum_{\sigma(x)}  e^{\beta h_{mf}\sigma(x)} \nonumber \\
& = & \prod_{x \in \Lambda_N} (e^{\beta h_{mf}}+e^{-\beta h_{mf}}) \nonumber \\
& = & (e^{\beta h_{mf}}+e^{-\beta h_{mf}})^N \nonumber \\
& = & {Z_{1;mf}}^N,
\end{eqnarray}
where $Z_{1;mf}=e^{\beta h_{mf}}+e^{-\beta h_{mf}}$.
So the pressure is
\begin{equation}\label{eq:pressure_mf}
P_{mf}=\lim\limits_{N}\frac{1}{N}\log Z_{N;mf}=\log(e^{\beta h_{mf}}+e^{-\beta h_{mf}})
\end{equation}
And, we can also consider the $\mu_{N;mf}$ as a product measure
\begin{eqnarray}
d\mu_{N;mf}(\sigma_{\Lambda_N})=\frac{1}{Z_{N;mf}}e^{\beta  \sum_{x \in \Lambda_N}h_{mf}\sigma(x)}dP_N(\sigma_{\Lambda_N})=\prod_{x \in \Lambda_N}\frac{1}{Z_{1;mf}}e^{\beta h_{mf}\sigma(x)}dP(\sigma(x)).
\end{eqnarray}
It is easy to find the magnetization
\begin{eqnarray}\label{eq:mean_mf}
m & = &\frac{1}{N}E_{\mu_{N;mf}}[\sum_{x \in \Lambda_N} \sigma(x)] \nonumber \nonumber \\
& = & \frac{1}{N}\sum_{x \in \Lambda_N}E_{\mu_{N;mf}}[\sigma(x)] \nonumber \nonumber \\ 
& = & \frac{1}{N}\sum_{x \in \Lambda_N}\sum_{\sigma(x)} \sigma(x)\frac{1}{Z_{1;mf}}e^{\beta h_{mf}\sigma(x)} \nonumber \nonumber \\
& = & \frac{1}{N}\sum_{x \in \Lambda_N}\sum_{\sigma(x)} \sigma(x)\frac{1}{Z_{1;mf}}e^{\beta h_{mf}\sigma(x)} \nonumber \nonumber \\
& = & \frac{1}{N}\sum_{x \in \Lambda_N} \frac{1}{Z_{1;mf}}(e^{\beta h_{mf}}-e^{-\beta h_{mf}}) \nonumber \nonumber \\
& = &  \frac{1}{e^{\beta h_{mf}}+e^{-\beta h_{mf}}}(e^{\beta h_{mf}}-e^{-\beta h_{mf}}) \nonumber \nonumber \\
& = & \tanh(\beta h_{mf})\nonumber \nonumber \\
& = & \tanh(\beta h+\beta Jdm)
\end{eqnarray}
and 
\begin{eqnarray}\label{eq:var_mf_appendix}
\frac{1}{N}Var_{\mu_{N;mf}}(\sum_{x \in \Lambda_N} \sigma(x)) & = &
\frac{1}{N}Var_{\mu_{N;mf}}(\sum_{x \in \Lambda_N} \sigma(x))\nonumber \\
& = & \frac{1}{N}(E_{\mu_{N;mf}}[\sum_{x \in \Lambda_N} \sigma(x)]^2 -N^2m^2) \nonumber \\
& = & \frac{1}{N}(E_{\mu_{N;mf}}[\sum_{x \in \Lambda_N} \sigma^2(x)+\sum_{x\ne y}\sigma(x)\sigma(y)] -Nm^2) \nonumber \\
& = & \{ \sum_{\sigma(x)} \sigma^2(x)\frac{1}{Z_{1;mf}}e^{\beta h_{mf}\sigma(x)}+(N-1)E_{\mu_{N;mf}}[\sigma(x)\sigma(y)]\} -Nm^2 \nonumber \\
& = & \{1 +(N-1)m^2\} -Nm^2 \nonumber \\
& = & 1- m^2.
\end{eqnarray}
So we can obtain the magnetization $m$ by solving the implicit equation \eqref{eq:mean_mf}. 
% Also, we have 
% \begin{equation}
% \beta=\frac{a\tanh(m)}{ h+dmJ} 
% \end{equation}
% and
% \begin{eqnarray}
% h=\frac{a\tanh(m)- \beta dmJ}{\beta}
% \end{eqnarray}
\subsection{Computation  of goal-oriented divergences}
\label{section: example_goal oriented divergence calculation }
\smallskip
\noindent
{\bf Mean field versus mean field} 
Given two mean field models, assume $\mu_{N;mf}$ and $\mu'_{N;mf}$ are their two configuration probabilities with 
\begin{eqnarray}
d\mu_{N;mf}(\sigma)=\frac{1}{Z_{N;mf}}e^{- H_{N;mf}(\sigma_{\Lambda_N})}dP_N(\sigma_{\Lambda_N})=\frac{1}{Z_{N;mf}}e^{\beta \sum_{x \in \Lambda_N}h_{mf}\sigma(x)}dP_N(\sigma_{\Lambda_N})
\end{eqnarray}
and 
\begin{eqnarray}
d\mu'_{N;mf}(\sigma)=\frac{1}{Z'_{N;mf}}e^{- H'_{N;mf}(\sigma_{\Lambda_N})}dP_N(\sigma_{\Lambda_N})=\frac{1}{Z'_{N;mf}}e^{\beta' \sum_{x \in \Lambda_N}h'_{mf}\sigma(x)}dP_N(\sigma_{\Lambda_N}).
\end{eqnarray}
where  $h_{mf}=h+dJm$ and $h'_{mf}=h'+dJ'm'$.
Then, by \eqref{eq:re_gibbs},
the relative entropy between $\mu'_{N;mf}$ and $\mu_{N;mf}$ is given by 
\begin{eqnarray}
R(\mu'_{N;mf}\|\mu_{N;mf})
&=& \log Z_{N;mf}-\log Z'_{N;mf}+E_{\mu'_{N;mf}}[H_{N;mf}(\sigma_{\Lambda_N})-H'_{N;mf}(\sigma_{\Lambda_N})]\nonumber \\
&=& \log \frac{Z_{N;mf}}{Z'_{N;mf}}+ (\beta' h'_{mf}-\beta h_{mf})E_{\mu'_{N;mf}}(\sum_{x \in \Lambda_N}\sigma(x))\nonumber \\
& = & N\log \frac{Z_{1;mf}}{Z'_{1;mf}}+(\beta' h'_{mf}-\beta h_{mf})Nm'\nonumber \\
& = & N\log \frac{e^{\beta h_{mf}} +e^{-\beta h_{mf}}}{e^{-\beta' h'_{mf}}+e^{\beta' h'_{mf}}}+N(\beta' h'_{mf}-\beta h_{mf})m'.
\end{eqnarray}
Therefore, we have
\begin{equation}
\frac{1}{N}R(\mu'_{N;mf}\|\mu_{N;mf})
=\log \frac{e^{\beta h_{mf}} +e^{-\beta h_{mf}}}{e^{-\beta' h'_{mf}}+e^{\beta' h'_{mf}}}+(\beta' h'_{mf}-\beta h_{mf})m'.
\end{equation}
And, the cumulant generating function of $Nf_N=N\frac{1}{N}\sum_{x \in \Lambda_N}\sigma(x)=\sum_{x \in \Lambda_N}\sigma(x)$ is 
\begin{eqnarray}
\Lambda_{\mu_{N;mf},Nf_N}(c) & = &\log E_{\mu_{N;mf}}(e^{cN\frac{1}{N}\sum_{x \in \Lambda_N} \sigma(x)}) \nonumber \\
 & = &\log \sum_{\sigma(x)}e^{cN\frac{1}{N}\sum_{x \in \Lambda_N} \sigma(x)}\frac{1}{Z_{N;mf}}e^{\beta h_{mf}\sum_{x \in \Lambda_N}\sigma(x)}\nonumber \\
 & = & \log \sum_{\sigma(x)}\frac{1}{Z_{N;mf}}e^{(c+\beta h_{mf})\sum_{x \in \Lambda_N}\sigma(x)} \nonumber \\
 & = & \log \sum_{\sigma(x)} \prod_{x \in \Lambda_N}\frac{1}{Z_{1;mf}}e^{(c+\beta h_{mf})\sigma(x)} \nonumber \\ 
 & = & \log \prod_{x \in \Lambda_N}\sum_{\sigma(x)} \frac{1}{Z_{1;mf}}e^{(c+\beta h_{mf})\sigma(x)} \nonumber \\
 & = & \log \prod_{x \in \Lambda_N}\frac{1}{Z_{1;mf}}\{e^{(c+\beta h_{mf})} +e^{-(c+\beta h_{mf})}\}\nonumber \\
 & = & N\log \frac{e^{(c+\beta h_{mf})} +e^{-(c+\beta h_{mf})}}{e^{-\beta h_{mf}}+e^{\beta h_{mf}}}.
\end{eqnarray}
Thus,
\begin{equation}
\frac{1}{N}\Lambda_{\mu_{N;mf},Nf_N}(c) =\log \frac{e^{(c+\beta h_{mf})} +e^{-(c+\beta h_{mf})}}{e^{-\beta h_{mf}}+e^{\beta h_{mf}}}.
\end{equation}
Also, by \eqref{eq:var_mf_appendix}, we have
\begin{eqnarray}
\frac{1}{N}Var_{\mu_{N;mf}}(Nf_N) 
& = & 1- m^2.
\end{eqnarray}

\smallskip
\noindent
{\bf One-dimensional Ising model versus  mean field } 
Consider the Ising model and mean field model in 1-d and assume $\mu_N$ and $\mu_{N;mf}$ are the configuration probabilities for 1-d Ising model and mean field model respectively, which are defined in section \ref{section:backgroup_Ising and mean}. 
Then, by \eqref{eq:re_gibbs}, the relative entropy between $\mu_N$ and $\mu_{N;mf}$ is 
\begin{align}
R(\mu_N\|\mu_{N;mf})&=\log Z_{N;mf}-\log Z_{N}+E_{\mu_N}(H_{N;mf}(\sigma_{\Lambda_N})-H_{N}(\sigma_{\Lambda_N}))\nonumber \\
&=\log Z_{N;mf}-\log Z_{N}+\beta JE_{\mu_N}(\sum_{\langle x,y\rangle \subset \Lambda_N}\sigma(x)\sigma(y))-\beta Jm E_{\mu_N}(\sum_{x\in\Lambda_N}\sigma(x)).
\end{align}
Thus, by \eqref{eq:pressure_mf}, \eqref{eq:pressure_Ising1}, \eqref{eq:corre_Ising1} and \eqref{eq:mean_mf}, we have
\begin{align}
  &\lim\limits_{N}\frac{1}{N}R(\mu_N\|\mu_{N;mf}) \nonumber \\
 = & \lim\limits_{N}\frac{1}{N}\log Z_{N;mf}-\lim\limits_{N}\frac{1}{N}\log Z_{N}+\beta J\lim\limits_{N}\frac{1}{N}E_{\mu_N}(\sum_{\langle x,y\rangle \subset \Lambda_N}\sigma(x)\sigma(y))
-\lim\limits_{N}\beta Jm\frac{1}{N}E_{\mu_N}(\sum_{x\in\Lambda_N}\sigma(x))\nonumber \\
=& \log \frac{e^{\beta [h+Jm]}+e^{-\beta [h+Jm]}}{e^{\beta J}\cosh(\beta h)+k_1}+\frac{\beta J}{k_1}(k_1-\frac{2e^{-2\beta J}}{e^{\beta J}\cosh(\beta h)+k_1}-me^{J\beta}\sinh(h\beta))
\end{align}

And, by \eqref{eq:cumulant_mf} and by \eqref{eq:var_mf},we obtain
\begin{equation}
\frac{1}{N}\Lambda_{\mu_{N;mf},Nf_N}(c) 
  =  \log \frac{e^{[c+\beta (h+Jm)] } +e^{-[c+\beta (h+Jm)]}}{e^{-\beta [h+Jm]}+e^{\beta [h+Jm]}}
\end{equation}
and
\begin{equation}
\frac{1}{N}Var_{\mu_{N;mf}}(Nf_N) .
= 1- m^2.
\end{equation}

\smallskip
\noindent
{\bf Two-dimensional  Ising model with $h=0$ versus  mean field } 
Assuming $\mu_N$ and $\mu_{N;mf}$ are two configuration probabilities for two-dimensions zeros Ising model and two-dimensions zeros mean field model respectively.
By Section \ref{section:backgroup_Ising and mean},
\begin{equation}
\mu_N(\sigma_{\Lambda_N})=\mu_{N}(\sigma_{\Lambda_N})=\frac{1}{Z_{N}}e^{- H_{N}(\sigma_{\Lambda_N})}P_N(\sigma_{\Lambda_N})=\frac{1}{Z_{N}}e^{\beta J\sum_{\langle x,y \rangle \subset \Lambda_N}\sigma(x)\sigma(y)}P_N(\sigma_{\Lambda_N})
\end{equation}
and
\begin{equation}
\mu_{N;mf}(\sigma_{\Lambda_N})=\frac{1}{Z_{N;mf}}e^{- H_{N;mf}(\sigma_{\Lambda_N})}d\sigma=\frac{1}{Z_{N;mf}}e^{\beta \sum_{x \in \Lambda_N}h_{mf}\sigma(x)}P_N(\sigma_{\Lambda_N}),
\end{equation}
where $Z_{N;mf}=(e^{\beta h_{mf}}+e^{-\beta h_{mf}})^N$ and $h_{mf}=2Jm$.

Then, by \eqref{eq:re_gibbs}, the relative entropy between $\mu_N$ and $\mu_{N;mf}$ is
\begin{align}
R(\mu_N\|\mu_{N;mf})&=\log Z_{N;mf}-\log Z_{N}+E_{\mu_N}(H_{N;mf}(\sigma_{\Lambda_N})-H_{N}(\sigma_{\Lambda_N}))\nonumber \\
&=\log Z_{N;mf}-\log Z_{N}+\beta JE_{\mu_N}(\sum_{\langle x,y\rangle \subset \Lambda_N}\sigma(x)\sigma(y))-2\beta Jm E_{\mu_N}(\sum_{x\in\Lambda_N}\sigma(x)).
\end{align}
Thus, by \eqref{eq:pressure_mf}, \eqref{eq:pressure_Ising2}, \eqref{eq:corre_Ising2} and \eqref{eq:mean_mf}, we have
\begin{align}
  &\lim\limits_{N}\frac{1}{N}R(\mu_N\|\mu_{N;mf})\nonumber\\ 
 = & \lim\limits_{N}\frac{1}{N}\log Z_{N;mf}-\lim\limits_{N}\frac{1}{N}\log Z_{N}+\beta J\lim\limits_{N}\frac{1}{N}E_{\mu_N}(\sum_{\langle x,y\rangle \subset \Lambda_N}\sigma(x)\sigma(y))
-\lim\limits_{N}2\beta Jm\frac{1}{N}E_{\mu_N}(\sum_{x\in\Lambda_N}\sigma(x))\nonumber \\
=& \log \frac{e^{\beta [h+Jm]}+e^{-\beta [h+Jm]}}{e^{\beta J}\cosh(\beta h)+k_1}+\frac{\beta J}{k_1}(k_1-\frac{2e^{-2\beta J}}{e^{\beta J}\cosh(\beta h)+k_1}-me^{J\beta}\sinh(h\beta))\nonumber \\
=&\log [e^{-2\beta Jm}+e^{2\beta Jm}]-\frac{\log2}{2}-\frac{1}{2\pi}\int_0^{\pi}\log[\cosh^2(2\beta J)+k(\theta)]d\theta \nonumber\nonumber \\
 &+\beta J \frac{\sinh(4\beta J)}{\pi}\int_0^{\pi} \frac{1}{k(\theta)}[1-\frac{1+\cos(2\theta)}{\cosh^2(2\beta J)+k(\theta)}]d\theta-2\beta JmM
\end{align}

% Then, the relative entropy between $\mu_N$ and $\mu_{N;mf}$ is 
% \begin{eqnarray}
% \frac{1}{N}R(\mu_N\|\mu_{N;mf})& = &\frac{1}{N}E_{\mu_N} [\log\frac{\mu_N}{\mu_{N;mf} }]\nonumber \\
% &=& \frac{1}{N}E_{\mu_N}[\log \frac{\frac{1}{Z_{N}}e^{-H_{N}(\sigma_{\Lambda_N})}}{\frac{1}{Z_{N;mf}}e^{-H_{N;mf}(\sigma_{\Lambda_N})}}]\nonumber \\
% & = & \frac{1}{N}E_{\mu_N}[\log\frac{Z_{N;mf}}{Z_{N}}+H_{N;mf}(\sigma_{\Lambda_N})-H_{N}(\sigma_{\Lambda_N})]\nonumber \\
% & = & \frac{1}{N}\log\frac{Z_{N;mf}}{Z_{N}}+\frac{1}{N}E_{\mu_N}(H_{N;mf}(\sigma_{\Lambda_N})-H_{N}(\sigma_{\Lambda_N}))\nonumber \\
% & = & \frac{1}{N}\log Z_{N;mf}-\frac{1}{N}\log Z_{N}+\frac{1}{N}E_{\mu_N}[\beta J\sum_{\langle x,y\rangle \subset \Lambda_N} \sigma(x)\sigma(y)-2\beta Jm\sum_{x \in \Lambda_N}\sigma(x)]\nonumber \\
% & = & \frac{1}{N}\log Z_{N;mf}-\frac{1}{N}\log Z_{N}+\frac{1}{N}\beta JE_{\mu_N}(\sum_{\langle x,y\rangle \subset \Lambda_N}\sigma(x)\sigma(y))-2\beta JmE_{\mu_N}[\sigma(x)]\nonumber \\
% &=&\log [e^{-2\beta Jm}+e^{2\beta Jm}]-\frac{\log2}{2}-\frac{1}{2\pi}\int_0^{\pi}\log[\cosh^2(2\beta J)+k(\theta)]d\theta\nonumber \\
% &+& \beta J \frac{\sinh(4\beta J)}{\pi}\int_0^{\pi} \frac{1}{k(\theta)}[1-\frac{1+\cos(2\theta)}{\cosh^2(2\beta J)+k(\theta)}]d\theta-2\beta JmM
% \end{eqnarray}

And, by \eqref{eq:cumulant_mf} and by \eqref{eq:var_mf_appendix}, we obtain
\begin{equation}
\frac{1}{N}\Lambda_{\mu_{N;mf},Nf_N}(c) 
  =  \log \frac{e^{(c+2\beta Jm) } +e^{-(c+2\beta Jm)}}{e^{-2\beta Jm}+e^{2\beta Jm}}
\end{equation}
and
\begin{equation}
\frac{1}{N}Var_{\mu_{N;mf}}(Nf_N) .
= 1- m^2.
\end{equation}

\smallskip
\noindent
{\bf One-dimensional Ising model versus Ising model} 
Consider two Ising models in 1-d and $\mu_N$ and $\mu_N'$ are their configuration probabilities defined in Section\ref{section:backgroup_Ising and mean}.
Then, by \eqref{eq:pressure_Ising1}, \eqref{eq:corre_Ising1} and \eqref{eq:mean_Ising1}, we have
\begin{align}
&\lim\limits_{N}\frac{1}{N}R(\mu'_N\|\mu_N)\nonumber \\ 
= &\lim\limits_{N}\frac{1}{N}E_{\mu'_N} (\log\frac{\mu'_N}{\mu_N})\nonumber \\
 = & \lim\limits_{N}\frac{1}{N}\log\frac{Z_{N}}{Z'_{\Lambda,N}}+\lim\limits_{N}\frac{1}{N}E_{\mu'_N}(H(\sigma_{\Lambda_N})-H'_{N}(\sigma_{\Lambda_N}))\nonumber \\
= & \lim\limits_{N}\frac{1}{N}\log Z_{N}-\lim\limits_{N}\frac{1}{N}\log Z'_{\Lambda,N}+(\beta' J'-\beta J)\lim\limits_{N}\frac{1}{N}E_{\mu'_N}(\sum_{\langle x,y\rangle \subset \Lambda_N}\sigma(x)\sigma(y))\nonumber\nonumber \\
&+ (\beta' h'-\beta h)\lim\limits_{N}\frac{1}{N}E_{\mu'_N}(\sum_{x \in \Lambda_N}\sigma(x))\nonumber \\
=&\log\frac{e^{\beta J}\cosh(\beta h)+\sqrt{e^{2J\beta}\sinh^2(h\beta)+e^{-2J\beta}}}{e^{\beta' J'}\cosh(\beta' h')+\sqrt{e^{2J'\beta'}\sinh^2(h'\beta')+e^{-2J'\beta'}}}+ (\beta' J'-\beta J)(1-\frac{1}{k'_1}\frac{2e^{-2\beta' J'}}{e^{\beta' J'}\cosh(\beta' h')+k'_1})\nonumber\nonumber \\
&+(\beta' h'-\beta h)
\frac{1}{k'_1}e^{J'\beta'}\sinh(h'\beta')
\end{align}
And,
\begin{eqnarray}
\lim \limits_{N}\frac{1}{N}\Lambda_{\mu_N}(c) & = &\lim \limits_{N}\frac{1}{N}\log E_{\mu_{N}}(e^{cN\frac{1}{N}\sum_{x \in \Lambda_N} \sigma(x)}) \nonumber \\
 & = &\lim \limits_{N}\frac{1}{N}\log \sum_{\sigma_{\Lambda_N}}e^{cN\frac{1}{N}\sum_{x \in \Lambda_N} \sigma(x)}\frac{1}{Z_{N}}e^{\beta J \sum_{\langle x,y \rangle \subset \Lambda_N}\sigma(x)\sigma(y)+ \beta h\sum_{x \in \Lambda_N}\sigma(x)}\nonumber \\
 & = &\lim \limits_{N}\frac{1}{N}\log \frac{1}{Z_{N}} \sum_{\sigma_{\Lambda_N}}e^{\beta J \sum_{\langle x,y \rangle \subset \Lambda_N}\sigma(x)\sigma(y)+ \beta (h +\frac{c}{\beta})\sum_{x \in \Lambda_N}\sigma(x)}\nonumber \\
 & = &\lim \limits_{N}\frac{1}{N}\log \frac{1}{Z_{N}}\tilde{Z}_{\Lambda,N} \nonumber \\
  & = &\lim \limits_{N}\frac{1}{N}\log \tilde{Z}_{\Lambda,N} -\lim \limits_{N}\frac{1}{N}\log Z_{N},
\end{eqnarray}
where $\tilde{Z}_{\Lambda,N}=\sum_{\sigma_{\Lambda_N}}e^{\beta J \sum_{\langle x,y \rangle \subset \Lambda_N}\sigma(x)\sigma(y)+ \beta (h +\frac{c}{\beta})\sum_{x \in \Lambda_N}\sigma(x)}.$
By \cite{baxter2007exactly}, we have
\begin{equation}
\lim \limits_{N}\frac{1}{N}\log \tilde{Z}_{\Lambda,N}=\log [e^{\beta J}\cosh(\beta h+c)+\sqrt{e^{2J\beta}\sinh^2(h\beta+c)+e^{-2J\beta}}]
\end{equation}
and
\begin{equation}
\lim \limits_{N}\frac{1}{N}\log Z_{N}=\log [e^{\beta J}\cosh(\beta h)+\sqrt{e^{2J\beta}\sinh^2(h\beta)+e^{-2J\beta}}].
\end{equation}
Thus,
\begin{equation}
\lim \limits_{N}\frac{1}{N}\Lambda_{\mu_N}(c) 
 =\log \frac{e^{\beta J}\cosh(\beta h+c)+\sqrt{e^{2J\beta}\sinh^2(h\beta+c)+e^{-2J\beta}}}{e^{\beta J}\cosh(\beta h)+\sqrt{e^{2J\beta}\sinh^2(h\beta)+e^{-2J\beta}}}.
\end{equation}
And, by \eqref{eq:var_Ising1}
\begin{equation}
\lim \limits_{N} \frac{1}{N}Var_{\mu_N}(\sum_{x \in \Lambda_N}\sigma(x))=\frac{e^{-J\beta}\cosh(h\beta)}{k_1^3}.
\end{equation}
%%%%%%%%%%%%%%%%%%%%%%%%%%%%%%%%%%%%%%%%%%%%%%%%%%%%%%%%%%%
\end{document}